\newtheorem{definition}{Definition}
\newtheorem{theorem}{Theorem}
\newtheorem{lemma}{Lemma}
\newtheorem{remark}{Remark}
\begin{document}

\title{Algorithms for the Shortest Vector Problem in $2$-dimensional Lattices, Revisited}

\author{Lihao Zhao,  Chengliang Tian, Jingguo Bi, Guangwu Xu, Jia Yu

\IEEEcompsocitemizethanks{\IEEEcompsocthanksitem This research is supported by Natural Science Foundation of Shandong Province (No. ZR2022MF250), National Natural Science Foundation of China (No. 61702294,12271306), National Key Research and Development Program of China (No. 2024YFB4504700, 2022YFB2702804),  and the Youth Science and Technology Innovation Talent Support Program Project of Beijing University of Posts and Telecommunications (No. 2023ZCJH10). (\emph{Corresponding author: C. Tian})

\IEEEcompsocthanksitem
L. Zhao, C. Tian and J. Yu are with the College
of Computer Science and Technology,  Qingdao University,  Qingdao 266071,  China. 
E-mail: zlh\_qdu@163.com; 
tianchengliang@qdu.edu.cn; qduyujia@163.com.

\IEEEcompsocthanksitem J. Bi is with the School of Cyberspace Security, Beijing University of Posts and Telecommunications, Beijing 100876,  China.
 E-mail:jguobi@bupt.edu.cn.

\IEEEcompsocthanksitem G. Xu is with School of Cyber Science and Technology, Shandong University (Qingdao), Qingdao 266273,  China. E-mail: gxu4sdq@sdu.edu.cn.
}

\thanks{Manuscript received October 10, 2019; revised ****, 20**.}}

\markboth{Journal of \LaTeX\ Class Files,~Vol.~14, No.~8, August~2015}%
{Shell \MakeLowercase{\textit{et al.}}: Bare Demo of IEEEtran.cls for IEEE Communications Society Journals}

\maketitle

\begin{abstract}
Efficiently solving the Shortest Vector (Basis) Problem in two-dimensional lattices holds practical significance in cryptography and computational geometry. While simpler than its high-dimensional counterpart, two-dimensional SVP motivates scalable solutions for high-dimensional lattices and benefits applications like sequence cipher cryptanalysis involving large integers. In this work, we first propose a novel definition of reduced bases and develop an efficient adaptive lattice reduction algorithm \textbf{CrossEuc} that strategically applies the Euclidean algorithm across dimensions. Building on this framework, we introduce \textbf{HVec}, 
a vectorized generalization of the integer Half-GCD (HGCD) algorithm that efficiently halves vector bit-lengths, which may be of independent interest. By iteratively invoking \textbf{HVec}, our optimized algorithm \textbf{HVecSBP} achieves a reduced basis in $O(\log n M(n) )$ time for arbitrary input bases with bit-length $n$, where \(M(n)\) denotes the cost of multiplying two \(n\)-bit integers.  

 For finding the $\ell_2$-shortest basis, our new algorithms avoids costly inner products in its basic reduction step.  The comprehensive experimental results demonstrate that
 (1) \(\mathbf{CrossEuc}\) achieves approximately $500\times$ and $350\times$ speedup compared to Yap's foundamental reduction algorithm \(\mathbf{CRS}\)  and the classical Lagrange reduction algorithm, respectively.
(2) our optimized variant, \(\mathbf{HVecSBP}\),  delivers an additional 14$\times$ acceleration over Yap's optimized $\mathbf{HalfGaussianSBP}$.
 For finding the $\ell_\infty$-shortest basis, our new designs eliminate Hermite Normal Form (HNF) conversion costs for general-form inputs and experimentally demonstrate significant improvements:   (1) For HNF-form bases, \(\mathbf{HVecSBP}\) achieves $13\times$ acceleration over prior designs in worst-case scenarios;  (2) For general-form bases,$\mathbf{CrossEuc}$ attains $\geq$4.5$\times$ efficiency gains over non-HGCD methods. Crucially, $\mathbf{HVecSBP}$  exhibits progressively intensifying advantages against HGCD-optimized designs as lattice basis vectors linear dependence weakens.
\end{abstract}
 
\begin{IEEEkeywords}
Lattices, Half-GCD, Shortest Vector Problem (SVP), Hermite Normal Form (HNF), Lattice Basis Reduction
\end{IEEEkeywords}

\IEEEpeerreviewmaketitle

\section{Introduction}

A lattice is a classic object of study in the geometry of numbers, originating from research on sphere packing and covering problems in the $17$th century. Around 1840, Gauss introduced the concept of a lattice and determined the maximum density of sphere packing in three-dimensional space.  In the past thirty years, lattice theory has demonstrated its powerful applications in coding theory and cryptography, particularly through lattice-based cryptography. The application of lattice theory in these domains underscores its importance in tackling contemporary challenges in data security and communication. Its role in developing cryptographic methods that are resistant to both classical and quantum attacks highlights its value in securing digital information.

In general, a lattice is a set of discrete points in $n$-dimensional real space that has a periodic structure. Specifically,  for $d$ linearly independent vectors $\boldsymbol{b}_1,\cdots, \boldsymbol{b}_d$ in the $m$-dimensional real space $\mathbb{R}^{m}$, a lattice $\mathcal{L}$ 
\begin{equation*}
\mathcal{L}=L(\boldsymbol{B})=\left\{\sum_{i=1}^dz_i\boldsymbol{b}_i:z_i\in\mathbb{Z}\right\}.
\end{equation*}
 is the set of all integer linear combinations of these vectors, where $d$ is called the dimension of the lattice and the set of $d$ vectors $\boldsymbol{B}=[\boldsymbol{b}_1,\cdots, \boldsymbol{b}_d]$ is called a basis of the lattice. The  $k$-th successive minimum of a lattice is defined as 
 \begin{align*}
   \lambda_k(\mathcal{L})=\mathrm{inf}\{r>0: \mathrm{dim}(\mathrm{span}(\mathcal{B}(\boldsymbol{o},r)\cap \mathcal{L}))\ge k\}, 
 \end{align*}
 which $\mathcal{B}(\boldsymbol{o},r)$ represents the open ball of radius $r$ centered at the origin. In particular, $\lambda_1(L)$ denotes the length of the shortest non-zero vector in the lattice.

One of the most fundamental computational problems in lattice theory, known as the Shortest Vector Problem (SVP), is to find the shortest vector in a lattice. For a given lattice \(\mathcal{L}\), this involves identifying a non-zero lattice vector \(\mathbf{v}\) such that \(||\mathbf{v}|| \leq ||\mathbf{u}||\) holds for all non-zero vectors \(\mathbf{u} \in \mathcal{L}\).   The difficulty of solving this problem grows significantly with the dimension of the lattice. However, in the simplest case of two-dimensional lattices, solving the SVP—or even the Shortest Basis Problem (SBP), which seeks a basis $\mathbf{B}=[\mathbf{a}\ \mathbf{b}]$ satisfying $\|\mathbf{a}\|=\lambda_1(\mathcal{L})$ and $\|\mathbf{b}\|=\lambda_2(\mathcal{L})$-is relatively straightforward, making it an ideal starting point for studying lattice problems. Developing fast algorithms for solving the two-dimensional SVP/SBP is of both theoretical and practical importance.  From an algorithm design perspective, solving the SVP in two-dimensional lattices provides foundational insights for addressing higher-dimensional cases. For example, Lagrange’s pioneering work \cite{Lag73} on two-dimensional lattice reduction laid the groundwork for the celebrated LLL algorithm by A. K. Lenstra, H. W. Lenstra, and L. Lov\'asz \cite{LLL}. The LLL algorithm extends Lagrange’s approach by introducing techniques for reducing lattice bases in higher dimensions, combining ideas of approximate orthogonality and length minimization. As such, understanding and refining two-dimensional lattice reduction algorithms can directly contribute to advancements in higher-dimensional lattice computations.  On the practical side, the SVP in two-dimensional lattices has direct applications in cryptoanalysis. One critical indicator of the security of cryptographic random sequences is their \(2\)-adic complexity, which measures the sequence’s resistance to linear feedback shift register attacks. Computing the \(2\)-adic complexity involves determining the Minimal Rational Fractional Representation (MRFR) of the sequence, a problem that can be directly transformed into finding the shortest basis of a two-dimensional lattice in the $\ell_\infty$ norm. Specifically, given a sequence, a specialized two-dimensional lattice is constructed in Hermite Normal Form (HNF), and the shortest vector in this lattice corresponds to the MRFR.  Efforts to optimize algorithms for solving the MRFR problem have shown significant progress. Arnault et al. \cite{TIT04} proposed a Euclidean-based algorithm in 2004, with subsequent improvements in 2008 \cite{TIT08}, to compute the shortest basis of the lattice. However, these methods occasionally produced incorrect results in certain cases. Recently, Che et al. \cite{CTJ22} proposed a novel and faster algorithm targeting the two-dimensional lattice shortest vector problem, significantly enhancing both the computational efficiency and reliability of solutions.  Further optimization of algorithms for finding the shortest basis in two-dimensional lattices would directly accelerate the computation of \(2\)-adic complexity, enabling faster and more accurate assessments of cryptographic random sequence security. 

In summary, the above discussion highlights the two-fold significance of studying two-dimensional lattice problems: as a foundation for high-dimensional lattice theory and as a means to improve practical cryptographic applications. Therefore, continued research in this area is vital for advancing both theoretical lattice studies and practical algorithmic solutions.

\subsection{Related works}
  
 For the shortest vector problem (SVP) in two-dimensional lattices, the well-known solution is the Lagrange reduction algorithm (often called Gaussian reduction) \cite{Lag73, MGbook}, which resembles the integer Euclidean algorithm.  
Given any lattice basis vectors $\boldsymbol{a}=[a_1, a_2]^T$ and $\boldsymbol{b}=[b_1, b_2]^T$ under a computable norm $\|\cdot\|$, we assume $\|\boldsymbol{a}\| \geq \|\boldsymbol{b}\|$ without loss of generality. The core reduction step employs a greedy strategy: it seeks an integer  
$$
q = \underset{\mu}{\mathrm{argmin}} \|\boldsymbol{a} - \mu\boldsymbol{b}\|
$$  
to minimize the length of $\boldsymbol{c} = \boldsymbol{a} - q\boldsymbol{b}$, updating the basis to $[\boldsymbol{b}, \boldsymbol{c}]$. Iterating this process yields the shortest basis in $O(M(n)n)$ time, where $n = \max\{\lceil \log \|\boldsymbol{a}\| \rceil, \lceil \log \|\boldsymbol{b}\| \rceil\}$ and $M(n)$ is the complexity of multiplying two $n$-bit integers.  

For the frequently used Euclidean metric ($\ell_2$ norm), Lagrange reduction uses: 
$$
q = \left\lceil \frac{\langle \boldsymbol{a},\boldsymbol{b} \rangle}{\langle \boldsymbol{b},\boldsymbol{b} \rangle} \right\rfloor.
$$  
This $q$-selection differs from the standard integer Euclidean algorithm. Yap \cite{yap1992fast} studied a directly analogous approach, updating the basis to $[\boldsymbol{b}, \boldsymbol{c}]$ with $\boldsymbol{c} = \boldsymbol{a} - q\boldsymbol{b}$ via:  
$$
q = \left\lfloor \frac{\langle \boldsymbol{a},\boldsymbol{b} \rangle}{\langle \boldsymbol{b},\boldsymbol{b} \rangle} \right\rfloor.
$$  
The algorithm terminates if the basis remains unchanged after an update. This method, termed $\mathbf{CRS}$, was proven to have slightly higher complexity than Lagrange's reduction, but with negligible efficiency loss ("not giving up too much" as said in \cite{yap1992fast}). Crucially, it preserves the geometric property that basis vectors always form acute angles. Further optimizing $\mathbf{CRS}$, Yap designed a Half-Gaussian algorithm for integer vectors inspired by the integer HGCD method, reducing the time complexity to $O(M(n)\log n)$. Rote's algorithm \cite{ROTE97} leverages Yap's lattice basis reduction \cite{yap1992fast} to solve SVP for a two-dimensional lattice module $m$ in $O(\log m (\log\log m)^2)$ time.


For the special Minkowski metric ($\ell_\infty$ norm),  Eisenbrand \cite{EIS01} and Che et al. \cite{CTJ22} subsequently investigated fast algorithms for two-dimensional lattices with a known HNF basis. Specifically, when the input lattice basis satisfies $a_1> b_1\ge 0$, $a_2=0$ and $b_2\ne 0$. it was noted that by taking 
  \begin{equation*}
   q=\left\lfloor \frac{a_1}{b_1}\right\rfloor
 \end{equation*} 
 the shortest vector can be found more efficiently than with the Lagrange reduction algorithm. 
In contrast to Eisenbrand's method \cite{EIS01} which addresses the standard HNF, Che et al. \cite{CTJ22} specifically target a specialized HNF variant characterized by the constraint \( b_2 = 1 \). Notably, Che et al. \cite{CTJ22} formalize an exact termination criterion for basis reduction and propose an optimized algorithm through systematic integration of the integer HGCD algorithm. Their implementation achieves a time complexity of \( O(M(n)\log n) \), supported by comprehensive technical details. In comparison, Eisenbrand \cite{EIS01} interprets their algorithm through continued fraction theory and theoretically suggests potential optimizations via the integer HGCD algorithm to attain the same asymptotic complexity \( O(M(n)\log n) \), though without providing explicit implementation specifics.
 Wu and Xu \cite{wu2023qin} linked Qin Jiushao's algorithm with the solution of SVP in a special kind of two-dimensional lattices with $a_1=1$, $b_1=0$ and $\gcd(a_2,b_2)=1$, and proposed an efficient method to solve the SVP of this kind of two-dimensional lattices with
   \begin{equation*}
   q=\left\lfloor \frac{a_1-1}{b_1}\right\rfloor.
 \end{equation*}  By examining each state matrix generated during the execution of the algorithm, the two vectors in the current state and their simple combinations can be utilized to extract the shortest lattice vector. 

Regarding the relationship between the shortest vector under the $\ell_\infty$ norm and that under the $\ell_2$ norm, Eisenbrand \cite{EIS01} cites Lagarias' work \cite{Laga80} to demonstrate that reduced lattice basis containing the $\ell_\infty$-shortest vector can be reduced to the $\ell_2$-shortest vector through a constant number of Lagrange reductions.

\subsection{Motivation and Our Contribution}

 Existing research indicates that the SVP (SBP) in two-dimensional lattices remains imperfectly resolved. Under the commonly used Euclidean metric,  Although Yap's Half-Gaussian algorithm \cite{yap1992fast} achieves lower complexity than Lagrange reduction, its underlying fundamental reduction algorithm (CRS) – as Yap himself acknowledges – is not inherently more efficient than classical Lagrange reduction. This naturally raises a pivotal question: \emph{Does there exist a fundamental reduction algorithm that is more efficient than the Lagrange reduction algorithm, along with an optimized version that can outperform the Half-Gaussian algorithm in finding the $\ell_2$-shortest vector (basis)}? For the special Minkowski metric, current algorithms face the following limitations:
\begin{itemize}  
\item \textbf{For HNF-form bases} \(\mathbf{B} = \begin{pmatrix} a_1 & b_1 \\ 0 & b_2 \end{pmatrix}\): While prior works suggest that integer HGCD algorithms can optimize continued fraction-based methods, detailed technical analyses and implementations remain scarce. As noted in \cite{moller2008schonhage}, “the integer HGCD algorithm is intricate, error-prone, and rarely fully detailed”, and \cite{Mor22} emphasizes its implementation challenges due to “numerous sub-cases and limited practical adoption.”  Notably, \cite{CTJ22} provides a concrete HGCD-based SVP solution for the special case \(b_2 = 1\), but \emph{the implementation details for general  $b_2$ is not available}.

\item \textbf{For general-form bases}: Current algorithms first convert inputs to HNF via (HGCD-based) extended Euclidean algorithms. However, this conversion often inflates integer sizes, especially for short initial bases, leading to longer lattice vectors. This raises two interesting questions: (i) \emph{Is HNF conversion necessary}? (ii) \emph{Can direct reduction algorithms bypass this step for a faster performance}?  
 \end{itemize}

 To address these challenges, this work aims to develop a novel fundamental reduction algorithm achieving: (1) For Euclidean metric: Superior efficiency over both Lagrange reduction and Yap's CRS, with an optimized version surpassing Half-Gaussian.  (2) For Minkowski metric: Elimination of HNF conversion and generalization of HGCD-based optimization to arbitrary $b_2$, thereby bridging the gap between theoretical possibilities and practical implementations.
Concretely, our main contributions can be summarized as follows:

\begin{enumerate}
\item \textbf{Novel Definition of a Reduced Basis}.  
For the first time, we explicitly propose a novel definition of a reduced basis distinct from the conventional Lagrange-reduced basis. This newly defined basis not only contains the $\ell_\infty$-shortest vector but also yields the $\ell_2$-shortest vector in at most one additional addition/subtraction step. This framework extends the theoretical foundation of lattice reduction and is expected to provide a solid foundation for further research in this field.

\item \textbf{Efficient Algorithms with Detailed Implementation and Rigorous Complexity Analysis}.  
For our newly introduced reduced basis, we design a novel fundamental algorithm named \textbf{CrossEuc} that bypasses HNF conversion, achieving a complexity of \(O(n^2)\). This approach eliminates the computational overhead associated with HNF transformation. Furthermore, we introduce a vectorized adaptation of the HGCD algorithm, termed \textbf{HVec}, which represents the first explicit extension of HGCD-like techniques to vector pairs. This innovation is of independent theoretical interest and significantly advances the efficiency of solving the SVP in two-dimensional lattices. Building on this, we present an optimized version \textbf{HVecSBP} of the fundamental algorithm \textbf{CrossEuc} that reduces the complexity to \(O(\log n \cdot M(n))\), where \(M(n)\) denotes the cost of multiplying two \(n\)-bit integers.

\item \textbf{Comprehensive Experimental Performance Analysis}. We conduct extensive experiments comparing our methods against state-of-the-art algorithms. Results show that (1) For computing the $\ell_2$-shortest basis with input lattice basis size $2 \times 10^5$, regardless of whether the input is in HNF form or general form, our proposed fundamental reduction algorithm, \(\mathbf{CrossEuc}\), achieves approximately $500\times$ and $350\times$ speedup compared to Yap's foundamental reduction algorithm \(\mathbf{CRS}\)  and the classical Lagrange reduction algorithm, respectively.
In comparison with Yap's optimized algorithm, \(\mathbf{HalfGaussianSBP}\), our optimized variant, \(\mathbf{HVecSBP}\), achieves an additional speedup of approximately $14\times$.
 (2) For computing the $\ell_\infty$-shortest basis with input lattice basis size $10^6$, our new algorithms demonstrate significant improvements:   (2.1) For HNF-form bases, \(\mathbf{HVecSBP}\) achieves $13\times$ acceleration over prior designs in worst-case scenarios. (2.2) For general-form bases, compared to the previous designs without using HGCD optimization, the proposed algorithm \textbf{CrossEuc} achieves a at least $4.5\times$ efficiency improvement. Compared to the previous designs using HGCD optimization,  when the input vectors are nearly degenerate (i.e., treated as integers), \textbf{HGCD-HNF-HVecSBP} method retain a slight advantage. As the linear dependency between vectors weakens, \textbf{HVecSBP} exhibits remarkably growing efficiency gains.
\end{enumerate}

\subsection{Road Map}
The paper is organized as follows. Section II introduces the notations, terminologies, and the mathematical concepts and properties frequently referenced in this work. In Section III, we define our proposed reduced basis and establish its theoretical properties. Section IV presents the \textbf{CrossEuc} algorithm alongside a detailed analysis. In Section V, we propose the \textbf{HVec} algorithm, discuss its analysis, and then introduce the \textbf{HVecSBP} algorithm, which iteratively invokes \textbf{HVec} to solve the shortest basis problem in two-dimensional lattices. Section VI provides a comprehensive experimental evaluation comparing the practical performance of our improved algorithms with that of existing methods, and the paper concludes with a brief summary in the final section.

\section{Preliminaries}\label{sec:Pre}
For completeness, this section introduces the necessary preliminaries for the design and analysis of our new algorithm.
\subsection{Notations and Terminologies}  
Throughout this paper, \(\mathbb{Z}\) denotes the ring of integers, and \(\mathbb{R}\) represents the field of real numbers. For \(x, y \in \mathbb{R}\), \(\#x = \left\lceil\log_2(|x| + 1)\right\rceil\) denotes the bit size of \(x\), while \(\#(x, y) = \max\{\#x, \#y\}\) and \(\underline{\#}(x, y) = \min\{\#x, \#y\}\). The round function \(\left\lceil x \right\rfloor\) represents the nearest integer to \(x\), the floor function \(\left\lfloor x \right\rfloor\) represents the greatest integer less than or equal to \(x\), and the ceiling function \(\left\lceil x \right\rceil\) represents the smallest integer greater than or equal to \(x\).  The "rounding towards zero" $\left\lfloor x\right\rfloor_o$ of \(x\) and the sign function $\mathsf{sgn}(x)$ of \(x\) are defined as follows:  
\[
\left\lfloor x\right\rfloor_o = 
\begin{cases} 
\left\lfloor x\right\rfloor, & \text{if } x \geq 0 \\
\left\lceil x \right\rceil, & \text{else}
\end{cases}, \ 
\mathsf{sgn}(x) = 
\begin{cases} 
1, & \text{if } x \geq 0 \\
-1, & \text{else}
\end{cases}.
\]  
We use $M(n)$ to denote the time complexity of the multiplication of two $n$-bit integers.  Uppercase bold letters represent matrices, while lowercase bold letters represent vectors. For any two vectors \(\mathbf{a} = (a_1, a_2)^T\) and \(\mathbf{b} = (b_1, b_2)^T\), we define $\#\mathbf{a}=\max\{\# a_1, \# a_2\}$, \(\#(\mathbf{a}, \mathbf{b}) = \max\{\#\mathbf{a}, \#\mathbf{b}\}\) and \(\underline{\#}(\mathbf{a}, \mathbf{b}) = \min\{\#\mathbf{a}, \#\mathbf{b}\}\).  For a vector \(\boldsymbol{x} = (x_1, x_2, \ldots, x_m) \in \mathbb{R}^m\), its Euclidean length (\(\ell_2\) norm) is defined as \(\|\boldsymbol{x}\| = \sqrt{\sum_{i=1}^m x_i^2}\), and its Minkowski length (\(\ell_\infty\) norm) is defined as \(\|\boldsymbol{x}\|_\infty = \max_{1 \leq i \leq m} |x_i|\).

\subsection{Lattices and related properties}
In this section, we introduce the definition of lattices and some necessary properties we used in the rest of the paper.

\begin{definition}[\cite{MGbook}]\label{def:Lattice}
(Lattice). Given linearly independent vectors $\boldsymbol{b_1},\boldsymbol{b_2},...,\boldsymbol{b_d}\in\mathbb{R}^{m}$,  the lattice generated by these vectors is defined as \begin{align*}
\mathcal{L}=L(\boldsymbol{B})=L(\boldsymbol{b_1},\boldsymbol{b_2},...,\boldsymbol{b_d})=\left\{\sum_{i=1}^dz_i\boldsymbol{b}_i:z_i\in\mathbb{Z}\right\},
\end{align*}
where $d$ and $m$ are called the rank and the dimension of the lattice, respectively. Specially, if $d=m$, the lattice is called full-rank.
\end{definition}
\begin{definition}[\cite{MGbook}]\label{def:Determinant}
(Determinant). Let $\mathcal{L}=L(\mathbf{B})$ be a lattice of rank $d$.  The determinant of $\mathcal{L}$, denoted $\det(\mathcal{L})=\sqrt{\det\left(\mathbf{B}^{T}\mathbf{B} \right)}$. Specially, if $\mathcal{L}$ is full-rank, $\det(\mathcal{L})=|\det(\mathbf{B})|$.
\end{definition}
\begin{definition}[\cite{MGbook}]
(Unimodular Matrix). A matrix $\mathbf{U}\in \mathbb{Z}^{d\times d}$ is called unimodular if $\det(\mathbf{U})=\pm 1$.
\end{definition}
\begin{lemma}[\cite{MGbook}]\label{lem:B=BU}
Two bases $\mathbf{B_1},\mathbf{B_2}\in \mathbb{R}^{m\times d}$ are equivalent if and only if $\mathbf{B_2}=\mathbf{B_1}\mathbf{U}$ for some unimodular matrix $\mathbf{U}$.
\end{lemma}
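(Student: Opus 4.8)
The plan is to unwind the definition of \emph{equivalence} of bases --- that $L(\mathbf{B_1}) = L(\mathbf{B_2})$ --- and to translate the two resulting lattice inclusions into matrix factorizations with integer entries, after which the determinant constraint drops out almost for free. I would treat the two implications separately.

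For the easy direction, assume $\mathbf{B_2} = \mathbf{B_1}\mathbf{U}$ with $\mathbf{U}\in\mathbb{Z}^{d\times d}$ and $\det(\mathbf{U}) = \pm 1$. Then the $j$-th column of $\mathbf{B_2}$ is the integer combination $\sum_{i} (\mathbf{U})_{ij}\,\boldsymbol{b}_i$ of the columns of $\mathbf{B_1}$, so every element of $L(\mathbf{B_2})$ lies in $L(\mathbf{B_1})$, i.e.\ $L(\mathbf{B_2})\subseteq L(\mathbf{B_1})$. For the reverse inclusion I would use the adjugate identity $\mathbf{U}^{-1} = \det(\mathbf{U})^{-1}\,\mathrm{adj}(\mathbf{U})$: since $\det(\mathbf{U}) = \pm 1$ and $\mathrm{adj}(\mathbf{U})$ has integer entries, $\mathbf{U}^{-1}\in\mathbb{Z}^{d\times d}$, and $\mathbf{B_1} = \mathbf{B_2}\mathbf{U}^{-1}$ then gives $L(\mathbf{B_1})\subseteq L(\mathbf{B_2})$. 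Equality of the two lattices follows.

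For the converse, assume $L(\mathbf{B_1}) = L(\mathbf{B_2})$. Since each column of $\mathbf{B_2}$ belongs to $L(\mathbf{B_1})$, there is an integer matrix $\mathbf{U}\in\mathbb{Z}^{d\times d}$ with $\mathbf{B_2} = \mathbf{B_1}\mathbf{U}$; symmetrically there is an integer matrix $\mathbf{V}\in\mathbb{Z}^{d\times d}$ with $\mathbf{B_1} = \mathbf{B_2}\mathbf{V}$. Combining the two yields $\mathbf{B_1} = \mathbf{B_1}(\mathbf{U}\mathbf{V})$, that is, $\mathbf{B_1}(\mathbf{U}\mathbf{V} - \mathbf{I}_d) = \mathbf{0}$. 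Because the columns of $\mathbf{B_1}$ are linearly independent, $\mathbf{B_1}\boldsymbol{x} = \mathbf{0}$ forces $\boldsymbol{x} = \mathbf{0}$ for any $\boldsymbol{x}\in\mathbb{R}^d$; applying this to each column of $\mathbf{U}\mathbf{V} - \mathbf{I}_d$ gives $\mathbf{U}\mathbf{V} = \mathbf{I}_d$. Taking determinants, $\det(\mathbf{U})\det(\mathbf{V}) = 1$ with both factors in $\mathbb{Z}$, hence $\det(\mathbf{U}) = \pm 1$, so $\mathbf{U}$ is unimodular and witnesses $\mathbf{B_2} = \mathbf{B_1}\mathbf{U}$.

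The only step that genuinely demands care --- the main obstacle, though a mild one --- is the cancellation $\mathbf{B_1}(\mathbf{U}\mathbf{V}) = \mathbf{B_1} \Rightarrow \mathbf{U}\mathbf{V} = \mathbf{I}_d$. Here $\mathbf{B_1}$ is $m\times d$ with possibly $m > d$, so it need not be square or invertible and one cannot simply multiply through by $\mathbf{B_1}^{-1}$; the argument must instead invoke full column rank, which is precisely where the hypothesis that $\boldsymbol{b}_1,\dots,\boldsymbol{b}_d$ are linearly independent is used. The other point worth flagging is that, in the easy direction, integrality of $\mathbf{U}^{-1}$ is not automatic from $\mathbf{U}$ being an integer matrix but follows exactly from $\det(\mathbf{U}) = \pm 1$ via the adjugate formula. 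Everything else is routine bookkeeping.
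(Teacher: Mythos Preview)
Your proof is correct and is the standard argument for this classical fact. Note, however, that the paper does not actually prove this lemma: it is stated as a cited result from \cite{MGbook} and no proof is given in the paper itself. There is therefore nothing to compare your approach against --- your write-up matches the textbook treatment (e.g., Micciancio--Goldwasser) essentially verbatim, including the careful handling of the non-square case via full column rank and the use of the adjugate to get integrality of $\mathbf{U}^{-1}$.
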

\begin{definition}[\cite{MGbook}]\label{def:Successive minimum}
(Successive Minima). Let $\mathcal{L}$ be a lattice of rank $d$. For $k\in \left\{1,...,d \right\}$, the $k$th successive minimum is defined as
\begin{align*}
   \lambda_k(\mathcal{L})=\mathrm{inf}\{r>0: \mathrm{dim}(\mathrm{span}(\mathcal{B}(\boldsymbol{o},r)\cap \mathcal{L}))\ge k\}, 
\end{align*}
where, for any computable norm $\|\cdot\|$, $\mathcal{B}(\boldsymbol{o},r)=\left\{ \mathbf{x}\in \mathbb{R}^{m}:||\mathbf{x}||<r \right\}$ represents the open ball of radius $r$ centered at the origin.
\end{definition}

For two-dimensional lattices, Lagrange \cite{Lag73} introduced the concept of a reduced basis for two-dimensional lattices. That is,
\begin{definition}[\cite{Lag73,MGbook}]
Let $\mathbf{B}=[\mathbf{a}\ \mathbf{b}]$ with $\mathbf{a}=(a_1\ a_2)^T$ and $\mathbf{b}=(b_1\ b_2)^T$ be a basis of the lattice $\mathcal{L}=L(\mathbf{B})$. The basis is called Lagrange-reduced if it satisfies the condition: $\|\mathbf{a}\|,\|\mathbf{b}\|\le \|\mathbf{a}+\mathbf{b}\|,\|\mathbf{a}-\mathbf{b}\|.$
Here, $\|\cdot\|$ refers to any computable norm.
\end{definition}
For a Lagrange-reduced basis, the following result is well-known:
\begin{lemma}[\cite{MGbook}]\label{lem:Lag}
If $\mathbf{B}=[\mathbf{a}\ \mathbf{b}]$ is Lagrange-reduced, then for any computable norm $\|\cdot\|$,  we have
$\min\{ \|\mathbf{a}\|,\|\mathbf{b}\|\}=\lambda_1(\mathcal{L})$ and $\max\{\|\mathbf{a}\|,\|\mathbf{b}\|\}=\lambda_2(\mathcal{L})$. 
\end{lemma}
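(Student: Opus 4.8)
The plan is to reduce the whole statement to a single lower-bound estimate on lattice-vector lengths. Relabelling if necessary, assume $\|\mathbf{a}\|\le\|\mathbf{b}\|$ and put $\alpha=\|\mathbf{a}\|$, $\beta=\|\mathbf{b}\|$. The inequalities $\lambda_1(\mathcal{L})\le\alpha$ and $\lambda_2(\mathcal{L})\le\beta$ are immediate from Definition~\ref{def:Successive minimum}, since $\mathbf{a}$ is a nonzero lattice vector of norm $\alpha$ and $\{\mathbf{a},\mathbf{b}\}$ is a linearly independent subset of $\mathcal{B}(\boldsymbol{o},r)\cap\mathcal{L}$ for every $r>\beta$. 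Hence everything comes down to the reverse inequalities, and for those it suffices to prove the key estimate: \emph{for all $(x,y)\in\mathbb{Z}^2$ with $y\ne 0$ one has $\|x\mathbf{a}+y\mathbf{b}\|\ge\beta$} (when $y=0$ one trivially has $\|x\mathbf{a}\|=|x|\alpha\ge\alpha$). Granting this, every nonzero $\mathbf{v}\in\mathcal{L}$ satisfies $\|\mathbf{v}\|\ge\alpha$, so $\lambda_1(\mathcal{L})\ge\alpha$; moreover any lattice vector of norm $<\beta$ has vanishing $\mathbf{b}$-coordinate and therefore lies on the line $\mathbb{R}\mathbf{a}$, so $\mathrm{dim}\,\mathrm{span}(\mathcal{B}(\boldsymbol{o},r)\cap\mathcal{L})\le 1$ for every $r<\beta$, which gives $\lambda_2(\mathcal{L})\ge\beta$. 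Combining the two sides yields $\lambda_1(\mathcal{L})=\alpha=\min\{\|\mathbf{a}\|,\|\mathbf{b}\|\}$ and $\lambda_2(\mathcal{L})=\beta=\max\{\|\mathbf{a}\|,\|\mathbf{b}\|\}$.

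To prove the key estimate I would argue by convexity. The two facts I need are: (i) for fixed vectors $\mathbf{u},\mathbf{w}$ the function $t\mapsto\|\mathbf{u}+t\mathbf{w}\|$ is convex on $\mathbb{R}$, being the composition of an affine map with the norm; and (ii) if $\phi$ is convex, $c>0$ and $\phi(c)\ge\phi(0)$, then $\phi(t)\ge\phi(c)$ for all $t\ge c$ (write $c$ as a convex combination of $0$ and $t$). Fix $(x,y)$ with $y\ne 0$; replacing the vector by its negative we may assume $y\ge 1$, and the case $x=0$ is trivial since $\|y\mathbf{b}\|=y\beta\ge\beta$. If $|x|\le y$, apply (ii) to $\phi(t)=\|x\mathbf{a}+t\mathbf{b}\|$: here $\phi(|x|)=|x|\,\|\mathbf{a}+\mathsf{sgn}(x)\mathbf{b}\|\ge|x|\beta$ by the reducedness inequalities $\|\mathbf{b}\|\le\|\mathbf{a}\pm\mathbf{b}\|$, while $\phi(0)=|x|\alpha\le|x|\beta$; since $|x|\ge 1$ and $y\ge|x|$ we get $\|x\mathbf{a}+y\mathbf{b}\|=\phi(y)\ge\phi(|x|)\ge\beta$. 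If $|x|>y$, apply (ii) instead to $\psi(t)=\|t\mathbf{a}+y\mathbf{b}\|$: now $\psi(\pm y)=y\,\|\mathbf{a}\pm\mathbf{b}\|\ge y\beta=\psi(0)$, so $\psi(t)\ge\psi(y)\ge\beta$ for $t\ge y$ and $\psi(t)\ge\psi(-y)\ge\beta$ for $t\le -y$; since $x$ is an integer with $|x|>y$, either $x\ge y+1$ or $x\le -y-1$, so $\|x\mathbf{a}+y\mathbf{b}\|=\psi(x)\ge\beta$ in either case.

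The only genuine difficulty I anticipate is in the key estimate, and specifically in choosing the right one-parameter family along which to slide: one fixes the $\mathbf{a}$-coordinate when $|x|\le y$ and the $\mathbf{b}$-coordinate when $|x|>y$, precisely so that the inequalities $\|\mathbf{a}\|,\|\mathbf{b}\|\le\|\mathbf{a}\pm\mathbf{b}\|$ defining a Lagrange-reduced basis are invoked exactly where they are available and the base value of the convex function is already $\ge\beta$. Everything else — the two upper bounds and the passage from the estimate to $\lambda_2(\mathcal{L})\ge\beta$ — is routine bookkeeping with Definition~\ref{def:Successive minimum}; note in particular that no special properties of the $\ell_2$ or $\ell_\infty$ norms are used, so the argument works verbatim for any computable norm, as the statement requires.
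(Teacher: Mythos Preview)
Your argument is correct, and it matches the paper's approach: the paper does not spell out a proof of Lemma~\ref{lem:Lag} (it is cited from \cite{MGbook}), but immediately after the statement it remarks that the result is ``derived from the following general fact'': if $\|\mathbf{x}\|\le\|\mathbf{x}+\mathbf{y}\|$ then $\|\mathbf{x}+\mathbf{y}\|\le\|\mathbf{x}+\alpha\mathbf{y}\|$ for all $\alpha>1$. This is exactly your convexity principle (ii) with $\phi(t)=\|\mathbf{x}+t\mathbf{y}\|$ and $c=1$, so your two-case split ($|x|\le y$ versus $|x|>y$) is just a careful unpacking of how that single monotonicity lemma yields the key estimate $\|x\mathbf{a}+y\mathbf{b}\|\ge\beta$ whenever $y\ne 0$.
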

The above property is derived from the following general fact:
\begin{lemma}[\cite{MGbook}]
For any two vectors $\mathbf{x}$, $\mathbf{y}$ and any computable norm $\|\cdot\|$, if $\|\mathbf{x}\|\le \ (resp. <)\|\mathbf{x}+\mathbf{y}\|$, then $\|\mathbf{x}+\mathbf{y}\|\le\  (resp. <)\|\mathbf{x}+\alpha\mathbf{y}\|$ for any $\alpha>1$·
\end{lemma}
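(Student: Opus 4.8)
The plan is to use only the two axioms of a norm that matter here — the triangle inequality and positive homogeneity — together with the elementary observation that, for $\alpha>1$, the vector $\mathbf{x}+\mathbf{y}$ sits strictly between $\mathbf{x}$ and $\mathbf{x}+\alpha\mathbf{y}$ on the line they span. Concretely, I would first record the affine identity
\[
\mathbf{x}+\mathbf{y}=\frac{\alpha-1}{\alpha}\,\mathbf{x}+\frac{1}{\alpha}\,(\mathbf{x}+\alpha\mathbf{y}),
\]
and note that the two coefficients $\frac{\alpha-1}{\alpha}$ and $\frac{1}{\alpha}$ are nonnegative and sum to $1$ precisely because $\alpha>1$; hence $\mathbf{x}+\mathbf{y}$ is a genuine convex combination of $\mathbf{x}$ and $\mathbf{x}+\alpha\mathbf{y}$.

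Applying the triangle inequality and homogeneity to this identity gives
\[
\|\mathbf{x}+\mathbf{y}\|\le\frac{\alpha-1}{\alpha}\,\|\mathbf{x}\|+\frac{1}{\alpha}\,\|\mathbf{x}+\alpha\mathbf{y}\|.
\]
Substituting the hypothesis $\|\mathbf{x}\|\le\|\mathbf{x}+\mathbf{y}\|$ into the first term on the right, then moving the resulting $\frac{\alpha-1}{\alpha}\|\mathbf{x}+\mathbf{y}\|$ to the left (its complementary coefficient $1-\frac{\alpha-1}{\alpha}=\frac{1}{\alpha}$ is strictly positive), and finally multiplying through by $\alpha$, yields $\|\mathbf{x}+\mathbf{y}\|\le\|\mathbf{x}+\alpha\mathbf{y}\|$, which is the claim. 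For the parenthetical strict version I would simply track where equality can enter: if $\|\mathbf{x}\|<\|\mathbf{x}+\mathbf{y}\|$, then since $\frac{\alpha-1}{\alpha}>0$ the substitution above is strict, and the same rearrangement gives $\|\mathbf{x}+\mathbf{y}\|<\|\mathbf{x}+\alpha\mathbf{y}\|$.

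An equivalent route, if one prefers a more conceptual phrasing, is to observe that $t\mapsto g(t):=\|\mathbf{x}+t\mathbf{y}\|$ is convex on $\mathbb{R}$ (a norm composed with an affine map), write $1=\frac{\alpha-1}{\alpha}\cdot 0+\frac{1}{\alpha}\cdot\alpha$ for $\alpha>1$, and invoke convexity together with $g(0)\le g(1)$; this is the displayed computation in disguise. Either way, the main obstacle is essentially nonexistent: the only ``trick'' is spotting the correct convex combination, after which the norm axioms force the conclusion. The single point that needs a moment's care is the bookkeeping in the strict case, namely checking that the weight multiplying $\|\mathbf{x}\|$ is strictly positive so that a strict hypothesis propagates to a strict conclusion.
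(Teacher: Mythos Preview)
Your proof is correct. The paper does not actually supply its own proof of this lemma; it merely cites it from \cite{MGbook}, so there is nothing in the paper to compare against. Your convex-combination argument (equivalently, convexity of $t\mapsto\|\mathbf{x}+t\mathbf{y}\|$) is the standard proof of this fact and handles both the non-strict and strict versions cleanly.
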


\subsection{Hermite Normal Form}
The Hermite Normal Form (HNF) of a matrix is a canonical form used in linear algebra and number theory for integer matrices. 
\begin{definition}[\cite{PYB2019}]
(Hermite Normal Form). A nonsingular matrix $\mathbf{H} \in \mathbb{Z}^{d\times d}$ is said to be in HNF if (1) $h_{i,i}>0$ for $1\leq i \leq d$. (2) $h_{j,i}=0$ for $i<j\leq d$. (3) $0\leq h_{j,i}<h_{i,i}$ for $1\leq j<i$.
\end{definition}
For any nonsingular integer matrix \(\mathbf{A}\), there exists a unique matrix \(\mathbf{H}\) in Hermite Normal Form (HNF) and a unique unimodular matrix \(\mathbf{U}\) such that \(\mathbf{H} = \mathbf{U}\mathbf{A}\) \cite{PYB2019}. Specifically, for any two-dimensional lattice, the basis matrix can be efficiently converted into its HNF using the following lemma.  
\begin{lemma}[\cite{EIS01}]
Given a matrix $\mathbf{B}=\begin{pmatrix}
a_1 & b_1\\
a_2 & b_2
\end{pmatrix}\in \mathbb{Z}^{2\times 2}$, let $c=\gcd(a_2,b_2)=xa_2+yb_2$ be the greatest common divisor of $a_2$ and $b_2$, then
\begin{align*}
\begin{pmatrix}
a_1 & b_1\\
a_2 & b_2
\end{pmatrix}
\begin{pmatrix}
b_2/\gcd(a_2,b_2)  & x\\
-a_2/\gcd(a_2,b_2) & y
\end{pmatrix}=
\begin{pmatrix}
a & b\\
0 & c
\end{pmatrix}\in \mathbb{Z}^{2\times 2}
\end{align*}
with $a=(a_1b_2-a_2b_1)/\gcd(a_2,b_2)$ and $b=a_1x+b_1y$.
By applying elementary vector transformations, we can ensure that the matrix satisfy the HNF conditions.
\end{lemma}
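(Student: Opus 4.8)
The statement is essentially a verification, so the plan is to (i) check that the right multiplier is a well-defined unimodular integer matrix, (ii) carry out the $2\times 2$ product explicitly, and (iii) finish with two elementary column operations that enforce the HNF inequalities. Write $c=\gcd(a_2,b_2)>0$ and
\[
\mathbf{U}=\begin{pmatrix} b_2/c & x\\ -a_2/c & y\end{pmatrix}.
\]
Since $c\mid a_2$ and $c\mid b_2$, the entries of $\mathbf{U}$ are integers, and $\det(\mathbf{U})=(b_2/c)\,y+(a_2/c)\,x=(xa_2+yb_2)/c=1$ by the Bézout relation $c=xa_2+yb_2$. Hence $\mathbf{U}$ is unimodular and, by Lemma \ref{lem:B=BU}, $\mathbf{B}\mathbf{U}$ is a basis of the same lattice $\mathcal{L}$.

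Next I would expand $\mathbf{B}\mathbf{U}$ entrywise: the $(1,1)$ entry is $a_1(b_2/c)-b_1(a_2/c)=(a_1b_2-a_2b_1)/c=a$; the $(2,1)$ entry is $a_2(b_2/c)-b_2(a_2/c)=0$; the $(1,2)$ entry is $a_1x+b_1y=b$; and the $(2,2)$ entry is $a_2x+b_2y=c$. This is exactly $\begin{pmatrix} a & b\\ 0 & c\end{pmatrix}$, and the pivot $c$ is positive by the usual convention on $\gcd$.

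To reach HNF it remains to normalize the first column and the off-diagonal entry. First note $a\ne 0$: if $a=0$ then $a_1b_2=a_2b_1$, making $\mathbf{a}$ and $\mathbf{b}$ linearly dependent and contradicting that $\mathbf{B}$ is a basis. If $a<0$, multiply the first column by $-1$; this is unimodular and does not disturb the zero in position $(2,1)$. Finally, subtract $\lfloor b/a\rfloor$ times the (now positive) first column from the second column: another unimodular operation that fixes the first column, leaves the bottom row unchanged (since the $(2,1)$ entry is $0$), and replaces $b$ by $b-\lfloor b/a\rfloor a\in[0,a)$. The matrix then satisfies $h_{1,1},h_{2,2}>0$, $h_{2,1}=0$, and $0\le h_{1,2}<h_{1,1}$, i.e.\ it is in HNF, and all transformations used were unimodular so the lattice is preserved.

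The only point requiring a little care — and the step I would double-check — is the degenerate case $a_2=0$ or $b_2=0$ (but not both, which is again excluded by linear independence of $\mathbf{a},\mathbf{b}$): there one must invoke $\gcd(0,t)=|t|$ and choose the Bézout coefficients accordingly (for instance $x=0,\ y=\mathsf{sgn}(b_2)$ when $a_2=0$), after which the entrywise computation and the normalization above go through verbatim.
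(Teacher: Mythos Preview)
Your proof is correct and is the standard direct verification. Note that the paper does not actually prove this lemma --- it is stated as a cited preliminary from \cite{EIS01} with no proof given --- so there is nothing to compare against; your entrywise computation of $\mathbf{B}\mathbf{U}$, the unimodularity check via B\'ezout, and the two column operations to normalize signs and reduce $b$ modulo $a$ constitute exactly the routine argument one would expect.
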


\section{Our new reduced basis and its properties}\label{sec:relation}
In this section, we first introduce a new reduced basis and then present its properties. Throughout this section and the rest of the paper, unless otherwise specified, the notations \(\|\mathbf{x}\|\) and $\|\mathbf{x}\|_2$ are used to represent the Minkowski metric (\(\ell_\infty\) norm) of \(\mathbf{x}\) and the Euclidean metric  (\(\ell_2\) norm), respectively.
\begin{definition}\label{def:Eucred}
Given a lattice $\mathcal{L}=L(\boldsymbol{B})$ with a basis 
\begin{align}\label{eq:basis}
   \mathbf{B}=[\mathbf{a}\ \mathbf{b}]=\begin{pmatrix}
		a_{1} & b_{1}\\
		a_{2} & b_{2} 
	\end{pmatrix}\in\mathbb{R}^{2\times 2},
\end{align} we call $\mathbf{B}$ is reduced if 
\begin{equation}\label{eq:con}
a_1a_2b_1b_2\le 0 \wedge (|a_1|-|a_2|)(|b_1|-|b_2|)\le 0
\end{equation}	
\end{definition}
It should be remarked that the equation (\ref{eq:con}) also can be equivalently reformulated as
$a_1a_2b_1b_2\le 0 \wedge |a_1b_1-a_2b_2|\le |\det(\mathbf{B})|.$
Now we present two important properties of the new defined reduced basis. The first theorem establishes the minimality of the reduced basis under the Minkowski ($\ell_{\infty}$) norm, while the second theorem demonstrates a similar property under the Euclidean ($\ell_2$) norm. 

\begin{theorem}\label{thm:Eucred}
Let $\mathcal{L}=L(\boldsymbol{B})$ be a lattice with basis $\mathbf{B}$ defined in equation (\ref{eq:basis}), and let $\lambda_1(\mathcal{L})$ and $\lambda_2(\mathcal{L})$ denote the successive minima under $\ell_\infty$ norm. If $\mathbf{B}$ is reduced, then
$\lambda_1(\mathcal{L})=\min\{\|\mathbf{a}\|,\|\mathbf{b}\|\}.$ Further,
let $\mathbf{c}=(c_1\ c_2)\in\mathcal{L}$ be a vector achieving $\lambda_2(\mathcal{L})$. Then $\mathbf{c}=\mathrm{argmax}\{\|\mathbf{a}\|,\|\mathbf{b}\|\}-z\cdot\mathrm{argmin}\{\|\mathbf{a}\|,\|\mathbf{b}\|\}$, where $z$ is determined as follows:

(1) If $a_1b_1\ge 0,a_2b_2\leq 0$, then
\begin{align*}
z=\left\{
\begin{array}{ll}
\left\lfloor\frac{|b_1|-|b_2|}{|a_1|+|a_2|}\right\rfloor\ \mbox{or}\ \left\lceil\frac{|b_1|-|b_2|}{|a_1|+|a_2|}\right\rceil,&\|\mathbf{a}\|\le \|\mathbf{b}\|\\
\left\lfloor\frac{|a_1|-|a_2|}{|b_1|+|b_2|}\right\rfloor\ \mbox{or}\ \left\lceil\frac{|a_1|-|a_2|}{|b_1|+|b_2|}\right\rceil,&\|\mathbf{a}\|>\|\mathbf{b}\|
\end{array}
\right.
\end{align*}

(2) If $a_1b_1<0,a_2b_2\ge 0$, then
\begin{align*}
z=\left\{
\begin{array}{ll}
\left\lfloor\frac{|b_2|-|b_1|}{|a_1|+|a_2|}\right\rfloor\ \mbox{or}\ \left\lceil\frac{|b_2|-|b_1|}{|a_1|+|a_2|}\right\rceil,&\|\mathbf{a}\|\le \|\mathbf{b}\|\\
\left\lfloor\frac{|a_2|-|a_1|}{|b_1|+|b_2|}\right\rfloor\ \mbox{or}\ \left\lceil\frac{|a_2|-|a_1|}{|b_1|+|b_2|}\right\rceil,&\|\mathbf{a}\|>\|\mathbf{b}\|
\end{array}
\right.
\end{align*}
Specially,  if $a_1a_2b_1b_2=0$ and none of the above cases apply, negate  $\mathbf{a}$ or $\mathbf{b}$ as necessary to fit into one of the above cases.
\end{theorem}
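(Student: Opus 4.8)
The plan is to reduce the statement to a small number of elementary observations about $2\times 2$ integer (or real) matrices and then do careful bookkeeping of cases. First I would exploit the structural constraint \eqref{eq:con}. The condition $a_1a_2b_1b_2\le 0$ says that, up to negating one of $\mathbf{a}$ or $\mathbf{b}$ (which does not change the lattice), we may assume one column is ``monotone'' in sign pattern and the other ``anti-monotone'', i.e.\ we fall into case (1) $a_1b_1\ge 0$, $a_2b_2\le 0$ or case (2) $a_1b_1<0$, $a_2b_2\ge 0$; the final sentence of the theorem handles the degenerate boundary $a_1a_2b_1b_2=0$ by exactly this normalization. The second half of \eqref{eq:con}, in the equivalent form $|a_1b_1-a_2b_2|\le|\det(\mathbf{B})|$, will be the quantitative input that forces minimality. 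So the first step is: reduce to case (1) WLOG (case (2) is symmetric via $\mathbf{b}\mapsto$ sign-flip of one coordinate, or just a parallel argument), and WLOG assume $\|\mathbf{a}\|\le\|\mathbf{b}\|$ so that $\mathbf{a}=\operatorname{argmin}$, $\mathbf{b}=\operatorname{argmax}$.

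For the claim $\lambda_1(\mathcal L)=\|\mathbf{a}\|=\min\{\|\mathbf a\|,\|\mathbf b\|\}$: let $\mathbf{v}=x\mathbf a+y\mathbf b$ be any nonzero lattice vector with $(x,y)\in\mathbb Z^2$. I would bound $\|\mathbf v\|_\infty=\max(|xa_1+yb_1|,|xa_2+yb_2|)$ from below by $\|\mathbf a\|_\infty$. The key point is that under the sign pattern of case (1), in the first coordinate $xa_1+yb_1$ the terms $xa_1$ and $yb_1$ tend to reinforce when $x,y$ have the same sign, and in the second coordinate $xa_2+yb_2$ they reinforce when $x,y$ have opposite sign — so one of the two coordinates always gets a ``no cancellation'' lower bound. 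Concretely I expect to show: if $|y|\ge 1$ then either $|xa_1+yb_1|\ge |b_1|-|something|$ or $|xa_2+yb_2|\ge|b_2|-|something|$ and combine with $\|\mathbf b\|\ge\|\mathbf a\|$; if $y=0$ then $\mathbf v=x\mathbf a$ and $\|\mathbf v\|\ge\|\mathbf a\|$ trivially. This is essentially the classical Gauss/Lagrange argument adapted to the $\ell_\infty$ norm and to this sign-normalized basis, and \textbf{this is the step I expect to be the main obstacle}: making the case analysis on the signs of $x,y$ and on which coordinate dominates genuinely exhaustive, while correctly invoking $|a_1b_1-a_2b_2|\le|\det\mathbf B|$ to rule out the one borderline configuration where cancellation could in principle beat $\|\mathbf a\|$.

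For the second-minimum claim: once $\|\mathbf a\|=\lambda_1$, a vector $\mathbf c$ achieving $\lambda_2$ together with $\mathbf a$ must form a basis (standard: a shortest vector and a shortest vector linearly independent from it form a basis in rank $2$), so $\mathbf c=\pm\mathbf b - z\mathbf a$ for some $z\in\mathbb Z$; replacing $\mathbf c$ by $-\mathbf c$ we take $\mathbf c=\mathbf b-z\mathbf a$. It remains to pin down $z$. I would minimize $\|\mathbf b - t\mathbf a\|_\infty=\max(|b_1-ta_1|,|b_2-ta_2|)$ over $t\in\mathbb R$ first: since $a_1,a_2$ have the same sign as $b_1,-b_2$ respectively in case (1) (after the WLOG normalization), the two inner absolute values $|b_1-ta_1|$ and $|b_2-ta_2|$ are each piecewise-linear V-shaped in $t$ with minima at $t=b_1/a_1$ and $t=b_2/a_2$ of opposite-ish sign, and the max is minimized where the two V's cross, giving the crossing point $t^\star$. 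A short computation identifies $|t^\star|=\frac{|b_1|-|b_2|}{|a_1|+|a_2|}$ (this is exactly the quantity in the theorem, and $\ge 0$ precisely because $(|a_1|-|a_2|)(|b_1|-|b_2|)\le 0$ together with $\|\mathbf a\|\le\|\mathbf b\|$). The integer minimizer is then one of $\lfloor t^\star\rfloor,\lceil t^\star\rceil$ because the objective is convex (a max of convex functions) in $t$; hence $z\in\{\lfloor t^\star\rfloor,\lceil t^\star\rceil\}$, which is the stated conclusion. Finally I would note $\lambda_2\le\|\mathbf c\|\le\max\{\|\mathbf a\|,\|\mathbf b\|\}$ (taking $z=0$ is admissible) and that $\mathbf c$ as constructed is independent of $\mathbf a$, closing the argument. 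The routine parts — verifying the crossing-point formula, the convexity remark, and the sign signs in case (2) — I would relegate to direct computation.
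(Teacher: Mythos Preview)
Your approach is essentially the same as the paper's: reduce by symmetry to case (1), show $\lambda_1=\min\{\|\mathbf a\|,\|\mathbf b\|\}$ by observing that the sign pattern forces one coordinate of $z_1\mathbf a+z_2\mathbf b$ to have no cancellation (the paper splits on $\operatorname{sgn}(z_1z_2)$ and on which of $|a_1|\gtrless|a_2|$, $|b_1|\gtrless|b_2|$ holds, using the condition $(|a_1|-|a_2|)(|b_1|-|b_2|)\le 0$ in its original sign form rather than the determinant reformulation you anticipate needing), and for $\lambda_2$ minimize the convex piecewise-linear function $t\mapsto\|\mathbf b-t\mathbf a\|_\infty$, locate the crossing point of the two V-shapes, and take floor/ceil. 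Your anticipated ``main obstacle'' is milder than you fear: the determinant inequality $|a_1b_1-a_2b_2|\le|\det\mathbf B|$ is never actually invoked, and once the cases are organized by the sign of $z_1z_2$ the no-cancellation coordinate immediately gives $\|\mathbf v\|\ge\max\{|a_i|,|b_i|\}\ge\min\{\|\mathbf a\|,\|\mathbf b\|\}$ in each subcase.
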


\begin{proof}

Without loss of generality, we only consider the case 
$a_1b_1\ge 0$ and $a_2b_2\le 0$. We aim to prove that for any non-zero lattice vector $\mathbf{v}=z_1\mathbf{a}+z_2\mathbf{b}=(z_1a_1+z_2b_1\ z_1a_2+z_2b_2)^T$, the following inequality holds:
\begin{align}\label{eq:lambda1}
\min\{\|\mathbf{a}\|,\|\mathbf{b}\|\}\le \|\mathbf{v}\|.
\end{align}
Clearly,  in case that $z_1=0,z_2\neq 0$,
\begin{align*}
\|\mathbf{v}\|=\max\{|z_1a_1+z_2b_1|,|z_1a_2+z_2b_2|\}=\max\{|z_2b_1|, |z_2b_2|\}\ge \max\{|b_1|,|b_2|\}=\|\mathbf{b}\|\ge \min\{\|\mathbf{a}\|,\|\mathbf{b}\|\}.
\end{align*}
and, in case that $z_1\neq 0,z_2=0$,
\begin{align*}
\|\mathbf{v}\|=\max\{|z_1a_1+z_2b_1|,|z_1a_2+z_2b_2|\}=\max\{|z_1a_1|, |z_1a_2|\}\ge \max\{|a_1|,|a_2|\}=\|\mathbf{a}\|\ge\min\{\|\mathbf{a}\|,\|\mathbf{b}\|\}.
\end{align*}
We now analyze the case $z_1z_2\neq 0$ based on the properties of the reduced  basis
($i.e., (|a_1|-|a_2|)(|b_1|-|b_2|)\le 0$).
\begin{enumerate}
\item $|a_1|= |a_2|$. Here, 
$\|\mathbf{a}\|=\max\{|a_1|,|a_2|\}=|a_1|=|a_2|$.  
Then, if $z_1z_2>0$, we have
$\|\mathbf{v}\|=\max\{|z_1a_1+z_2b_1|,|z_1a_2+z_2b_2|\}\ge |z_1a_1+z_2b_1|\ge \max\{|a_1|,|b_1|\}\ge |a_1|=\|\mathbf{a}\|\ge \min\{\|\mathbf{a}\|,\|\mathbf{b}\|\}$
and, if $z_1z_2<0$, we have 
$\|\mathbf{v}\|=\max\{|z_1a_1+z_2b_1|,|z_1a_2+z_2b_2|\}\ge |z_1a_2+z_2b_2|\ge |a_2|= \|\mathbf{a}\|\ge\min\{\|\mathbf{a}\|,\|\mathbf{b}\|\}.$

\item $|b_1|= |b_2|$. Here, 
$\|\mathbf{b}\|=\max\{|b_1|,|b_2|\}=|b_1|=|b_2|$.  
Then, if $z_1z_2>0$, we have
$\|\mathbf{v}\|=\max\{|z_1a_1+z_2b_1|,|z_1a_2+z_2b_2|\}\ge |z_1a_1+z_2b_1|\ge \max\{|a_1|,|b_1|\}\ge |b_1|=\|\mathbf{b}\|\ge \min\{\|\mathbf{a}\|,\|\mathbf{b}\|\}$
and, if $z_1z_2<0$, we have 
$\|\mathbf{v}\|=\max\{|z_1a_1+z_2b_1|,|z_1a_2+z_2b_2|\}\ge |z_1a_2+z_2b_2|\ge |b_2|= \|\mathbf{b}\|\ge\min\{\|\mathbf{a}\|,\|\mathbf{b}\|\}.$

\item $|a_1|< |a_2|$ and $|b_1|> |b_2|$. Here, 
$\|\mathbf{a}\|=\max\{|a_1|,|a_2|\}=|a_2|$ and $\|\mathbf{b}\|=\max\{|b_1|,|b_2|\}=|b_1|$.  
Then, if $z_1z_2>0$, we have
$\|\mathbf{v}\|=\max\{|z_1a_1+z_2b_1|,|z_1a_2+z_2b_2|\}\ge |z_1a_1+z_2b_1|\ge \max\{|a_1|,|b_1|\}=|b_1|=\|\mathbf{b}\|\ge \min\{\|\mathbf{a}\|,\|\mathbf{b}\|\}$
and, if $z_1z_2<0$, we have
$\|\mathbf{v}\|=\max\{|z_1a_1+z_2b_1|,|z_1a_2+z_2b_2|\}\ge |z_1a_2+z_2b_2|\ge |a_2|= \|\mathbf{a}\|\ge\min\{\|\mathbf{a}\|,\|\mathbf{b}\|\}.$
\item $|a_1|> |a_2|$ and $|b_1|< |b_2|$. Here, 
$\|\mathbf{a}\|=\max\{|a_1|,|a_2|\}=|a_1|$ and $\|\mathbf{b}\|=\max\{|b_1|,|b_2|\}=|b_2|$.  
Then, if $z_1z_2>0$, we have
$\|\mathbf{v}\|=\max\{|z_1a_1+z_2b_1|,|z_1a_2+z_2b_2|\}\ge |z_1a_1+z_2b_1|\ge \max\{|a_1|,|b_1|\}\ge |a_1|= \|\mathbf{a}\|\ge \min\{\|\mathbf{a}\|,\|\mathbf{b}\|\}$,
and, if $z_1z_2<0$, we have
$\|\mathbf{v}\|=\max\{|z_1a_1+z_2b_1|,|z_1a_2+z_2b_2|\}\ge |z_1a_2+z_2b_2|\ge |b_2|= \|\mathbf{b}\|\ge \|\mathbf{a}\|\ge \min\{\|\mathbf{a}\|,\|\mathbf{b}\|\}.$
\end{enumerate}
Overall, the equation (\ref{eq:lambda1}) holds, confirming that  $\lambda_1(\mathcal{L})=\min\{\|\mathbf{a}\|,\|\mathbf{b}\|\}$. Next, we will analyze the vector that achieves the second successive minimum $\lambda_2(\mathcal{L})$, by considering two distinct cases.

Case 1: $\min\{\|\mathbf{a}\|,\|\mathbf{b}\|\}=\|\mathbf{a}\|=\lambda_1(\mathcal{L})$, \emph{i.e.}, $\max\{|a_1|,|a_2|\}\le \max\{|b_1|,|b_2|\}$. Define the function 
\begin{align*}
f(x)=\|\mathbf{b}-x\mathbf{a}\|=\max\{|b_1-xa_1|,|b_2-xa_2|\}\triangleq\max\{f_1(x),f_2(x)\}.
\end{align*}
Let $f(z)=\min_{z\in\mathbb{Z}}f(x)$. Then, $\|\mathbf{a}\|\le f(z)=\|\mathbf{b}-z\mathbf{a}\|\le f(z+1)=\|\mathbf{b}-(z+1)\mathbf{a}\|=\|\mathbf{a}-(\mathbf{b}-z\mathbf{a})\|$ and $f(z-1)=\|\mathbf{b}-(z-1)\mathbf{a}\|=\|\mathbf{a}+(\mathbf{b}-z\mathbf{a})\|$. By \textbf{Lemma \ref{lem:Lag}}, it follows that $\lambda_2(\mathcal{L})=f(z)$. Thus, we only need to estimate $z$. We now consider the case where  $a_1a_2\neq 0$.  The argument is essentially the same for cases where $a_1=0$ or $a_2=0$. From the definitions of $f_1(x)$ and $f_2(x)$, we have
\begin{align*}
f_1(x)=\left\{\begin{array}{ll}
|a_1|x-|b_1|,& x\ge \frac{b_1}{a_1}\\
-|a_1|x+|b_1|,& x< \frac{b_1}{a_1}
\end{array}\right.,\
f_2(x)=\left\{\begin{array}{ll}
|a_2|x+|b_2|,& x\ge \frac{b_2}{a_2}\\
-|a_2|x-|b_2|,& x< \frac{b_2}{a_2}
\end{array}\right..
\end{align*}
The minimum $f(z)$ is achieved at the integer closest to the point of intersection where $-|a_1|x+|b_1|=|a_2|x+|b_2|$ (\emph{i.e.}, $x=\frac{|b_1|-|b_2|}{|a_1|+|a_2|}$). Therefore,
\begin{equation*}
z=\left\lfloor\frac{|b_1|-|b_2|}{|a_1|+|a_2|}\right\rfloor\ \mbox{or}\ \left\lceil\frac{|b_1|-|b_2|}{|a_1|+|a_2|}\right\rceil.
\end{equation*}

Case 2: $\min\{\|\mathbf{a}\|,\|\mathbf{b}\|\}=\|\mathbf{b}\|$, \emph{i.e.}, $\max\{|a_1|,|a_2|\}\ge \max\{|b_1|,|b_2|\}$. Define the function 
\begin{align*}
g(y)=\|\mathbf{a}-y\mathbf{b}\|=\max\{|a_1-yb_1|,|a_2-yb_2|\}\triangleq\max\{g_1(y),g_2(y)\}.
\end{align*}
Let $g(z)=\min_{z\in\mathbb{Z}}g(y)$. Then $\|\mathbf{b}\|\le g(z)=\|\mathbf{a}-z\mathbf{b}\|\le g(z+1)=\|\mathbf{a}-(z+1)\mathbf{b}\|=\|\mathbf{b}-(\mathbf{a}-z\mathbf{b})\|$ and $g(z-1)=\|\mathbf{b}-(z-1)\mathbf{a}\|=\|\mathbf{b}+(\mathbf{a}-z\mathbf{b})\|$. By \textbf{Lemma \ref{lem:Lag}}, it follows that $\lambda_2(\mathcal{L})=g(z)$. Thus, we only need to estimate $z$. We now consider the case where  $b_1b_2\neq 0$.  The argument is essentially the same for cases where $b_1=0$ or $b_2=0$. From the definitions of $g_1(y)$ and $g_2(y)$, we have
\begin{align*}
g_1(y)=\left\{\begin{array}{ll}
|b_1|y-|a_1|,& y\ge \frac{a_1}{b_1}\\
-|b_1|y+|a_1|,& y< \frac{a_1}{b_1}
\end{array}\right.,\
g_2(y)=\left\{\begin{array}{ll}
|b_2|y+|a_2|,& y\ge \frac{a_2}{b_2}\\
-|b_2|y-|a_2|,& y< \frac{a_2}{b_2}
\end{array}\right..
\end{align*}
The minimum $g(z)$ will achieve at the integer around the point of intersection $-|b_1|y+|a_1|=|b_2|y+|a_2|$ (\emph{i.e.}, $y=\frac{|a_1|-|a_2|}{|b_1|+|b_2|}$). Therefore,
\begin{equation*}
z=\left\lfloor\frac{|a_1|-|a_2|}{|b_1|+|b_2|}\right\rfloor\ \mbox{or}\ \left\lceil\frac{|a_1|-|a_2|}{|b_1|+|b_2|}\right\rceil.
\end{equation*}
\end{proof}

\begin{theorem}\label{thm:l2}
Let $\mathcal{L}=L(\boldsymbol{B})$ be a lattice with basis $\mathbf{B}$ defined in equation (\ref{eq:basis}), and let $\lambda_1(\mathcal{L})$ and $\lambda_2(\mathcal{L})$ denote the successive minims under $\ell_2$ norm. If $\mathbf{B}$ is reduced, then $\mathbf{u}=\mathrm{argmin}\{\|\mathbf{a}\|_2,\|\mathbf{b}\|_2, \|\mathbf{a}+\mathbf{b}\|_2,\|\mathbf{a}-\mathbf{b}\|_2\}$ is a shortest non-zero lattice vector under the $\ell_2$ norm. That is, $\lambda_1(\mathcal{L})=\|\mathbf{u}\|_2$. Further, for any vector $\mathbf{x}\in \{\mathbf{a},\mathbf{b}\}\setminus \{\mathbf{u}\}$,   $\mathbf{v}=\mathbf{x}-q\mathbf{u}$ with $q=\left\lceil\frac{\langle\mathbf{x}, \mathbf{u}\rangle}{\langle \mathbf{u}, \mathbf{u}\rangle}\right\rfloor$  is a lattice vector achieving $\lambda_2(\mathcal{L})$. That is, 
$\lambda_2(\mathcal{L})=\|\mathbf{v}\|_2$.
\end{theorem}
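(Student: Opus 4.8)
The plan is to reduce the $\ell_2$ statement to the already-established $\ell_\infty$ result (Theorem~\ref{thm:Eucred}) together with the classical Lagrange-reduction machinery (Lemma~\ref{lem:Lag} and the general convexity lemma that follows it). The key observation I would exploit is that a basis satisfying the sign/size conditions in \eqref{eq:con} is ``almost'' Lagrange-reduced in the $\ell_2$ norm: the only possible violation is a single failure of the Lagrange inequalities $\|\mathbf{a}\|_2,\|\mathbf{b}\|_2 \le \|\mathbf{a}+\mathbf{b}\|_2,\|\mathbf{a}-\mathbf{b}\|_2$, and such a violation can be cured by one addition/subtraction. Concretely, I would first show that among the four vectors $\mathbf{a},\mathbf{b},\mathbf{a}+\mathbf{b},\mathbf{a}-\mathbf{b}$, picking the two shortest (in $\ell_2$) yields a Lagrange-reduced basis $[\mathbf{u}\ \mathbf{w}]$ of $\mathcal{L}$; then Lemma~\ref{lem:Lag} immediately gives $\lambda_1(\mathcal{L})=\|\mathbf{u}\|_2$ and $\lambda_2(\mathcal{L})=\|\mathbf{w}\|_2$.

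First I would set $\mathbf{u}=\mathrm{argmin}\{\|\mathbf{a}\|_2,\|\mathbf{b}\|_2,\|\mathbf{a}+\mathbf{b}\|_2,\|\mathbf{a}-\mathbf{b}\|_2\}$ and argue by cases on which of the four achieves the minimum. If $\mathbf{u}\in\{\mathbf{a},\mathbf{b}\}$, say $\mathbf{u}=\mathbf{a}$ and WLOG $\|\mathbf{a}\|_2\le\|\mathbf{b}\|_2$, then $\|\mathbf{a}\|_2\le\|\mathbf{a}\pm\mathbf{b}\|_2$ already holds; I then need $\|\mathbf{b}\|_2\le\|\mathbf{b}\pm\mathbf{a}\|_2$, i.e.\ $|\langle\mathbf{a},\mathbf{b}\rangle|\le\tfrac12\|\mathbf{a}\|_2^2$, and if this fails the reduction step $\mathbf{v}=\mathbf{b}-q\mathbf{a}$ with $q=\lceil\langle\mathbf{a},\mathbf{b}\rangle/\langle\mathbf{a},\mathbf{a}\rangle\rfloor$ produces a vector with $\|\mathbf{v}\|_2\le\|\mathbf{b}\|_2$ and $[\mathbf{a}\ \mathbf{v}]$ Lagrange-reduced; here I must check that $|q|\le 1$, so that $\mathbf{v}\in\{\mathbf{b},\mathbf{b}\pm\mathbf{a}\}$ and hence $\mathbf{v}$ is indeed the argmin over the four candidates and equals the claimed $\mathbf{v}$. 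The bound $|q|\le 1$ is where the reduced-basis hypothesis \eqref{eq:con} enters: the sign condition $a_1a_2b_1b_2\le0$ forces a partial cancellation in $\langle\mathbf{a},\mathbf{b}\rangle=a_1b_1+a_2b_2$, and combined with $|a_1b_1-a_2b_2|\le|\det\mathbf{B}|=|a_1b_2-a_2b_1|$ one gets $|\langle\mathbf{a},\mathbf{b}\rangle|=|a_1b_1+a_2b_2|$ small enough relative to $\|\mathbf{a}\|_2^2$ that rounding the ratio gives $0$ or $\pm1$. If instead $\mathbf{u}\in\{\mathbf{a}+\mathbf{b},\mathbf{a}-\mathbf{b}\}$, say $\mathbf{u}=\mathbf{a}-\mathbf{b}$, I would replace the basis by $[\mathbf{u}\ \mathbf{b}]$ (still a basis since the transition matrix is unimodular) and rerun the same one-step argument with the roles adjusted, again using \eqref{eq:con} to bound the single Lagrange coefficient.

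The second part of the statement — that $\mathbf{v}=\mathbf{x}-q\mathbf{u}$ with $q=\lceil\langle\mathbf{x},\mathbf{u}\rangle/\langle\mathbf{u},\mathbf{u}\rangle\rfloor$ achieves $\lambda_2$, where $\mathbf{x}$ is whichever of $\mathbf{a},\mathbf{b}$ is not $\mathbf{u}$ — then follows by noting that $[\mathbf{u}\ \mathbf{v}]$ is precisely the output of one Lagrange reduction step applied to $[\mathbf{u}\ \mathbf{x}]$, hence Lagrange-reduced, and that $\det[\mathbf{u}\ \mathbf{x}]=\pm\det\mathbf{B}$ so it is a basis of $\mathcal{L}$; Lemma~\ref{lem:Lag} then yields $\|\mathbf{v}\|_2=\lambda_2(\mathcal{L})$. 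The one subtlety I would be careful about is the case $\mathbf{u}\notin\{\mathbf{a},\mathbf{b}\}$: then $\mathbf{x}\in\{\mathbf{a},\mathbf{b}\}$ is at $\ell_\infty$-distance... no — I mean one must verify $[\mathbf{u}\ \mathbf{x}]$ is still unimodularly equivalent to $\mathbf{B}$, which is clear since $\mathbf{u}=\mathbf{a}\pm\mathbf{b}$ and $\mathbf{x}\in\{\mathbf{a},\mathbf{b}\}$ span the same lattice.

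\textbf{Main obstacle.} The crux is the bound $|q|\le 1$ on the single Lagrange coefficient, equivalently showing that condition \eqref{eq:con} already places $\mathbf{B}$ within ``one step'' of Lagrange-reducedness in the $\ell_2$ norm. This requires translating the $\ell_\infty$-flavored sign-and-size inequalities of \eqref{eq:con} into a clean $\ell_2$ inner-product estimate $|a_1b_1+a_2b_2|\le\min\{\|\mathbf{a}\|_2^2,\|\mathbf{b}\|_2^2\}$ (or a slightly weaker version sufficient to force $|q|\le1$), handling the sub-cases $|a_1|\lessgtr|a_2|$, $|b_1|\lessgtr|b_2|$ and the sign pattern $a_1a_2b_1b_2\le0$ separately; the degenerate cases where one coordinate vanishes need a brief independent check. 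Everything else is routine manipulation of the Lagrange/convexity lemmas already available.
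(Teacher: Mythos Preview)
Your central claim --- that condition~\eqref{eq:con} forces $|q|\le 1$, or equivalently that $|\langle\mathbf{a},\mathbf{b}\rangle|$ is bounded by (roughly) $\min\{\|\mathbf{a}\|_2^2,\|\mathbf{b}\|_2^2\}$ --- is false, and with it your whole strategy for the $\lambda_1$ part collapses. Take $\mathbf{a}=(1,2)^T$ and $\mathbf{b}=(100,-1)^T$. Then $a_1a_2b_1b_2=-200\le 0$ and $(|a_1|-|a_2|)(|b_1|-|b_2|)=(-1)(99)\le 0$, so $[\mathbf{a}\ \mathbf{b}]$ is reduced in the sense of Definition~\ref{def:Eucred}. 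One checks $\|\mathbf{a}\|_2^2=5$, $\|\mathbf{b}\|_2^2=10001$, $\|\mathbf{a}\pm\mathbf{b}\|_2^2\in\{9810,10202\}$, so $\mathbf{u}=\mathbf{a}$. But $\langle\mathbf{a},\mathbf{b}\rangle=98$, giving $q=\lceil 98/5\rfloor=20$. Thus the reduced basis is \emph{not} ``within one Lagrange step'' of being $\ell_2$-reduced; the $\ell_\infty$-flavored sign/size constraints in~\eqref{eq:con} say nothing about the ratio $\langle\mathbf{a},\mathbf{b}\rangle/\|\mathbf{a}\|_2^2$ when $\|\mathbf{a}\|_2$ and $\|\mathbf{b}\|_2$ are far apart. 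The vector $\mathbf{v}=\mathbf{b}-20\mathbf{a}=(80,-41)^T$ does achieve $\lambda_2$ here, but it lies nowhere near the four candidates $\{\mathbf{a},\mathbf{b},\mathbf{a}\pm\mathbf{b}\}$, so your argument that ``$\mathbf{u}$ is the argmin over a set containing $\mathbf{v}$, hence $\|\mathbf{u}\|_2\le\|\mathbf{v}\|_2$'' does not go through.

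The paper proceeds very differently: it proves $\lambda_1(\mathcal{L})=\|\mathbf{u}\|_2$ by directly verifying the inequality $\|z_1\mathbf{a}+z_2\mathbf{b}\|_2^2\ge\|\mathbf{u}\|_2^2$ for every nonzero $(z_1,z_2)\in\mathbb{Z}^2$. This is done by expanding the quadratic form and carrying out a somewhat lengthy case analysis on the sign of $z_1z_2$, the sign of $a_1b_1+a_2b_2$, and the ordering of $|a_1|,|a_2|,|b_1|,|b_2|$, repeatedly exploiting both clauses of~\eqref{eq:con}. Only \emph{after} $\lambda_1=\|\mathbf{u}\|_2$ is established does the $\lambda_2$ argument become the short one you sketched: since $\mathbf{u}$ is now known to be a global shortest vector, $\|\mathbf{u}\|_2\le\|\mathbf{v}\|_2$ is automatic, and the optimality of the rounded $q$ gives $\|\mathbf{v}\|_2\le\|\mathbf{v}\pm\mathbf{u}\|_2$, so $[\mathbf{u}\ \mathbf{v}]$ is Lagrange-reduced and Lemma~\ref{lem:Lag} finishes. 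Your treatment of the $\lambda_2$ part is essentially this and is fine; the gap is entirely in the $\lambda_1$ part, where there is no shortcut around the explicit quadratic-form estimates.
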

\begin{proof}
First, we prove $\lambda_1(\mathcal{L})=\|\mathbf{u}\|_2$.  Namely, we need to prove $\forall (z_1\ z_2)\in \mathbb{Z}^2\backslash \{\mathbf{0}\}, \|\mathbf{v}\|_2=\|z_1\mathbf{a}+z_2\mathbf{b}\|_2\geq \|\mathbf{u}\|_2$. Clearly, if $z_1z_2=0$ or $z_1=z_2$, we can easily deduce that $\|\mathbf{v}\|_2\ge \min\{\|\mathbf{a}\|_2,\|\mathbf{b}\|_2, \|\mathbf{a}+\mathbf{b}\|_2\}\ge\|\mathbf{u}\|_2$; Without loss of generality, we now assume $\|\mathbf{a}\|_2\le \|\mathbf{b}\|_2$, $a_1b_1\ge 0,a_2b_2\leq 0$ and $z_1z_2\ne 0 \wedge z_1\ne z_2$.
		
(1) $\mathbf{u} = \mathbf{a}$. In this case, we need to prove
\begin{align}\label{eq:target1}
\|z_1\mathbf{a}+z_2\mathbf{b}\|_2\geq \|\mathbf{a}\|_2\Longleftrightarrow (z_1^{2}-1)(a_1^{2}+a_2^{2})+z_2^{2}(b_1^{2}+b_2^{2})+2z_1z_2(a_1b_1+a_2b_2)\geq 0.
\end{align}
If $(z_1z_2)(a_1b_1+a_2b_2)\ge 0$, the above inequality clearly holds.  Meanwhile, from $\|\mathbf{a}\|_2 \le\|\mathbf{b}\|_2$ and $\|\mathbf{a}\|_2\le\|\mathbf{a}+\mathbf{b}\|_2, \|\mathbf{a}-\mathbf{b}\|_2$, we have 
\begin{align}\label{eq:bound}
a_1^2+a_2^2\le b_1^2+b_2^2\ \wedge\ -(b_1^{2}+b_2^{2})\leq 2(a_1b_1+a_2b_2)\leq b_1^{2}+b_2^{2}.
\end{align}
 Applying these bounds, we conclude that inequality \eqref{eq:target1} follows from
\begin{align*}
  &(z_1^{2}-1)(a_1^{2}+a_2^{2})+z_2^{2}(b_1^{2}+b_2^{2})+2z_1z_2(a_1b_1+a_2b_2)\\
  \geq &(z_1^{2}-1)(a_1^{2}+a_2^{2})+z_2^{2}(b_1^{2}+b_2^{2})-z_1z_2(b_1^{2}+b_2^{2})\\
  \geq &(z_1^{2}-1)(a_1^{2}+a_2^{2})+(z_2^{2}-z_1z_2)(b_1^{2}+b_2^{2})\geq 0
\end{align*}
Thus, it remains to consider the case where
$(z_1z_2)(a_1b_1+a_2b_2)< 0$ and $|z_1|>|z_2|\ge 1$. We proceed by analyzing this scenario in detail.\begin{itemize}
 \item $z_1z_2>0$ and $a_1b_1+a_2b_2<0$, we have 
\begin{align}
   &(z_1^{2}-1)(a_1^{2}+a_2^{2})+z_2^{2}(b_1^{2}+b_2^{2})+2z_1z_2(a_1b_1+a_2b_2)\nonumber\\
   =&(z_1^{2}-1)a_1^{2}+(z_1^{2}-1)a_2^{2}+z_2^{2}b_1^{2}+z_2^{2}b_2^{2}-2|z_1||z_2|(|a_2b_2+a_1b_1|)\nonumber\\
      =&(z_1^{2}-1)a_1^{2}+(z_1^{2}-1)a_2^{2}+z_2^{2}b_1^{2}+z_2^{2}b_2^{2}-2|z_1||z_2|(|a_2||b_2|-a_1b_1)\nonumber\\
   =&z_1^{2}a_2^{2}+z_2^{2}b_2^{2}-2|z_1||z_2||a_2||b_2|+(z_1^{2}-1)a_1^{2}+z_2^{2}b_1^{2}+2z_1z_2a_1b_1-a_2^{2}\label{eq:noa}
\end{align} 
Clearly, if $|a_1|\geq|a_2|$, then the equation $(\ref{eq:noa})\ge (z_1^2-1) a_1^{2}-a_2^{2}\ge 3a_1^2-a_2^2\ge 0$, and, if $|b_1|\ge |a_2|$, then the equation $(\ref{eq:noa})\ge z_2^2 b_1^{2}-a_2^{2}\ge b_1^2-a_2^2\ge 0$. Consequently, the remaining case to consider is $|a_1|<|a_2| \wedge |b_1|<|a_2|$. From $|a_1|<|a_2|$ and the definition of the reduced base, we know $|b_1|\ge |b_2|$. Combining this with $|b_1|<|a_2|$ and the bound $a_1^2+a_2^2\le b_1^2+b_2^2$ (equation (\ref{eq:bound})), we deduce $|a_1|<|b_2|$. Thus, 
the remaining case is \[|a_1|<|b_2|\le |b_1|<|a_2|.\] In this case, if $|z_2|\ge 2$, the the equation
$(\ref{eq:noa})\ge z_2^{2}b_1^{2}-a_2^{2}\ge 4 b_1^{2}-a_2^{2}\ge 2b_1^2+2b_2^2-a_2^2\ge 2(a_1^2+a_2^2)-a_2^2\ge 0.$
If $|z_2|=1$, the equation (\ref{eq:noa}) is
\begin{align*}
&(b_1^{2}+b_2^{2})-(a_1^2+a_2^2)+2|z_1|a_1b_1+z_1^{2}(a_1^{2}+a_2^{2})-2|z_1||a_2||b_2|\\
\ge& z_1^2a_2^{2}-2|z_1||a_2||b_2|\ge z_1^2|a_2||b_2|-2|z_1||a_2||b_2|=( z_1^2-2|z_1|)|a_2||b_2|\ge 0.
\end{align*}

\item $z_1z_2<0$ and $a_1b_1+a_2b_2> 0$, we have 
\begin{align}
   &(z_1^{2}-1)(a_1^{2}+a_2^{2})+z_2^{2}(b_1^{2}+b_2^{2})+2z_1z_2(a_1b_1+a_2b_2)\nonumber\\
   =&(z_1^{2}-1)a_1^{2}+(z_1^{2}-1)a_2^{2}+z_2^{2}b_1^{2}+z_2^{2}b_2^{2}-2|z_1||z_2|(a_1b_1-|a_2b_2|)\nonumber\\
   =&z_1^{2}a_1^{2}+z_2^{2}b_1^{2}-2|z_1||z_2|a_1b_1+(z_1^{2}-1)a_2^{2}+z_2^{2}b_2^{2}+2|z_1||z_2||a_2||b_2|-a_1^{2}.\label{eq:nob} 
\end{align} 

Clearly, if $|a_2|\ge |a_1|$, then the equation $(\ref{eq:nob})\ge (z_1^2-1) a_2^{2}-a_1^{2}\ge 3a_2^2-a_1^2\ge 0$, and, if $|b_2|\ge |a_1|$, then the equation $(\ref{eq:nob})\ge z_2^2 b_2^{2}-a_1^{2}\ge b_2^2-a_1^2\ge 0$. Consequently, the left case is $|a_1|>|a_2| \wedge |a_1|>|b_2|$. From $|a_1|>|a_2|$ and the definition of the reduced base, we know $|b_1|\le |b_2|$, and, $|a_1|>|b_2|$ and $a_1^2+a_2^2\le b_1^2+b_2^2$ (equation (\ref{eq:bound})) implies $|a_2|<|b_1|$. That is, 
the left case is \[|a_2|<|b_1|\le |b_2|<|a_1|.\] In this case, if $|z_2|\ge 2$, then the equation $(\ref{eq:nob})\ge z_2^{2}b_2^{2}-a_1^{2}\ge 4 b_2^{2}-a_2^{2}\ge 2b_1^2+2b_2^2-a_1^2\ge 2(a_1^2+a_2^2)-a_1^2\ge 0.$
 If $|z_2|=1$, the equation (\ref{eq:nob}) is
\begin{align*}
 &z_1^{2}(a_1^{2}+a_2^{2})+(b_1^{2}+b_2^{2})-(a_1^{2}+a_2^{2})-2|z_1|a_1b_1+2|z_1||a_2||b_2|\nonumber\\
\ge & z_1^{2}a_1^{2}-2|z_1|a_1b_1\ge z_1^{2}|a_1||b_1|-2|z_1|a_1b_1=(z_1^{2}-2|z_1|)a_1b_1\ge 0.
\end{align*}
\end{itemize}
		
(2) $\mathbf{u} = \mathbf{a}+\mathbf{b}$. In this case,   we need to prove \begin{align}\label{eq:target}
&||z_1\mathbf{a}+z_2\mathbf{b}||\geq ||\mathbf{a}+\mathbf{b}||\Longleftrightarrow (z_1^{2}-1)(a_1^{2}+a_2^{2})+(z_2^{2}-1)(b_1^{2}+b_2^{2})+2(z_1z_2-1)(a_1b_1+a_2b_2)\geq 0.
\end{align}

From $\|\mathbf{a}+\mathbf{b}\|_2\le \|\mathbf{a}\|_2\le \|\mathbf{b}\|_2$,  we  have
\begin{align}\label{eq:bound2}
a_1b_1+a_2b_2<0\ \wedge \  a_1^{2}+a_2^{2}\le b_1^{2}+b_2^{2} \leq -2(a_1b_1+a_2b_2)=2(|a_2||b_2|-|a_1||b_1|)\leq (a_1^{2}+a_2^{2})+(b_1^{2}+b_2^{2}).
\end{align}
Since $a_1b_1+a_2b_2<0$, the target inequality (\ref{eq:target}) clearly holds if $z_1z_2-1\le 0$. We only need to consider the case that $z_1z_2> 1$.

In fact, by equation (\ref{eq:bound2}),
\begin{align}
&(z_1^{2}-1)(a_1^{2}+a_2^{2})+(z_2^{2}-1)(b_1^{2}+b_2^{2})+2(z_1z_2-1)(a_1b_1+a_2b_2)\nonumber\\
=&(z_1^{2}-1)(a_1^{2}+a_2^{2})+(z_2^{2}-1)(b_1^{2}+b_2^{2})+2z_1z_2(a_1b_1+a_2b_2)-2(a_1b_1+a_2b_2)\nonumber\\
=&z_1^{2}a_1^{2}+z_1^2a_2^{2}-(a_1^{2}+a_2^{2})+z_2^{2}b_1^{2}+z_2^2b_2^{2}-(b_1^{2}+b_2^{2})-2z_1z_2|a_2||b_2|+2z_1z_2|a_1||b_1|-2(a_1b_1+a_2b_2)\nonumber\\
=&(z_1^2a_2^{2}-2z_1z_2|a_2||b_2|+z_2^2b_2^{2})+2z_1z_2|a_1||b_1|+(z_1^2-1)a_1^2+z_2^2b_1^2+(-2(a_1b_1+a_2b_2)-(b_1^{2}+b_2^{2}))-a_2^2\label{eq:lbound1}\\
=&(z_1^2a_2^{2}-2z_1z_2|a_2||b_2|+z_2^2b_2^{2})+2z_1z_2|a_1||b_1|+z_1^2a_1^2+(z_2^2-1)b_1^2+(-2(a_1b_1+a_2b_2)-(a_1^{2}+a_2^{2}))-b_2^2\label{eq:lbound2}
\end{align}
If $|b_1|\ge |a_2|$, then the equation $(\ref{eq:lbound1})\ge z_2^2b_1^2-a_2^2\ge b_1^2-a_2^2\ge 0$. Hence, we only argue the case that $|b_1|<|a_2|$. Here, since $a_1^2+a_2^2\le b_1^2+b_2^2$, we have $|b_2|>|a_1|$.  Also, $(|a_1|-|a_2|)(|b_1|-|b_2|)\le 0$ implies that (1) $|b_1|<|a_2|\le |a_1|<|b_2|$ or (2) $|a_1|<|b_2|\le |b_1|<|a_2|$. Now, we argue each case as follows.
\begin{itemize}
\item  $|b_1|<|a_2|\le |a_1|<|b_2|$. If $|z_1|\ne 1$, then the equation $(\ref{eq:lbound1})\ge (z_1^2-1)a_1^2-a_2^2\ge 3a_1^2-a_2^2\ge 0$. If $|z_1|=1$, then $|z_2|\ge 2$ and the equation (\ref{eq:lbound1}) is 
\begin{align*}
&z_2^2b_1^2+z_2^2b_2^{2}-2|z_2||a_2||b_2|+2|z_2||a_1||b_1|+(2(|a_2||b_2|-|a_1||b_1|)-(b_1^{2}+b_2^{2}))\nonumber\\
\ge& z_2^2b_2^{2}-2|z_2||a_2||b_2|\ge  z_2^2|a_2||b_2|-2|z_2||a_2||b_2|=( z_2^2-2|z_2|)|a_2||b_2|\ge 0. 
\end{align*}

\item $|a_1|<|b_2|\le |b_1|<|a_2|$. If $|z_2|\ne 1$, then the equation $(\ref{eq:lbound2})\ge (z_2^2-1)b_1^2-b_2^2\ge 3b_1^2-b_2^2\ge 0$. If $|z_2|=1$, then $|z_1|\ge 2$ and the equation $(\ref{eq:lbound2})$ is 
\begin{align*}
&z_1^2a_2^{2}-2|z_1||a_2||b_2|+2|z_1||a_1||b_1|+z_1^2a_1^2+(-2(a_1b_1+a_2b_2)-(a_1^{2}+a_2^{2}))\\
\ge& z_1^2a_2^2-2|z_1||a_2||b_2|\ge z_1^2|a_2||b_2|-2|z_1||a_2||b_2|\ge (z_1^2-2|z_1|)|a_2||b_2|\ge 0.
\end{align*}
\end{itemize}

 (3) $\mathbf{u}=\mathbf{a}-\mathbf{b}$. The proof for this case follows the same logic as case $\mathbf{u}=\mathbf{a}+\mathbf{b}$, and thus we omit the details.

Combining the three cases (1)-(3) above, we complete the proof of \(\lambda_1(\mathcal{L})\).  To prove \(\lambda_2(\mathcal{L}) = \|\mathbf{v}\|_2\), by \textbf{Lemma \ref{lem:Lag}}, it suffices to show that the pair \([\mathbf{u}, \mathbf{v}]\) is Lagrange-reduced, i.e.,  
\[
\|\mathbf{u}\|_2 \leq \|\mathbf{v}\|_2 \leq \min\left\{\|\mathbf{u}+\mathbf{v}\|_2, \|\mathbf{u}-\mathbf{v}\|_2\right\}.
\]  
In fact, by the choice of \(q\), we have:  
\[
\begin{aligned}
\|\mathbf{v}\|_2 &= \|\mathbf{x} - q\mathbf{u}\|_2 \leq \|\mathbf{x} - (q+1)\mathbf{u}\|_2 = \|\mathbf{v} - \mathbf{u}\|_2, \\
\|\mathbf{v}\|_2 &= \|\mathbf{x} - q\mathbf{u}\|_2 \leq \|\mathbf{x} - (q-1)\mathbf{u}\|_2 = \|\mathbf{v} + \mathbf{u}\|_2.
\end{aligned}
\]  
Thus, the inequalities \(\|\mathbf{v}\|_2 \leq \|\mathbf{u} \pm \mathbf{v}\|_2\) hold, completing the proof.

\end{proof}


\section{Our new reduced algorithm and its analysis}
 
Building upon our new defined reduced basis (\textbf{Definition \ref{def:Eucred}}) and its properties (\textbf{Theorem \ref{thm:Eucred}}), we introduce an algorithm that directly transforms any given two-dimensional lattice basis into a reduced basis. This approach eliminates the necessity for an initial conversion to HNF as employed by Eisenbrand \cite{EIS01}.

\subsection{Design Idea}
For any input lattice basis  $[\mathbf{a}\ \mathbf{b}]$ with $\mathbf{a}=(a_1\ a_2)^T$, $\mathbf{b}=(b_1\ b_2)^T$ and $\|\mathbf{a}\|\ge \|\mathbf{b}\|$, if it is not reduced, then we have two possible cases:
(1) $a_1a_2b_1b_2>0$ or (2) $a_1a_2b_1b_2\le 0$ and $(|a_1|-|a_2|)(|b_1|-|b_2|)>0$. 

We first consider the case (2), without loss of generality, assume $a_1\ge b_1\ge 0$ and $a_2b_2\le 0$.  If these conditions are not met, we can adjust the basis vectors using negation or swapping operations to achieve them.

If $|a_1|>|a_2|$ and $|b_1|>|b_2|$, inspired by prior work \cite{EIS01,CTJ22}, we can compute a new basis $[\mathbf{b}\ \mathbf{c}]$ with $\mathbf{c}=\mathbf{a}-q\mathbf{b}$, where $q=\lfloor\frac{a_1}{b_1}\rfloor_o>0$. Then
\begin{align*}
a_1>b_1>c_1\ge 0,\ b_2c_2\le 0,\ |c_2|=|a_2-qb_2|=|a_2|+q|b_2|>|b_2|.
\end{align*}
As a result, $b_1b_2c_1c_2\le 0$. If $|c_1|-|c_2|\le 0$, the new basis $[\mathbf{b}\ \mathbf{c}]$ is reduced. Otherwise, we have $|b_1|-|b_2|>|c_1|-|c_2|>0$. Repeating this process iteratively reduces the difference between the absolute values of the first and second coordinates of the new vector until the terminal condition (the equation (\ref{eq:con})) is satisfied. 

If $|a_1|<|a_2|$ and $|b_1|<|b_2|$, in this case, we focus on the second coordinate. Using a similar process, we compute a new basis $[\mathbf{b}\ \mathbf{c}]$, where $\mathbf{c}=\mathbf{a}-q\mathbf{b}$ and $q=\lfloor\frac{a_2}{b_2}\rfloor_o\le0$. Then 
\begin{align*}
c_1>a_1>b_1\ge 0,\ b_2c_2\le 0,\ |c_2|=|a_2-qb_2|=|a_2|-|q||b_2|<|b_2|.
\end{align*}
Thus, $b_1b_2c_1c_2\le 0$. If $|c_1|-|c_2|\ge 0$, the new basis $[\mathbf{b}\ \mathbf{c}]$ is reduced. Otherwise, we have $|b_1|-|b_2|<|c_1|-|c_2|<0$. Repeating this process iteratively reduces the difference between the absolute values of the coordinates until the terminal condition (the equation (\ref{eq:con})) is satisfied.

For the case (1). When $a_1a_2b_1b_2>0$, the key is to transform the basis such that $a_1a_2b_1b_2\le0$. Since $\mathbf{a}$ and $\mathbf{b}$ are linearly independent, it holds that $\frac{a_1}{b_1}\neq \frac{a_2}{b_2}$. Let $q_1\triangleq\lfloor\frac{a_1}{b_1}\rfloor_o, q_2\triangleq \lfloor\frac{a_2}{b_2}\rfloor_o$. If $q_1\neq q_2$,  we choose
\[
q = 
\begin{cases} q_1, & \text{if }  |a_1| \geq |a_2| \wedge q_1 \geq q_2, \\
q_1+1, & \text{if } |a_1| \geq |a_2| \wedge q_1 < q_2,\\
q_2, & \text{if } |a_1| < |a_2| \wedge q_2 \geq q_1, \\
q_2+1, & \text{if }  |a_1| < |a_2| \wedge q_2 < q_1.
\end{cases}\
\] 
Then the new basis $[\mathbf{b}\ \mathbf{c}]$ with $\mathbf{c}=\mathbf{a}-q\mathbf{b}$ ensures $b_1b_2c_1c_2\le0$, achieving the goal. Otherwise $q=q_1=q_2$, iteratively reduce the basis $[\mathbf{b}\ \mathbf{c}]$ with the same method until $q_1\neq q_2$.

\subsection{Algorithm Description and Analysis}
Based on the proposed design concept, the detailed steps of our algorithm are provided in \textbf{Algorithm \ref{alg:CrossEuc}}. It should be noted that the \textbf{Algorithm \ref{alg:CrossEuc}} invokes two sub-algorithms: \textbf{UMTrans1}$^\star$ and \textbf{UMTrans2}$^\star$, which are variants of \textbf{UMTrans1} (\textbf{Algorithm \ref{alg:TransPositive}}) and \textbf{UMTrans2} (\textbf{Algorithm \ref{alg:TransNegative}}), respectively. The only difference is whether the returned result includes the integer \(q\).    

Before discussing the correctness and complexity of \textbf{Algorithm \ref{alg:CrossEuc}},  we first present two lemmas to analyze the properties of \textbf{Algorithm \ref{alg:TransPositive}} and \textbf{Algorithm \ref{alg:TransNegative}}.

\begin{algorithm}
\caption{\textbf{UMTrans1}($\mathbf{a},\mathbf{b}$)}\label{alg:TransPositive}
\begin{algorithmic}[1]
\REQUIRE  A basis $[\mathbf{a}\ \mathbf{b}]$ where $\mathbf{a}=(a_1\ a_2)^T$ and $\mathbf{b}=(b_1\ b_2)^T$, satisfying $a_1a_2b_1b_2>0$
\ENSURE  A new base $\mathbf{a}=(a_1\ a_2)^T$, $\mathbf{b}=(b_1\ b_2)^T$ and an integer $q$
\STATE \textbf{if} ($|a_1|\geq|a_2|$)
\STATE \quad $q :=\left\lfloor\frac{a_1}{b_1}\right\rfloor_o$ 
\STATE  \quad \textbf{if} ($|a_2-qb_2|\geq|b_2|\wedge sgn(a_2)=sgn(a_2-qb_2)$), $q := q+1$
\STATE \textbf{else}
\STATE \quad $q :=\left\lfloor\frac{a_2}{b_2}\right\rfloor_o$ 
\STATE  \quad \textbf{if} ($|a_1-qb_1|\geq|b_1|\wedge sgn(a_1)=sgn(a_1-qb_1)$), $q := q+1$
\STATE ($\mathbf{a},\mathbf{b}$) := ($\mathbf{b},\mathbf{a}-q\mathbf{b}$)
\STATE \textbf{return} $\mathbf{a},\mathbf{b}, q$
\end{algorithmic}
\end{algorithm}

\begin{algorithm}
\caption{\textbf{UMTrans2}($\mathbf{a},\mathbf{b}$)}\label{alg:TransNegative}
\begin{algorithmic}[1]
\REQUIRE  A basis $[\mathbf{a}\ \mathbf{b}]$ where $\mathbf{a}=(a_1\ a_2)^T$ and $\mathbf{b}=(b_1\ b_2)^T$, satisfying $a_1a_2b_1b_2\leq0$ and $(|a_1|-|a_2|)(|b_1|-|b_2|)>0$
\ENSURE  A new basis $\mathbf{a}=(a_1\ a_2)^T$, $\mathbf{b}=(b_1\ b_2)^T$ and an integer $q$
\STATE \textbf{if} ($|a_1|>|a_2|$)
\STATE \quad $q :=\left\lfloor\frac{a_1}{b_1}\right\rfloor_o$ 
\STATE \textbf{else}
\STATE \quad $q :=\left\lfloor\frac{a_2}{b_2}\right\rfloor_o$ 
\STATE ($\mathbf{a},\mathbf{b}$) := ($\mathbf{b},\mathbf{a}-q\mathbf{b}$)
\STATE \textbf{return} $\mathbf{a},\mathbf{b}, q$
\end{algorithmic}
\end{algorithm}

\begin{algorithm}
\caption{\textbf{CrossEuc}($\mathbf{a},\mathbf{b}$)}\label{alg:CrossEuc}
\begin{algorithmic}[1]
\REQUIRE  A basis $[\mathbf{a}\ \mathbf{b}]\in\mathbb{Z}^{2\times 2}$ with $\mathbf{a}=(a_1\ a_2)^T$, $\mathbf{b}=(b_1\ b_2)^T$
\ENSURE  A new basis $\mathbf{a}=(a_1\ a_2)^T$, $\mathbf{b}=(b_1\ b_2)^T$ satisfy $a_1a_2b_1b_2\leq 0 \wedge (|a_1|-|a_2|)(|b_1|-|b_2|)\leq 0$
\STATE \textbf{While} ($a_1a_2b_1b_2>0$), \textbf{do}
\STATE \quad $(\mathbf{a}, \mathbf{b})\leftarrow\textbf{UMTrans1}^\star(\mathbf{a}, \mathbf{b})$
\STATE \textbf{Endwhile}
\STATE \textbf{While} ($(|a_1|-|a_2|)(|b_1|-|b_2|)> 0$), \textbf{do}
\STATE \quad $(\mathbf{a}, \mathbf{b})\leftarrow\textbf{UMTrans2}^\star(\mathbf{a}, \mathbf{b})$
\STATE \textbf{Endwhile}
\STATE \textbf{If} (($a_1b_1=0 \wedge sgn(a_2)\ne sgn(b_2))$ or ($a_2b_2=0 \wedge sgn(a_1)\ne sng(b_1)$)), $\mathbf{b} :=-\mathbf{b}$.
\STATE  \textbf{If} $\|\mathbf{a}\|> \|\mathbf{b}\|$, Swap($\mathbf{a}, \mathbf{b}$)
\STATE  \textbf{If} $\mathbf{a} =[0,0]$, \textbf{Return}[$\mathbf{a}, \mathbf{b}$]
\STATE  \textbf{Else}
\STATE \quad \textbf{if} ($a_2b_2\leq0$)\label{crosseuc:step11}
\STATE  \quad \quad $\mathbf{b} :=\min\left\{\mathbf{b}-\left\lfloor\frac{|b_1|-|b_2|}{|a_1|+|a_2|}\right\rfloor \mathbf{a}, \mathbf{b}-\left\lceil\frac{|b_1|-|b_2|}{|a_1|+|a_2|}\right\rceil \mathbf{a}\right\}$
\STATE \quad \textbf{else}
\STATE  \quad \quad $\mathbf{b} :=\min\left\{\mathbf{b}-\left\lfloor\frac{|b_2|-|b_1|}{|a_1|+|a_2|}\right\rfloor \mathbf{a}, \mathbf{b}-\left\lceil\frac{|b_2|-|b_1|}{|a_1|+|a_2|}\right\rceil \mathbf{a}\right\}$\label{crosseuc:step14}
\STATE  \quad \textbf{Return} $[\mathbf{a}\ \mathbf{b}]$.
\end{algorithmic}
\end{algorithm}

\begin{lemma}\label{lem:key1}
For any lattice basis  
$\mathbf{B}$ defined in equation (\ref{eq:basis}) satisfying $\|\mathbf{a}\|\ge \|\mathbf{b}\|$ and $a_1 a_2 b_1 b_2 > 0$, let \( q_i = \lfloor \frac{a_i}{b_i} \rfloor_o \) for \( i = 1,2 \).  

If \( |a_1| \geq |a_2| \), define  
\[
q := 
\begin{cases} q_1, & \text{if } q_1 \geq q_2, \\
q_1+1, & \text{if } q_1 < q_2.
\end{cases}
\]

If \( |a_1| < |a_2| \), define  
\[
q := 
\begin{cases} q_2, & \text{if } q_2 \geq q_1, \\
q_2+1, & \text{if } q_2 < q_1.
\end{cases}
\]
Then the new matrix  
\[
\mathbf{B}^\prime =[\mathbf{a}^\prime\ \mathbf{b}^\prime]=
\begin{pmatrix}
a_1^\prime & b_1^\prime\\
a_2^\prime & b_2^\prime
\end{pmatrix}:=[\mathbf{b}\ \mathbf{a}-q\mathbf{b}]=\mathbf{B}\begin{pmatrix}
0 & 1\\
1 & -q
\end{pmatrix} \in \mathbb{R}^{2 \times 2}
\]
forms a valid basis, satisfying one of the following conditions:
\begin{enumerate}
\item \( a_1^\prime a_2^\prime b_1^\prime b_2^\prime \leq 0 \), with $(\|\mathbf{a}^\prime\|=\|\mathbf{b}\|)\wedge(\|\mathbf{b}^\prime\|<\max\{\|\mathbf{a}\|\cdot\|\mathbf{b}\|,\|\mathbf{a}\|,\|\mathbf{b}\|\})$ or  
\item \( a_1^\prime a_2^\prime b_1^\prime b_2^\prime > 0 \), with $(\|\mathbf{b}^\prime\|\le \min\{\|\mathbf{a}-\mathbf{b}\|, \|\mathbf{a}+\mathbf{b}\|\}) \wedge(\|\mathbf{b}^\prime\|<\|\mathbf{a}^\prime\|) \wedge (\|\mathbf{b}^\prime\|<\frac{1}{2}\|\mathbf{a}\|)$.
\end{enumerate}
\end{lemma}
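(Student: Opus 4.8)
The plan is to analyze the transformation $\mathbf{B}^\prime = [\mathbf{b}\ \mathbf{a}-q\mathbf{b}]$ by splitting on the sign-pattern case $|a_1|\ge |a_2|$ versus $|a_1| < |a_2|$, and within each by whether $q_1 \ge q_2$ or $q_1 < q_2$. Since $\mathbf{B}^\prime = \mathbf{B}\left(\begin{smallmatrix} 0 & 1 \\ 1 & -q\end{smallmatrix}\right)$ and the right matrix is unimodular, it is automatically a basis by \textbf{Lemma \ref{lem:B=BU}}, so the content is entirely in the coordinate inequalities. First I would write out $\mathbf{a}^\prime = \mathbf{b}$ (hence $\|\mathbf{a}^\prime\| = \|\mathbf{b}\|$ always) and $\mathbf{b}^\prime = \mathbf{a} - q\mathbf{b} = (a_1 - qb_1,\ a_2 - qb_2)^T$, and track the signs of the two coordinates $b_1^\prime, b_2^\prime$ as functions of $q$ relative to the two "quotient" thresholds $a_1/b_1$ and $a_2/b_2$.

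The key observation I would exploit: since $a_1a_2b_1b_2 > 0$, the fractions $a_1/b_1$ and $a_2/b_2$ have the same sign, and $q_i = \lfloor a_i/b_i\rfloor_o$ is the "rounding toward zero" of each. Rounding toward zero means $a_i - q_i b_i$ keeps the sign of $a_i$ and has magnitude $< |b_i|$; incrementing $q_i$ by $1$ (in the direction away from zero) flips that sign. So the choice of $q$ is engineered precisely so that $b_1^\prime$ and $b_2^\prime$ end up with opposite signs (giving $a_1^\prime a_2^\prime b_1^\prime b_2^\prime = b_1b_2 \cdot b_1^\prime b_2^\prime \le 0$, case 1) unless $q_1 = q_2$, in which case $q = q_1 = q_2$ forces both coordinates to retain the sign of the corresponding $a_i$, so $b_1^\prime b_2^\prime$ has the sign of $a_1a_2$, i.e. the same sign as $b_1b_2$, landing in case 2. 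I would verify this sign bookkeeping carefully in each of the four sub-branches, paying attention to which coordinate ($a_1/b_1$ or $a_2/b_2$) is used to define $q$ depending on whether $|a_1|\ge|a_2|$.

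For the magnitude bounds, in case 1 I need $\|\mathbf{b}^\prime\| < \max\{\|\mathbf{a}\|\cdot\|\mathbf{b}\|, \|\mathbf{a}\|, \|\mathbf{b}\|\}$: here the governing coordinate (say $a_1$ if $|a_1|\ge|a_2|$) satisfies $|a_1 - qb_1| < |b_1| \le \|\mathbf{b}\|$ by the rounding property, while the other coordinate $|a_2 - qb_2|$ I would bound crudely by $|a_2| + |q||b_2| \le \|\mathbf{a}\| + |q|\|\mathbf{b}\|$ and then note $|q| \le |a_1/b_1| + 1 \le \|\mathbf{a}\|$ (using $|b_1|\ge 1$ for integer entries, or more carefully $\|\mathbf{a}\|\ge\|\mathbf{b}\|$); combining gives something dominated by $\|\mathbf{a}\|\cdot\|\mathbf{b}\|$ with room to spare for the strict inequality. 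In case 2, the stronger claims $\|\mathbf{b}^\prime\| < \frac12\|\mathbf{a}\|$ and $\|\mathbf{b}^\prime\| \le \min\{\|\mathbf{a}-\mathbf{b}\|,\|\mathbf{a}+\mathbf{b}\|\}$ follow because when $q=q_1=q_2$ both coordinates are "small remainders": $|b_i^\prime| = |a_i - q b_i| < |b_i|$ strictly, but moreover when the first coordinate dominates ($|a_1|\ge|a_2|$) one has $|b_1^\prime| < |b_1| \le |a_1|$ and a finer estimate should push $\|\mathbf{b}^\prime\|$ below $\frac12\|\mathbf{a}\|$ — I expect this to need the fact that $q\ge 1$ (which holds since $a_i/b_i$ and hence $q_i$ is at least $1$ in absolute value when $|a_i|\ge|b_i|$... this needs checking since $\|\mathbf{a}\|\ge\|\mathbf{b}\|$ does not immediately give $|a_i|\ge|b_i|$ coordinatewise) so that $|a_i| = |qb_i + b_i^\prime| \ge |q||b_i| - |b_i^\prime| > (|q|-1)|b_i| \ge \ldots$, and for the $\|\mathbf{a}\pm\mathbf{b}\|$ bound I would compare $\|\mathbf{b}^\prime\| = \|\mathbf{a}-q\mathbf{b}\|$ against $\|\mathbf{a}-\mathbf{b}\|$ and $\|\mathbf{a}+\mathbf{b}\| = \|\mathbf{a}-(-1)\mathbf{b}\|$, arguing that $q$ is "between" the relevant integer arguments in the one-dimensional projection that governs the $\ell_\infty$ norm.

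The main obstacle I anticipate is the case-2 magnitude bound $\|\mathbf{b}^\prime\| < \frac12\|\mathbf{a}\|$: the sign analysis is essentially combinatorial bookkeeping and the case-1 bound is a loose estimate, but getting the factor $\frac12$ requires carefully using that $|q|\ge 1$ together with the correct coordinatewise domination ($|a_1|\ge|a_2|$ controlling which of $|b_1|,|b_2|$ is large), and I would need to rule out the degenerate possibility $|q|=1$ separately or show it cannot occur when we are in case 2 — or, if it can occur, show the bound still holds via the remainder being strictly smaller than half. I would handle $|q|=1$ by observing that then $q_1=q_2=\pm1$ forces $|a_i|\ge|b_i|$ for the dominant coordinate, and the non-dominant coordinate's contribution is controlled by the reduced-basis-type inequality $(|a_1|-|a_2|)(|b_1|-|b_2|)$ pattern, squeezing $\|\mathbf{b}^\prime\|$ below $\tfrac12\|\mathbf{a}\|$.
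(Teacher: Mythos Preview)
Your approach matches the paper's: reduce by symmetry to $|a_1|\ge|a_2|$, write $a_i = q_i b_i + r_i$ with the sign/magnitude properties of rounding toward zero, and split into the three sub-cases $q_1>q_2$, $q_1<q_2$, $q_1=q_2$, doing sign bookkeeping and remainder estimates in each.

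Your anticipated obstacle in case~2 is not real, and your proposed workaround (invoking a $(|a_1|-|a_2|)(|b_1|-|b_2|)$ pattern) is both unnecessary and unavailable---that inequality is \emph{not} a hypothesis here, since $a_1a_2b_1b_2>0$ is precisely the non-reduced situation. The resolution is much simpler than you expect: in the branch $|a_1|\ge|a_2|$, the hypothesis $\|\mathbf{a}\|\ge\|\mathbf{b}\|$ gives $|a_1|=\|\mathbf{a}\|\ge\|\mathbf{b}\|\ge|b_1|$, so $q_1\ge 1$; hence in case~2 one has $q=q_1=q_2\ge 1$ automatically, and then for each $i$ the one-line Euclidean identity $|a_i|-|r_i|=q|b_i|\ge|b_i|>|r_i|$ yields $|r_i|<|a_i|/2$ directly, with no special treatment of $|q|=1$ needed. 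The bound $|r_i|\le\min\{|a_i-b_i|,|a_i+b_i|\}$ is equally immediate from $q\ge 1$ and $a_ib_i>0$ (so $|a_i-b_i|=|a_i|-|b_i|\ge |a_i|-q|b_i|=|r_i|$ and $|a_i+b_i|=|a_i|+|b_i|>|r_i|$). This is exactly how the paper handles it.
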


\begin{proof}
Due to the property of the unimodular matrix transformation, it is evident that \(\mathbf{B}^\prime\) forms a basis. Without loss of generality, we assume \(a_1b_1 > 0\) and \(a_2b_2 > 0\), focusing on the case where \(|a_1| \geq |a_2|\); the argument for \(|a_1| < |a_2|\) follows analogously.

Let \( a_1 = q_1b_1 + r_1 \) and \( a_2 = q_2b_2 + r_2 \). Then \( \text{sgn}(r_1) = \text{sgn}(a_1) = \text{sgn}(b_1) \), \( \text{sgn}(r_2) = \text{sgn}(a_2) = \text{sgn}(b_2) \), and  $0 \leq |r_1| < |b_1|, 0 \leq |r_2| < |b_2|.$

 (1) \( q_1 > q_2.\) In this case,  
\[
\begin{pmatrix}
a_1^\prime & b_1^\prime \\
a_2^\prime & b_2^\prime
\end{pmatrix}
:=
\begin{pmatrix}
b_1 & a_1 - q_1b_1 \\
b_2 & a_2 - q_1b_2
\end{pmatrix}.
\]
Clearly, \( \text{sgn}(a_1^\prime) = \text{sgn}(b_1) = \text{sgn}(a_1 - q_1b_1) = \text{sgn}(b_1^\prime) \). Assuming \( q_1 = q_2 + k \) with \(1\le k\le q_1=\lfloor \frac{a_1}{b_1}\rfloor_o\le \frac{a_1}{b_1} \), we have  
\begin{align*}
 &0 \leq |a_2 - q_2b_2| = |r_2| < |b_2|\\
  \implies&(k-1)|b_2| < |a_2 - q_1b_2| = |r_2 - kb_2|=k|b_2|-|r_2| \le k|b_2|\le \frac{a_1}{b_1}|b_2|\le |a_1||b_2|\le\|\mathbf{a}\|\cdot\|\mathbf{b}\|\\
 \implies&\|\mathbf{b}^\prime\|=\max\{|a_1-q_1b_1|,|a_2-q_1b_2|\}=\max\{|r_1|,|r_2-kb_2|\}\le \|\mathbf{a}\|\cdot \|\mathbf{b}\|
\end{align*}
and \( \text{sgn}(a_2^\prime) = \text{sgn}(b_2) \neq \text{sgn}(a_2 - q_1b_2) = \text{sgn}(b_2^\prime) \), which implies  
$a_1^\prime a_2^\prime b_1^\prime b_2^\prime \leq 0.$

 (2) \( q_1 < q_2.\) In this case, 
\[
\begin{pmatrix}
a_1^\prime & b_1^\prime \\
a_2^\prime & b_2^\prime
\end{pmatrix}
:=
\begin{pmatrix}
b_1 & a_1 - (q_1 + 1)b_1 \\
b_2 & a_2 - (q_1 + 1)b_2
\end{pmatrix}.
\]
Clearly,  
$
|b_1^\prime| = |a_1 - (q_1 + 1)b_1| = |r_1 - b_1| \leq |b_1| = |a_1^\prime|,
$
and \( \text{sgn}(a_1^\prime) = \text{sgn}(b_1) \neq \text{sgn}(a_1 - (q_1 + 1)b_1) = \text{sgn}(b_1^\prime) \).  
Assuming \( q_2 = q_1 + k \) with \( k \geq 1 \),  we have
\begin{align*}
&|a_2 - (q_1 + 1)b_2| = |a_2 - (q_2 - k + 1)b_2| = |r_2 + (k - 1)b_2| < k|b_2|<q_2|b_2| < |a_2|\le \|\mathbf{a}\|\\
\implies &\|\mathbf{b}^\prime\|=\max\{|b_1^\prime|,|b_2^\prime|\}=\max\{|r_1-b_1|,|r_2+(k-1)b_2|\}\le \max\{|b_1|, |a_2|\}\le \max\{\|\mathbf{a}\|,\|\mathbf{b}\|\},
\end{align*}
and \( \text{sgn}(a_2^\prime) = \text{sgn}(b_2) = \text{sgn}(a_2 - (q_1 + 1)b_2) = \text{sgn}(b_2^\prime) \), leading to  
$a_1^\prime a_2^\prime b_1^\prime b_2^\prime \leq 0.$

 (3)  \( q_1 = q_2.\) In this case, 
\[
\begin{pmatrix}
a_1^\prime & b_1^\prime \\
a_2^\prime & b_2^\prime
\end{pmatrix}
:=
\begin{pmatrix}
b_1 & a_1 - q_1b_1 \\
b_2 & a_2 - q_1b_2
\end{pmatrix}
=
\begin{pmatrix}
b_1 & r_1 \\
b_2 & r_2
\end{pmatrix}.
\]
If \( r_1 = 0 \) or \( r_2 = 0 \), then \( a_1^\prime a_2^\prime b_1^\prime b_2^\prime \leq 0 \) and $\max\{|a_1^\prime|, |a_2^\prime|, |b_1^\prime|, |b_2^\prime|\}=\max\{|b_1|,|b_2|\}$. Otherwise,  $a_1^\prime a_2^\prime b_1^\prime b_2^\prime>0$ and
\begin{align*}
&\left.\begin{array}{c}
    (|b_1^\prime| = |r_1|=|a_1-q_1b_1|\le \min\{|a_1-b_1|,|a_1+b_1|\}) \wedge (|b_1^\prime |=|r_1|< |b_1|=|a_1^\prime|)\wedge  (|b_1^\prime| =|r_1|< \frac{|a_1|}{2}),\\
   (|b_2^\prime| = |r_2|=|a_2-q_2b_2|\le \min\{|a_2-b_2|,|a_2+b_2|\})\wedge (|b_2^\prime |=|r_2|< |b_2|=|a_2^\prime|)\wedge (|b_2^\prime| = |r_2|< \frac{|a_2|}{2}).  
\end{array}\right\}\\
&\implies(\|\mathbf{b}^\prime\|\le \min\{\|\mathbf{a}-\mathbf{b}\|, \|\mathbf{a}+\mathbf{b}\|\}) \wedge(\|\mathbf{b}^\prime\|<\|\mathbf{a}^\prime\|) \wedge (\|\mathbf{b}^\prime\|<\frac{1}{2}\|\mathbf{a}\|).
\end{align*}
\end{proof}

\begin{lemma}\label{lem:key2}
Let $\mathbf{B}$ be a lattice basis as defined in equation (\ref{eq:basis}) satisfying $\|\mathbf{a}\|\ge \|\mathbf{b}\|$, $a_1a_2b_1 b_2\le 0$ and $(\left | a_{1}  \right | - \left | a_{2}  \right |)(\left | b_{1}  \right |-\left | b_{2}  \right |) > 0$.  

If \(|a_1|>|a_2| \), define  \( q_1 = \lfloor \frac{a_1}{b_1} \rfloor_o \) and
\[
\mathbf{B}^\prime=[\mathbf{a}^\prime\ \mathbf{b}^\prime]=\begin{pmatrix}
a_1^\prime & b_1^\prime\\
a_2^\prime & b_2^\prime
\end{pmatrix} := 
\begin{pmatrix}
b_1 & a_1 - q_1 b_1\\
b_2 & a_2 - q_1 b_2
\end{pmatrix}=[\mathbf{b}\ \mathbf{a}-q_1\mathbf{b}]=\mathbf{B}\begin{pmatrix}0 & 1\\ 1 & -q_1\end{pmatrix}
\]

If \( |a_1| < |a_2| \), define  \( q_2 = \lfloor \frac{a_2}{b_2} \rfloor_o \) and
\[
\mathbf{B}^\prime=[\mathbf{a}^\prime\ \mathbf{b}^\prime]=\begin{pmatrix}
a_1^\prime & b_1^\prime\\
a_2^\prime & b_2^\prime
\end{pmatrix} := 
\begin{pmatrix}
b_1 & a_1 - q_2 b_1\\
b_2 & a_2 - q_2 b_2
\end{pmatrix}=[\mathbf{b}\ \mathbf{a}-q_2\mathbf{b}]=\mathbf{B}\begin{pmatrix}0 & 1\\1 & -q_2\end{pmatrix}
\]
Then the new matrix  $\mathbf{B}^\prime$ forms a valid basis, satisfying one of the following conditions:
\begin{enumerate}
\item \(\mathbf{B}^\prime\) is reduced, or  
\item \( a_1^\prime a_2^\prime b_1^\prime b_2^\prime \le 0 \), with \((\|\mathbf{b}^\prime\|\le \min\{\|\mathbf{a}-\mathbf{b}\|, \|\mathbf{a}+\mathbf{b}\|\}) \wedge(\|\mathbf{b}^\prime\|<\|\mathbf{a}^\prime\|) \wedge (\|\mathbf{b}^\prime\|<\frac{1}{2}\|\mathbf{a}\|)\wedge(||b_1^\prime|-|b_2^\prime|| < \frac{||a_1|-|a_2||}{2}) \).  
\end{enumerate}
\end{lemma}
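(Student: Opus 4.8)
The plan is to mimic the case analysis already carried out in the proof of Lemma \ref{lem:key1}(3), specializing to the sign pattern imposed by the hypothesis $a_1a_2b_1b_2\le 0$. Without loss of generality I would assume $a_1b_1\ge 0$ and $a_2b_2\le 0$, and handle the branch $|a_1|>|a_2|$ (so that $|b_1|>|b_2|$ by the hypothesis $(|a_1|-|a_2|)(|b_1|-|b_2|)>0$); the branch $|a_1|<|a_2|$ is symmetric with the roles of the two coordinates interchanged, using $q_2$ in place of $q_1$. Write $a_1=q_1b_1+r_1$ with $0\le |r_1|<|b_1|$ and $\mathrm{sgn}(r_1)=\mathrm{sgn}(a_1)=\mathrm{sgn}(b_1)$, so that $\mathbf{B}'=[\mathbf{b}\ \mathbf{a}-q_1\mathbf{b}]$ has first column $\mathbf{b}$ and second column $(r_1,\ a_2-q_1b_2)^T$. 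Since $a_2b_2\le 0$, the quantity $|a_2-q_1b_2|=|a_2|+|q_1||b_2|$ grows (it is $\ge |b_2|$ when $q_1\ne 0$; the degenerate sub-cases $q_1=0$, $r_1=0$, $a_2=0$ or $b_2=0$ I would dispatch separately, as in Lemma \ref{lem:key1}). Unimodularity of $\begin{pmatrix}0&1\\1&-q_1\end{pmatrix}$ gives that $\mathbf{B}'$ is a basis.

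Next I would split on whether $\mathbf{B}'$ already satisfies the reducedness condition (\ref{eq:con}). Note $b_1'b_2'=r_1(a_2-q_1b_2)$; since $r_1$ has the sign of $b_1$ and, by the earlier observation, $a_2-q_1b_2$ typically has sign opposite to $b_2$ while $b_1b_2\le 0$, one gets $a_1'a_2'b_1'b_2' = b_1b_2\cdot b_1'b_2'\le 0$ in the generic situation (the sign bookkeeping is exactly the $q_1>q_2$ computation in Lemma \ref{lem:key1}, now automatic because $a_2,b_2$ already have opposite signs). So the first conjunct of (\ref{eq:con}) holds for $\mathbf{B}'$; the dichotomy is then whether $(|b_1'|-|b_2'|)(|r_1|-|a_2-q_1b_2|)\le 0$. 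If it holds, we are in outcome (1). If it fails, we are in outcome (2) and must prove the four-fold conjunction.

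For outcome (2), the three bounds $\|\mathbf{b}'\|\le\min\{\|\mathbf{a}-\mathbf{b}\|,\|\mathbf{a}+\mathbf{b}\|\}$, $\|\mathbf{b}'\|<\|\mathbf{a}'\|$, and $\|\mathbf{b}'\|<\tfrac12\|\mathbf{a}\|$ I expect to follow coordinatewise exactly as in Lemma \ref{lem:key1}(3): for the first coordinate $|r_1|\le\min\{|a_1-b_1|,|a_1+b_1|\}$, $|r_1|<|b_1|=|a_1'|$, and $|r_1|<\tfrac12|a_1|$ (this last because $|a_1|>|a_2|$ forces $|a_1|$ to be the larger coordinate and $q_1\ge 2$ in the non-degenerate regime — a point that needs a short argument, namely that $|a_1|>|a_2|\ge 0$ together with $b_1b_2\le 0$ and $|b_1|>|b_2|$ gives $|a_1|\ge 2|b_1|$, hence $q_1\ge 2$, hence $|r_1|<|b_1|\le\tfrac12|a_1|$); for the second coordinate the same inequalities need to be checked for $|a_2-q_1b_2|$ against $|a_2\pm b_2|$, $|b_2|$, and $\tfrac12|a_2|$, and here I would again lean on $q_1\ge 2$ and $a_2b_2\le0$. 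The genuinely new ingredient is the fourth conjunct $\bigl||b_1'|-|b_2'|\bigr|<\tfrac12\bigl||a_1|-|a_2|\bigr|$: since $|b_1'|=|r_1|$ and $|b_2'|=|a_2|+|q_1||b_2|$, and since on the other side $|a_1|-|a_2|=q_1|b_1|+r_1-|a_2|$ (signs absorbed), this reduces to an inequality purely among $|b_1|,|b_2|,|r_1|,|a_2|,q_1$; I would prove it by bounding $|b_1'|-|b_2'|$ from both sides using $|r_1|<|b_1|$, $q_1\ge 2$, and the relation $|a_1|=q_1|b_1|+|r_1|$.

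The main obstacle I anticipate is precisely this last contraction estimate $\bigl||b_1'|-|b_2'|\bigr|<\tfrac12\bigl||a_1|-|a_2|\bigr|$, because it mixes the two coordinates asymmetrically (one coordinate shrinks, the other grows) and because one must be careful about the sign of $|b_1'|-|b_2'|$ — the quantity $||a_1|-|a_2||$ on the right could a priori be small if the input vector $\mathbf{a}$ is nearly on the diagonal, so the argument has to exploit that in that regime $q_1$ is correspondingly large (or that we have already fallen into outcome (1)). A secondary nuisance is the careful enumeration of the degenerate sub-cases $b_1=0$, $b_2=0$, $a_2=0$, $r_1=0$, each of which either lands immediately in outcome (1) or makes one of the strict inequalities an equality that must be inspected; I would handle these first to clear the deck, exactly as the proof of Lemma \ref{lem:key1} does.
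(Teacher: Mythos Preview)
Your plan contains a genuine error in the treatment of outcome~(2). You propose to verify the three norm bounds $\|\mathbf{b}'\|\le\min\{\|\mathbf{a}\pm\mathbf{b}\|\}$, $\|\mathbf{b}'\|<\|\mathbf{a}'\|$, $\|\mathbf{b}'\|<\tfrac12\|\mathbf{a}\|$ \emph{coordinatewise}, checking them separately for $|b_1'|=|r_1|$ and for $|b_2'|=|a_2-q_1b_2|$. For the second coordinate this cannot work: since $a_2b_2\le 0$ one has $|b_2'|=|a_2|+q_1|b_2|$, which is \emph{larger} than each of $|a_2|$, $|b_2|$, and $\min\{|a_2\pm b_2|\}$, so none of the three desired inequalities hold at coordinate~2. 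The paper's key observation, which you are missing, is that being in outcome~(2) means precisely $|b_1'|>|b_2'|$ (because $|a_1'|=|b_1|>|b_2|=|a_2'|$, so the failure of reducedness forces the second factor $|b_1'|-|b_2'|$ to be positive). Hence $\|\mathbf{b}'\|=|b_1'|=|r_1|$, and the three norm bounds follow immediately from the \emph{first}-coordinate estimates on $|r_1|$ alone. Your claimed deduction that $q_1\ge 2$ is also false (take $a_1=5,b_1=4,a_2=1,b_2=-1$); only $q_1\ge 1$ is guaranteed (from $\|\mathbf{a}\|=|a_1|\ge|b_1|=\|\mathbf{b}\|$), and that already yields $|a_1|=q_1|b_1|+|r_1|\ge|b_1|+|r_1|>2|r_1|$.

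Once this is in place, the fourth conjunct---which you anticipate as the main obstacle---is a two-line computation rather than a delicate balance: in outcome~(2) one has $|b_1'|-|b_2'|=|r_1|-(|a_2|+q_1|b_2|)\le |r_1|-|a_2|<\tfrac{|a_1|}{2}-|a_2|\le\tfrac{|a_1|-|a_2|}{2}$, and the absolute-value bars drop because both differences are positive. No separate analysis of the ``$\mathbf{a}$ nearly diagonal'' regime is needed.
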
 
\begin{proof}
Clearly, \(\mathbf{B}^\prime\) forms a basis as a direct consequence of the properties of unimodular matrix transformations. Without loss of generality, we focus on the case where \(|a_1|> |a_2|\); the argument for \(|a_1| < |a_2|\) follows analogously.

In case that $|a_1|>|a_2|$, since $a_1a_2b_1b_2\le 0$ and $(|a_1|-|a_2|)(|b_1|-|b_2|)>0$, we have $|a_1|>|a_2|\ge 0$, $|b_1|>|b_2|\ge 0$ and, without loss of generality, we assume $a_1b_1>0$ and $a_2b_2\le 0$. Let \( a_1 = q_1b_1 + r_1 \). Then $q_1>0$, \( \text{sgn}(r_1) = \text{sgn}(a_1) = \text{sgn}(b_1) \), and  
\[
0 \leq |r_1| < |b_1| \implies 0 \leq |r_1| <\frac{|a_1|}{2}.
\]
Additionally, \( \text{sgn}(a_1^\prime) = \text{sgn}(b_1) = \text{sgn}(a_1 - q_1b_1) = \text{sgn}(b_1^\prime) \) and \( \text{sgn}(a_2^\prime) = \text{sgn}(b_2) \neq \text{sgn}(a_2 - q_1b_2) = \text{sgn}(b_2^\prime) \), which implies  $a_1^\prime a_2^\prime b_1^\prime b_2^\prime \leq 0.$
Furthermore,  if $|b_1^\prime|-|b_2^\prime|\le 0$, then $(|a_1^\prime|-|a_2^\prime|)(|b_1^\prime|-|b_2^\prime|)\le 0$ and thus $\mathbf{B}^\prime$ is reduced. Otherwise, $|b_2^\prime|<|b_1^\prime|=|r_1|$, which implies 
\begin{align*}
&\|\mathbf{b}^\prime\|=\max\{|b_1^\prime|, |b_2^\prime|\}=|b_1^\prime|=|r_1|=|a_1-q_1b_1|\le \min\{|a_1-b_1|,|a_1+b_1|\}\le \min\{\|\mathbf{a}-\mathbf{b}\|, \|\mathbf{a}+\mathbf{b}\|\},\\
 &\|\mathbf{b}^\prime\|=\max\{|b_1^\prime|, |b_2^\prime|\}=|b_1^\prime|=|r_1|<|b_1|\le \max\{|b_1|,|b_2|\}=\|\mathbf{b}\|=\|\mathbf{a}^\prime\|,\\
 &\|\mathbf{b}^\prime\|=\max\{|b_1^\prime|, |b_2^\prime|\}=|b_1^\prime|=|r_1|<\frac{|a_1|}{2}=\frac{1}{2}\max\{|a_1|,|a_2|\}=\frac{1}{2}\|\mathbf{a}\|,\\
&|b_1^\prime|-|b_2^\prime|=|r_1|-|a_2-q_1b_2|=|r_1|-(|a_2|+q_1|b_2|)\le|r_1|-|a_2|<\frac{|a_1|-|a_2|}{2}.
\end{align*}    
\end{proof}

Based on \textbf{Lemma \ref{lem:key1}} and \textbf{Lemma \ref{lem:key2}}, we can establish the correctness and complexity of \textbf{Algorithm \ref{alg:CrossEuc}}. Specifically, the following theorem holds

\begin{theorem}\label{thm:Par-Euc}
  Given a lattice basis $\mathbf{B}=[\mathbf{a}\ \mathbf{b}]\in\mathbb{Z}^{2\times 2}$ with $\mathbf{a}=(a_1\ a_2)^T$, $\mathbf{b}=(b_1\ b_2)^T$, and $\#(\mathbf{a},\mathbf{b})=n$, \textbf{Algorithm \ref{alg:CrossEuc}}  outputs a basis $\mathbf{B}^\prime=[\mathbf{a}^\prime\ \mathbf{b}^\prime]$ satisfying $\|\mathbf{a}^\prime\|=\lambda_1(L(\mathbf{B}))$ and $\|\mathbf{b}^\prime\|=\lambda_2(L(\mathbf{B}))$ with a time complexity of $O(n^2)$.  
\end{theorem}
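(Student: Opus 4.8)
The plan is to establish correctness and the $O(n^2)$ running time separately, handling each of the three \textbf{While}/\textbf{If} phases of \textbf{Algorithm \ref{alg:CrossEuc}} in turn and leaning on \textbf{Lemma \ref{lem:key1}}, \textbf{Lemma \ref{lem:key2}}, \textbf{Theorem \ref{thm:Eucred}}, and \textbf{Lemma \ref{lem:Lag}}. First I would argue that the first \textbf{While} loop (lines 1--3, invoking \textbf{UMTrans1}$^\star$) terminates and leaves a basis with $a_1a_2b_1b_2\le 0$. By \textbf{Lemma \ref{lem:key1}}, each call either exits into case (1), where the product $a_1^\prime a_2^\prime b_1^\prime b_2^\prime\le 0$ and the loop condition fails immediately, or stays in case (2), where $\|\mathbf{b}^\prime\| < \tfrac12\|\mathbf{a}\|$; since $\|\mathbf{a}^\prime\|=\|\mathbf{b}\|\le\|\mathbf{a}\|$, the quantity $\#(\mathbf{a},\mathbf{b})$ strictly decreases (roughly halves) each iteration, so this phase runs $O(\log\|\mathbf{a}\|)=O(n)$ times, each iteration costing one division of $n$-bit integers, i.e. $O(n^2)$ by schoolbook arithmetic — but actually I should be more careful and track the total work across iterations to make sure it telescopes to $O(n^2)$ overall rather than $O(n^2)$ per step.

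Next I would handle the second \textbf{While} loop (lines 4--6, invoking \textbf{UMTrans2}$^\star$). Here \textbf{Lemma \ref{lem:key2}} guarantees that each call either produces a reduced basis (loop exits, since $(|a_1|-|a_2|)(|b_1|-|b_2|)\le0$ then holds) or stays in case (2), where $\bigl||b_1^\prime|-|b_2^\prime|\bigr| < \tfrac12\bigl||a_1|-|a_2|\bigr|$ while the sign condition $a_1^\prime a_2^\prime b_1^\prime b_2^\prime\le0$ is preserved — so the first loop's invariant is not destroyed — and simultaneously $\|\mathbf{b}^\prime\|<\tfrac12\|\mathbf{a}\|$. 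The halving of $\bigl||a_1|-|a_2|\bigr|$ (or alternatively of the vector norm) bounds the iteration count by $O(n)$, and I would again argue the division costs telescope to $O(n^2)$ total. A subtle point to verify is that \textbf{UMTrans2} does not re-introduce a positive product $a_1a_2b_1b_2>0$; this follows because \textbf{Lemma \ref{lem:key2}} explicitly asserts $a_1^\prime a_2^\prime b_1^\prime b_2^\prime\le 0$ in both of its output cases, so the algorithm never needs to return to the first loop.

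For correctness of the output, after both loops the basis satisfies the reducedness condition (\ref{eq:con}), so \textbf{Definition \ref{def:Eucred}} applies. Line 7 negates $\mathbf{b}$ if needed to put the degenerate ($a_1a_2b_1b_2=0$) sub-cases into the sign pattern required by \textbf{Theorem \ref{thm:Eucred}}; line 8 swaps so that $\|\mathbf{a}\|\le\|\mathbf{b}\|$, whence \textbf{Theorem \ref{thm:Eucred}} gives $\|\mathbf{a}\|=\lambda_1$. Lines 9--15 then compute $\mathbf{b}^\prime$ as $\mathbf{b}-z\mathbf{a}$ for the two candidate values $z=\lfloor\cdot\rfloor,\lceil\cdot\rceil$ of the ratio appearing in \textbf{Theorem \ref{thm:Eucred}} (the $a_2b_2\le0$ branch matching case (1) of the theorem, the other branch matching case (2) after the negation), taking whichever is shorter; by the theorem this attains $\lambda_2$. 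I would also note the trivial $\mathbf{a}=\mathbf{0}$ guard on line 9 is unreachable for a genuine basis but included for safety. The final arithmetic (two vector combinations with coefficients of bit-length $O(n)$) is $O(n^2)$, dominated by the loop cost.

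The main obstacle I anticipate is the running-time bookkeeping rather than correctness: I must show that the sum over all iterations of the cost of computing $q=\lfloor a_i/b_i\rfloor_o$ — a division whose quotient can be large — is $O(n^2)$, not merely $O(n)$ iterations times $O(n^2)$ per division. The standard fix is the classical Euclidean-algorithm amortization: the cost of dividing an $n_1$-bit number by an $n_2$-bit number to get a $(n_1-n_2)$-bit quotient is $O((n_1-n_2+1)\,n_2)$, and since the bit-lengths decrease monotonically through the iterations these terms telescope, exactly as in the complexity proof of the ordinary integer Euclidean algorithm; I would invoke that accounting (as the Lagrange/CRS analysis in \cite{yap1992fast} does) to conclude the $O(n^2)$ bound. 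A secondary, more bureaucratic obstacle is checking that the $\|\mathbf{a}\|\ge\|\mathbf{b}\|$ precondition required by Lemmas \ref{lem:key1}--\ref{lem:key2} is maintained (or re-established by the implicit ordering inside \textbf{UMTrans1}/\textbf{UMTrans2}) at the entry to every iteration, including the first, which may require a one-line normalization argument at the start.
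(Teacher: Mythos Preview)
Your proposal is correct and follows essentially the same route as the paper's own proof: termination of the two \textbf{While} loops via \textbf{Lemma~\ref{lem:key1}} and \textbf{Lemma~\ref{lem:key2}} (halving of $\|\mathbf{b}\|$ and of $\bigl||a_1|-|a_2|\bigr|$ respectively), correctness of the final output via \textbf{Theorem~\ref{thm:Eucred}}, and the $O(n^2)$ bound via the classical Euclidean telescoping $\sum_i(n_{i-1}^2-n_i^2)=O(n^2)$ on the division/multiplication costs. The two subtleties you flag---that the telescoping must be done carefully rather than naively bounding each iteration by $O(n^2)$, and that the $\|\mathbf{a}\|\ge\|\mathbf{b}\|$ precondition of the lemmas is self-maintaining after the first step---are exactly the points the paper handles (the latter somewhat implicitly).
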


\begin{proof}
If the input lattice basis satisfies $a_1a_2b_1b_2>0$, we assume that the reduced lattice basis sequence in the first \textbf{While} loop of \textbf{Algorithm \ref{alg:CrossEuc}} is as follows:
\begin{align} 
\mathbf{B}^{(0)}=[\mathbf{a}^{(0)}\ \mathbf{b}^{(0)}]=\begin{pmatrix}
a_{1}^{(0)} & b_{1}^{(0)}\\
a_{2}^{(0)} & b_{2}^{(0)}
\end{pmatrix}=
\begin{pmatrix}
a_{1} & b_{1}\\
a_{2} & b_{2}
\end{pmatrix}\to \mathbf{B}^{(1)}=[\mathbf{a}^{(1)}\ \mathbf{b}^{(1)}]=\begin{pmatrix}
a_{1}^{(1)} & b_{1}^{(1)}\\
a_{2}^{(1)} & b_{2}^{(1)}
\end{pmatrix}\to \cdots\nonumber\\
\to
\mathbf{B}^{(k-1)}=[\mathbf{a}^{(k-1)}\ \mathbf{b}^{(k-1)}]=\begin{pmatrix}
a_{1}^{(k-1)} & b_{1}^{(k-1)}\\
a_{2}^{(k-1)} & b_{2}^{(k-1)}
\end{pmatrix}\to
\mathbf{B}^{(k)}=[\mathbf{a}^{(k)}\ \mathbf{b}^{(k)}]=\begin{pmatrix}
a_{1}^{(k)} & b_{1}^{(k)}\\
a_{2}^{(k)} & b_{2}^{(k)}
\end{pmatrix},\label{eq:latticeseq1}
\end{align}
where $\mathbf{a}^{(i)}=\mathbf{b}^{(i-1)}$ for $i=1,\cdots,k$, and, with loss of generality, we assume $k$ is even.
By \textbf{Lemma \ref{lem:key1}},  we have $a_1^{(k)}a_2^{(k)}b_1^{(k)}b_2^{(k)}\le 0$ and
\begin{align*}
&\|\mathbf{a}^{(k)}\|=\|\mathbf{b}^{(k-1)}\|<\frac{1}{2}\|\mathbf{a}^{(k-2)}\|<\cdots<\frac{1}{2^{k/2}}\|\mathbf{a}\|,\\
&\|\mathbf{a}^{(k-1)}\|<\frac{1}{2}\|\mathbf{a}^{(k-3)}\|<\cdots<\frac{1}{2^{(k-2)/2}}\|\mathbf{a}^{(1)}\|=\frac{1}{2^{(k-2)/2}}\|\mathbf{b}\|\\
&\|\mathbf{b}^{(k)}\|<\|\mathbf{a}^{(k-1)}\|\cdot\|\mathbf{b}^{(k-1)}\|<\frac{1}{2^{k/2}}\|\mathbf{a}\|\cdot \frac{1}{2^{(k-2)/2}}\|\mathbf{b}\|=\frac{1}{2^{k-1}}\|\mathbf{a}\|\cdot \|\mathbf{b}\|.
\end{align*}
Since $1\le \min\{\|\mathbf{a}^{(k)}\|,\|\mathbf{b}^{(k)}\|\}<\frac{1}{2^{k-1}}|\mathbf{a}\|\cdot \|\mathbf{b}\|$, it follows that $k<\log \|\mathbf{a}\|+\log \|\mathbf{b}\|+1$ is finite.

Further, if $(|a_1^{(k)}|-|a_2^{(k)}|)(|b_1^{(k)}|-|b_2^{(k)}|)>0$, then during the second \textbf{While} loop, we assume the reduced lattice basis sequence is 
\begin{align}\label{eq:latticeseq2}
\mathbf{B}^{(k)}=[\mathbf{a}^{(k)}\ \mathbf{b}^{(k)}]=\begin{pmatrix}
a_{1}^{(k)} & b_{1}^{(k)}\\
a_{2}^{(k)} & b_{2}^{(k)}
\end{pmatrix}\to \mathbf{B}^{(k+1)}=[\mathbf{a}^{(k+1)}\ \mathbf{b}^{(k+1)}]=\begin{pmatrix}
a_{1}^{(k+1)} & b_{1}^{(k+1)}\\
a_{2}^{(k+1)} & b_{2}^{(k+1)}
\end{pmatrix}\to \cdots\nonumber\\ \to
\mathbf{B}^{(k+s-1)}=[\mathbf{a}^{(k+s-1)}\ \mathbf{b}^{(k+s-1)}]=\begin{pmatrix}
a_{1}^{(k+s-1)} & b_{1}^{(k+s-1)}\\
a_{2}^{(k+s-1)} & b_{2}^{(k+s-1)}
\end{pmatrix}\to
\mathbf{B}^{(k+s)}=[\mathbf{a}^{(k+s)}\ \mathbf{b}^{(k+s)}]=\begin{pmatrix}
a_{1}^{(k+s)} & b_{1}^{(k+s)}\\
a_{2}^{(k+s)} & b_{2}^{(k+s)}
\end{pmatrix},
\end{align} where $\mathbf{a}^{(k+i)}=\mathbf{b}^{(k+i-1)}$ for $i=1,\cdots,s$, and, without loss of generality, we assume $s$ is even. Then, by \textbf{Lemma \ref{lem:key2}},  we have $a_1^{(k+i)}a_2^{(k+i)}b_1^{(k+i)}b_2^{(k+i)}\le 0$ for $i=0,\cdots,s$,
\begin{align*}
  1\le|a_1^{(k+s)}|-|a_2^{(k+s)}|= |b_1^{(k+(s-1))}|-|b_2^{(k+(s-1))}| <\frac{|a_1^{(k+(s-2))}|-|a_2^{(k+(s-2))}|}{2}<
   \cdots<\frac{|a_1^{(k)}|-|a_2^{(k)}|}{2^{s/2}},
\end{align*}
and $|b_1^{(k+s)}|-|b_2^{(k+s)}|\le 0.$
Consequently, after the second \textbf{While} loop, we have 
 \begin{align*}
 a_1^{(k+s)}a_2^{(k+s)}b_1^{(k+s)}b_2^{(k+s)}\le 0 \wedge (|a_1^{(k+s)}|-|a_2^{(k+s)}|)(|b_1^{(k+s)}|-|b_2^{(k+s)}|)\le 0,
\end{align*}
and thus, the lattice basis $\mathbf{B}^{(k+s)}=[\mathbf{a}^{(k+s)}\ \mathbf{b}^{(k+s)}]$ is reduced.
Finally, by \textbf{Theorem \ref{thm:Eucred}}, the returned basis $\mathbf{B}^\prime=[\mathbf{a}^\prime\ \mathbf{b}^\prime]$ of \textbf{Algorithm \ref{alg:CrossEuc}} satisfies $\|\mathbf{a}^\prime\|=\lambda_1(L(\mathbf{B}))$ and $\|\mathbf{b}^\prime\|=\lambda_2(L(\mathbf{B}))$

Now, we estimate the time complexity. Without loss of generality, we assume $\|\mathbf{a}^{(0)}\|\ge \|\mathbf{b}^{(0)}\|$, $\#\mathbf{a}^{(i)}=n_{i}$ and $\#q^{(i)}=\ell_{i}$ for $i=1,\cdots,k,k+1,\cdots, k+s$.  For the first \textbf{While} loop,  since $\mathbf{a}^{(i-1)}=q^{(i)}\mathbf{b}^{(i-1)}+\mathbf{b}^{(i)}=q^{(i)}\mathbf{a}^{(i)}+\mathbf{a}^{(i+1)}$, where 
\begin{align}\label{eq:q}
 q^{(i)}\in\left\{\left\lfloor\frac{a_1^{(i-1)}}{b_1^{(i-1)}}\right\rfloor_0=\left\lfloor\frac{a_1^{(i-1)}}{a_1^{(i)}}\right\rfloor_0, \left\lfloor\frac{a_1^{(i-1)}}{b_1^{(i-1)}}\right\rfloor_0+1=\left\lfloor\frac{a_1^{(i-1)}}{a_1^{(i)}}\right\rfloor_0+1,\right.\nonumber\\
 \left.\left\lfloor\frac{a_2^{(i-1)}}{b_2^{(i-1)}}\right\rfloor_0=\left\lfloor\frac{a_2^{(i-1)}}{a_2^{(i)}}\right\rfloor_0, \left\lfloor\frac{a_2^{(i-1)}}{b_2^{(i-1)}}\right\rfloor_0+1=\left\lfloor\frac{a_2^{(i-1)}}{a_2^{(i)}}\right\rfloor_0+1\right\},
\end{align} we have $n_{i-1}=n_i+\ell_i$. Therefore, in each loop, the time complexity of division is bounded by $O(n_{i-1}\ell_i)$ and the time complexity of multiplication is bounded by $O(n_i\ell_i)$. Consequently, the total complexity of the first \textbf{While} loop is $T_1=O(\sum_{i = 1}^{k}(n_{i-1}\ell_i+n_i\ell_i))=O(\sum_{i=1}^k (n_i+n_{i-1})(n_{i-1}-n_i))=O(\sum_{i=1}^k (n_{i-1}^2-n_i^2))=O(n_0^2-n_k^2)$. Similarly, for the second \textbf{While} loop, the total complexity is $T_2=O(\sum_{i=k+1}^{k+s}(n_{i-1}\ell_i+n_i\ell_i))=O(\sum_{i=k+1}^{k+s} (n_i+n_{i-1})(n_{i-1}-n_i))=O(\sum_{i=k+1}^{k+s} (n_{i-1}^2-n_i^2))=O(n_k^2-n_{k+s}^2)$.
 Therefore, the total complexity of  \textbf{Algorithm \ref{alg:CrossEuc}} is $T=O(T_1+T_2)=O(n_0^2)=O(n^2)$.
\end{proof}
 
\begin{remark}
According to the property established in \textbf{Theorem \ref{thm:l2}}, the $\ell_2$-shortest basis $[\mathbf{u}\ \mathbf{v}]$ can be obtained via a minor modification of \textbf{Algorithm \ref{alg:CrossEuc}}. Specifically, by replacing Steps \ref{crosseuc:step11}–\ref{crosseuc:step14} with the following:
$$
\mathbf{u} = \arg\min\left\{\|\mathbf{a}\|_2,\ \|\mathbf{b}\|_2,\ \|\mathbf{a} + \mathbf{b}\|_2,\ \|\mathbf{a} - \mathbf{b}\|_2\right\}, 
$$
and setting
$$
\mathbf{v} = \mathbf{x} - q\mathbf{u}, \quad \text{where } \mathbf{x} \in \{\mathbf{a}, \mathbf{b}\} \setminus \{\mathbf{u}\} \text{ and } q = \left\lceil \frac{\langle \mathbf{x}, \mathbf{u} \rangle}{\langle \mathbf{u}, \mathbf{u} \rangle} \right\rfloor.
$$
This substitution yields a basis satisfying the $\ell_2$-shortest condition. The same remark is also applicable to the upcoming \textbf{Algorithm \ref{alg:HGCD-Par-Euc2}} (\textbf{HVecSBP}).
\end{remark}

\section{Optimized algorithm and its analysis}
For two integers with bit lengths of \( n \), the well-known Half-GCD algorithm \cite{moller2006robust,moller2008schonhage} can efficiently find two integers of approximately \( n/2 \)-bit length that share the same common divisor.  
Inspired by the Half-GCD algorithm, this section first explores the \textbf{HVec} algorithm, which rapidly reduces a lattice basis to a new basis with bit lengths approximately halved from the original, along with a complexity analysis. Furthermore, we leverage this algorithm to optimize and accelerate the \textbf{CrossEuc} algorithm presented in the previous section.

\subsection{\textbf{HVec} algorithm and its analysis}

Analogous to the case of two integers, our design for the input of two vectors is based on the following key observation:

Assume $[\mathbf{a}^\prime\ \mathbf{b}^\prime]$ and $[\mathbf{c}^\prime\ \mathbf{d}^\prime]$ are two different bases for the same lattice, \emph{i.e.},
\begin{align}\label{eq:keyobser1}
[\mathbf{a}^\prime\ \mathbf{b}^\prime]=\begin{pmatrix}
a_{1}^\prime & b_{1}^\prime\\
a_{2}^\prime & b_{2}^\prime 
\end{pmatrix}=\begin{pmatrix}
c_{1}^\prime & d_{1}^\prime\\
c_{2}^\prime & d_{2}^\prime 
\end{pmatrix}\mathbf{M}^\prime=[\mathbf{c}^\prime\ \mathbf{d}^\prime] \mathbf{M}^\prime
\end{align}
for some unimodular matrix \(\mathbf{M}^\prime \). The two new vectors 
\begin{align}\label{eq:keyobser2}
[\mathbf{a}\ \mathbf{b}]=\begin{pmatrix}
a_1 & b_1\\
a_2 & b_2 
\end{pmatrix}=2^{n_\ell}\begin{pmatrix}
a_{1}^\prime & b_{1}^\prime\\
a_{2}^\prime & b_{2}^\prime 
\end{pmatrix}+\begin{pmatrix}
a_{1}^{\prime\prime} & b_{1}^{\prime\prime}\\
a_{2}^{\prime\prime} & b_{2}^{\prime\prime} 
\end{pmatrix}= 2^{n_\ell}[\mathbf{a}^\prime\ \mathbf{b}^\prime]+[\mathbf{a}^{\prime\prime}\ \mathbf{b}^{\prime\prime}],
\end{align}
where \( [\mathbf{a}^{\prime\prime}\ \mathbf{b}^{\prime\prime}] \) represents the \( n_\ell \) least significant bits, and \( [\mathbf{a}^\prime\ \mathbf{b}^\prime] \) refers to the \( n_h = \#(\mathbf{a},\mathbf{b}) - n_\ell \) most significant bits of \( [\mathbf{a}\ \mathbf{b}] \).
Additionally, the vectors \( [\mathbf{c}\ \mathbf{d}] \) satisfy
\begin{align}
  [\mathbf{a}\ \mathbf{b}]=
\begin{pmatrix}
a_1 & b_1\\
a_2 & b_2 
\end{pmatrix}=\begin{pmatrix}
c_{1} & d_{1}\\
c_{2} & d_{2} 
\end{pmatrix}\mathbf{M}^\prime=[\mathbf{c}\ \mathbf{d}]\mathbf{M}^\prime   
\end{align}
for the same unimodular matrix \( \mathbf{M}^\prime \). Then, we have
\begin{align}\label{eq:keyobser3}
 [\mathbf{c}\ \mathbf{d}]=[\mathbf{a}\ \mathbf{b}](\mathbf{M}^\prime)^{-1}=2^{n_\ell}[\mathbf{c}^\prime\ \mathbf{d}^\prime]+[\mathbf{a}^{\prime\prime}\ \mathbf{b}^{\prime\prime}](\mathbf{M}^\prime)^{-1} 
\end{align}

Building on the above fact, we can halve the bit-length of \( [\mathbf{a}\ \mathbf{b}] \) by adopting a recursive reduction strategy. \textbf{Algorithm \ref{alg:HVec}} provides the details of our design. We now provide a rigorous theoretical analysis to establish the correctness and complexity of \textbf{Algorithm \ref{alg:HVec}}. That is, we argue the following result:

\begin{algorithm}
\caption{\textbf{HVec}($\mathbf{a}\ \mathbf{b}$)} \label{alg:HVec}
\begin{algorithmic}[1]
\REQUIRE  Two vectors $\mathbf{a}=(a_1,a_2)^{T}, \mathbf{b}=(b_1,b_2)^{T}$ and $\#(\mathbf{a},\mathbf{b})=n,\underline{\#}(\mathbf{a},\mathbf{b})>\left\lfloor\frac{n}{2}\right\rfloor+1, \underline{\#}(\mathbf{a}+\mathbf{b},\mathbf{a}-\mathbf{b})>\lfloor\frac{n}{2}\rfloor+1$
\ENSURE  Two vectors $\mathbf{c}=(c_1,c_2)^{T},\mathbf{d}=(d_1,d_2)^{T}$ and an unimodular matrix $\mathbf{M}$ such that $[\mathbf{a}\ \mathbf{b}]=[\mathbf{c}\ \mathbf{d}]\mathbf{M}$. Meanwhile, either $(\#(\mathbf{c},\mathbf{d})\le n) \wedge (\underline{\#}(\mathbf{c},\mathbf{d})> \lfloor\frac{n}{2}\rfloor+1) \wedge (\underline{\#}(\mathbf{c}+\mathbf{d},\mathbf{c}-\mathbf{d})\leq\lfloor\frac{n}{2}\rfloor+1)$,   or $[\mathbf{c}\ \mathbf{d}]$ is reduced.
\STATE  $\mathbf{c}:=\mathbf{a},\mathbf{d}:=\mathbf{b}$, $\mathbf{M}:=\mathbf{I}$
\IF{$c_1c_2d_1d_2\leq0 \wedge (|c_1|-|c_2|)(|d_1|-|d_2|)\leq 0$}
\RETURN $\mathbf{c},\mathbf{d},\mathbf{M}$
\ENDIF
\STATE $n:=\#(\mathbf{c},\mathbf{d})$
\STATE $s:=\lfloor\frac{n}{2}\rfloor+1$
\STATE  \textbf{if} $\underline{\#}(\mathbf{c},\mathbf{d})\geq \lfloor\frac{3}{4}n\rfloor+2$, \textbf{then} \label{Hvec:stepif1start}
\STATE \quad $n_{\ell 1}:=\lfloor n/2\rfloor$
\STATE \quad $a_{1}^\prime:=\lfloor c_1/2^{n_{\ell 1}}\rfloor $, $a_{1}^{\prime\prime}:=c_1-2^{n_{\ell 1}}a_{1}^\prime$, $a_{2}^\prime:=\lfloor c_2/2^{n_{\ell 1}}\rfloor$, $a_{2}^{\prime\prime}:=c_2-2^{n_{\ell 1}}a_{2}^\prime$
\STATE \quad $b_{1}^\prime:=\lfloor d_1/2^{n_{\ell 1}}\rfloor $, $b_{1}^{\prime\prime}:=d_1-2^{n_{\ell 1}}b_{1}^\prime$, $b_{2}^\prime:=\lfloor d_2/2^{n_{\ell 1}}\rfloor$, $b_{2}^{\prime\prime}:=d_2-2^{n_{\ell 1}}b_{2}^{\prime}$
\STATE \quad $\mathbf{a}^{\prime}:=(a_{1}^\prime,a_{2}^\prime)^{T},\mathbf{b}^\prime :=(b_{1}^\prime,b_{2}^\prime)^{T}$, $\mathbf{a}^{\prime\prime}:=(a_{1}^{\prime\prime},a_{2}^{\prime\prime})^{T},\mathbf{b}^{\prime\prime} :=(b_{1}^{\prime\prime},b_{2}^{\prime\prime})^{T}$
\STATE \quad $(\mathbf{c}^\prime,\mathbf{d}^\prime,\mathbf{M}_1^\prime)\leftarrow \textbf{HVec}(\mathbf{a}^\prime, \mathbf{b}^\prime)$
\STATE \quad $[\mathbf{c}\ \mathbf{d}]:= 2^{n_{\ell 1}}[\mathbf{c}^\prime\  \mathbf{d}^\prime]+[\mathbf{a}^{\prime\prime}\ \mathbf{b}^{\prime\prime}](\mathbf{M}_1^\prime)^{-1}$\label{Hvec:stepcd1}
\STATE \quad $\mathbf{M}:=\mathbf{M}_1^\prime\mathbf{M}$
\STATE \textbf{end if}\label{Hvec:stepif1end}
\STATE \textbf{while}    $\#(\mathbf{c},\mathbf{d})>\lfloor 3n/4\rfloor+1$ and $\underline{\#}(\mathbf{c}-\mathbf{d},\mathbf{c}+\mathbf{d})>s$ \label{hvec:stepwhile1} \textbf{do}
\STATE \quad \textbf{if} ($c_1c_2d_1d_2\leq0 \wedge (|c_1|-|c_2|)(|d_1|-|d_2|)\leq 0$), \textbf{return} $\mathbf{c},\mathbf{d},\mathbf{M}$
\STATE \quad \textbf{if} ($c_1c_2d_1d_2>0$),
\STATE \qquad $(\mathbf{c}, \mathbf{d}, q)\leftarrow \textbf{UMTrans1}(\mathbf{c},\mathbf{d})$ 
\STATE \quad \textbf{else}
\STATE \qquad $(\mathbf{c}, \mathbf{d}, q)\leftarrow \textbf{UMTrans2}(\mathbf{c},\mathbf{d})$ 
\STATE \quad \textbf{if} ($\#\mathbf{d}\leq s$), $q := q-1, \mathbf{d} := \mathbf{c}+\mathbf{d}$\label{hvec:stephalf1}
\STATE \quad $\mathbf{M} := \begin{pmatrix}q&1\\ 1&0\end{pmatrix}\mathbf{M}$
\STATE \textbf{end while}
\STATE \textbf{if} $\underline{\#}(\mathbf{c},\mathbf{d})>s+2$, \textbf{then}\label{Hvec:stepif2start}
\STATE \quad  $n^\prime:=\#(\mathbf{c},\mathbf{d})$, $n_{\ell 2}:=2s-n^\prime+1$
\STATE \quad $a_{1}^\prime:=\lfloor c_1/2^{n_{\ell 2}}\rfloor$, $a_{1}^{\prime\prime}:=c_1-2^{n_{\ell 2}}a_{1}^\prime$, $a_{2}^\prime:=\lfloor c_2/2^{n_{\ell 2}}\rfloor$, $a_{2}^{\prime\prime}:=c_2-2^{n_{\ell 2}}a_{2}^\prime$
\STATE \quad $b_{1}^\prime:=\lfloor d_1/2^{n_{\ell 2}}\rfloor $, $b_{1}^{\prime\prime}:=d_1-2^{n_{\ell 2}}b_{1}^\prime$, $b_{2}^\prime:=\lfloor d_2/2^{n_{\ell 2}}\rfloor$, $b_{2}^{\prime\prime}:=d_2-2^{n_{\ell 2}}b_{2}^\prime$
\STATE \quad $\mathbf{a}^\prime :=(a_{1}^\prime,a_{2}^\prime)^{T},\mathbf{b}^\prime :=(b_{1}^\prime,b_{2}^\prime)^{T}$, $\mathbf{a}^{\prime\prime}:=(a_{1}^{\prime\prime},a_{2}^{\prime\prime})^{T},\mathbf{b}^{\prime\prime} :=(b_{1}^{\prime\prime},b_{2}^{\prime\prime})^{T}$
\STATE \quad $(\mathbf{c}^\prime,\mathbf{d}^\prime,\mathbf{M}_2^\prime)\leftarrow \textbf{HVec}(\mathbf{a}^\prime, \mathbf{b}^\prime)$
\STATE \quad $[\mathbf{c}\ \mathbf{d}]:= 2^{n_{\ell 2}}[\mathbf{c}^\prime\  \mathbf{d}^\prime]+[\mathbf{a}^{\prime\prime}\ \mathbf{b}^{\prime\prime}] (\mathbf{M}_2^\prime)^{-1}$
\STATE \quad $\mathbf{M}:=\mathbf{M}_2^\prime\mathbf{M}$
\STATE \textbf{end if}\label{Hvec:stepif2end}
\STATE \textbf{while}   $\underline{\#}(\mathbf{c}-\mathbf{d}, \mathbf{c}+\mathbf{d})>s$ \textbf{do}\label{hvec:stepwhile2}
\STATE \quad \textbf{if} ($c_1c_2d_1d_2\leq0 \wedge (|c_1|-|c_2|)(|d_1|-|d_2|)\leq 0$), \textbf{return} $\mathbf{a},\mathbf{b},\mathbf{M}$
\STATE \quad \textbf{if} ($c_1c_2d_1d_2>0$),
\STATE \qquad $(\mathbf{c}, \mathbf{d}, q)\leftarrow \textbf{UMTrans1}(\mathbf{c},\mathbf{d})$ 
\STATE \quad \textbf{else}
\STATE \qquad $(\mathbf{c}, \mathbf{d}, q)\leftarrow \textbf{UMTrans2}(\mathbf{c},\mathbf{d})$ 
\STATE \quad \textbf{if} ($\#\mathbf{d}\leq s$), $q := q-1, \mathbf{d} := \mathbf{c}+\mathbf{d}$\label{hvec:stephalf2}
\STATE \quad $\mathbf{M} := \begin{pmatrix}q&1\\ 1&0\end{pmatrix}\mathbf{M}$
\STATE \textbf{end while}
\STATE \textbf{return} $\mathbf{c},\mathbf{d},\mathbf{M}$\\
\end{algorithmic}
\end{algorithm}

\begin{theorem} \label{thm:HVec}
For any input basis $\mathbf{B}=[\mathbf{a}\ \mathbf{b}]
\in\mathbb{Z}^{2\times 2}$ with $\#(\mathbf{a},\mathbf{b})=n$, \textbf{Algorithm \ref{alg:HVec}} will output a new basis $\mathbf{B}^\prime=[\mathbf{c}\ \mathbf{d}]
$ 
and an unimodular matrix $\mathbf{M}$ such that $[\mathbf{a}\ \mathbf{b}]=[\mathbf{c}\ \mathbf{d}]\mathbf{M}$ in time $T(n)=O\left(M(n)\log n\right)$, 
where $M(n)$ refers to the time complexity of multiplying two  $n$-bit integers. Additionally, the basis $[\mathbf{c}\ \mathbf{d}]$ satisfies one of the following two conditions:
\begin{enumerate}
  \item $[\mathbf{c},\mathbf{d}]$ is reduced, or
  \item $\underline{\#}(\mathbf{c},\mathbf{d})> \lfloor\frac{n}{2}\rfloor+1$, and $\underline{\#}(\mathbf{c}-\mathbf{d}, \mathbf{c}+\mathbf{d})\leq \lfloor\frac{n}{2}\rfloor+1$.   
\end{enumerate}
\end{theorem}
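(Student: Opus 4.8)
The plan is to establish the three claims of the theorem --- the identity $[\mathbf{a}\ \mathbf{b}]=[\mathbf{c}\ \mathbf{d}]\mathbf{M}$ with $\mathbf{M}$ unimodular, the structural alternative on $[\mathbf{c}\ \mathbf{d}]$, and the bound $T(n)=O(M(n)\log n)$ --- together by strong induction on $n=\#(\mathbf{a},\mathbf{b})$. The base cases are when the reduced-basis test on line~2 already holds (then $\mathbf{M}=\mathbf{I}$ and alternative~(1) is immediate) and when $n$ is below the recursion threshold, so that neither \textbf{if} block (lines~\ref{Hvec:stepif1start}--\ref{Hvec:stepif1end} and lines~\ref{Hvec:stepif2start}--\ref{Hvec:stepif2end}) is entered; in the latter case \textbf{HVec} performs only a bounded number of \textbf{UMTrans1}/\textbf{UMTrans2} steps, and the analysis reduces to the one already done for \textbf{Algorithm~\ref{alg:CrossEuc}} via \textbf{Lemma~\ref{lem:key1}} and \textbf{Lemma~\ref{lem:key2}}. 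For the unimodular relation I would argue by structural induction on the run: $\mathbf{M}$ starts at $\mathbf{I}$ and is only ever left-multiplied either by $\begin{pmatrix}q&1\\1&0\end{pmatrix}$ --- which, together with the optional correction on line~\ref{hvec:stephalf1}/\ref{hvec:stephalf2}, is exactly the column move $[\mathbf{c}\ \mathbf{d}]\mapsto[\mathbf{d}\ \mathbf{c}-q\mathbf{d}]$ performed by \textbf{UMTrans1}/\textbf{UMTrans2} --- or by a matrix $\mathbf{M}_i^\prime$ returned by a recursive call, which is unimodular by the inductive hypothesis; hence $\mathbf{M}$ stays unimodular. That the invariant $[\mathbf{a}\ \mathbf{b}]=[\mathbf{c}\ \mathbf{d}]\mathbf{M}$ survives the two reconstruction lines (line~\ref{Hvec:stepcd1} and its analogue) is precisely the content of equations~\eqref{eq:keyobser1}--\eqref{eq:keyobser3}.

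The heart of the argument is the size invariant, which I would handle by an HGCD-style look-ahead lemma in vectorial form: if $[\mathbf{a}\ \mathbf{b}]=2^{n_\ell}[\mathbf{a}^\prime\ \mathbf{b}^\prime]+[\mathbf{a}^{\prime\prime}\ \mathbf{b}^{\prime\prime}]$ with $n_\ell$ the chosen low-bit length, then the sequence of quotients $q$ generated by iterating \textbf{UMTrans1}/\textbf{UMTrans2} on $[\mathbf{a}\ \mathbf{b}]$ agrees with the one generated on the truncated pair $[\mathbf{a}^\prime\ \mathbf{b}^\prime]$ for as long as the running vectors stay safely above bit-size $n_\ell$, up to a bounded discrepancy at the tail. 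Granting this, the first \textbf{if} block transplants the reduction of the top $\lceil n/2\rceil$ bits onto the full vectors, so after line~\ref{Hvec:stepcd1} the pair $[\mathbf{c}\ \mathbf{d}]$ already has $\#(\mathbf{c},\mathbf{d})$ only $O(1)$ above $\lfloor 3n/4\rfloor$; the first \textbf{while} loop then needs $O(1)$ genuine reduction steps to reach the regime $\#(\mathbf{c},\mathbf{d})\le\lfloor 3n/4\rfloor+1$. The second \textbf{if}/\textbf{while} pair repeats the argument one scale down, driving $\underline{\#}(\mathbf{c}-\mathbf{d},\mathbf{c}+\mathbf{d})$ down to at most $\lfloor n/2\rfloor+1$ while keeping $\underline{\#}(\mathbf{c},\mathbf{d})>\lfloor n/2\rfloor+1$ --- this is exactly alternative~(2); if the test~\eqref{eq:con} fires at any point inside a loop we exit with alternative~(1). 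The corrections on lines~\ref{hvec:stephalf1} and~\ref{hvec:stephalf2} are what guarantee that a step which would push a vector below the target size $s$ is replaced by the ``half'' step $\mathbf{d}\mapsto\mathbf{c}+\mathbf{d}$, so all running bit-sizes remain in $(s,n]$; I would check that this replacement is consistent with the sign/ordering bookkeeping embedded in \textbf{UMTrans1}/\textbf{UMTrans2}.

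For the complexity, the recursive skeleton gives $T(n)\le 2\,T(\lfloor n/2\rfloor+O(1))+O(M(n))$: each recursive call is on vectors of bit-size $\lceil n/2\rceil+O(1)$; the truncations, the inversions $(\mathbf{M}_i^\prime)^{-1}$, the $2\times2$ products $\mathbf{M}_i^\prime\mathbf{M}$, and the reconstructions $2^{n_\ell}[\mathbf{c}^\prime\ \mathbf{d}^\prime]+[\mathbf{a}^{\prime\prime}\ \mathbf{b}^{\prime\prime}](\mathbf{M}_i^\prime)^{-1}$ each cost $O(M(n))$; and by the look-ahead lemma each \textbf{while} loop runs $O(1)$ times, every iteration being a division/multiplication of cost $O(M(n))$. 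Since $M(n)$ is super-linear, unrolling this recurrence yields $T(n)=O(M(n)\log n)$.

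I expect the main obstacle to be precisely the vectorial look-ahead lemma in this $\ell_\infty$/crossed-Euclidean setting. Unlike the classical integer HGCD, the reduction step here alternates between the first and second coordinate according to which of $|a_1|,|a_2|$ dominates, and the branch taken is steered by the signs through $a_1a_2b_1b_2$ and by $(|a_1|-|a_2|)(|b_1|-|b_2|)$; I must therefore show that discarding the low-order bits changes neither the branch nor the quotient until the last few steps, and that the two separate \textbf{while} loops, combined with the line~\ref{hvec:stephalf1}/\ref{hvec:stephalf2} corrections, absorb exactly that bounded tail discrepancy. Pinning the ``$O(1)$ extra steps'' bound down so that it is uniform over all inputs and over all recursion levels is the delicate point that makes the complexity recurrence close.
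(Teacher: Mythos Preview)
Your overall architecture --- induction on $n$, the invariant $[\mathbf{a}\ \mathbf{b}]=[\mathbf{c}\ \mathbf{d}]\mathbf{M}$, $O(1)$ iterations of each \textbf{while} loop, and the recurrence $T(n)=2T(n/2)+O(M(n))$ --- matches the paper. The difference is in the key technical device you plan to use to justify the ``$O(1)$ fix-up steps'' claim.

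You propose a vectorial \emph{look-ahead lemma}: that the quotient sequence produced by \textbf{UMTrans1}/\textbf{UMTrans2} on the truncated pair $[\mathbf{a}'\ \mathbf{b}']$ agrees with that on the full pair until the running vectors approach bit-size $n_\ell$. The paper does \emph{not} prove or use any such quotient-agreement statement. Instead it bounds the returned transformation matrix directly: from the Euclidean-style telescoping $\#\mathbf{a}^{(0)}=\sum_i \#q^{(i)}+\#\mathbf{a}^{(m)}$ one gets
\[
\#\mathbf{M}'\ <\ \#(\mathbf{a}',\mathbf{b}')-\underline{\#}(\mathbf{c}',\mathbf{d}')+1,
\]
and then plugs this into the reconstruction $[\mathbf{c}\ \mathbf{d}]=2^{n_\ell}[\mathbf{c}'\ \mathbf{d}']+[\mathbf{a}''\ \mathbf{b}''](\mathbf{M}')^{-1}$ to bound $\#(\mathbf{c},\mathbf{d})$ and $\underline{\#}(\mathbf{c}\pm\mathbf{d})$ explicitly (at most $\lfloor 3n/4\rfloor+O(1)$ after the first block, at most $s+O(1)$ after the second). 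Combined with the one-step size drop from \textbf{Lemma~\ref{lem:key1}}/\textbf{Lemma~\ref{lem:key2}} (namely $\#\mathbf{d}^{\mathrm{new}}\le\underline{\#}(\mathbf{c}\pm\mathbf{d})$ and $\#\mathbf{d}^{\mathrm{new}}<\#\mathbf{c}-1$), this gives a concrete constant bound on the number of \textbf{while} iterations without ever comparing quotient sequences.

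The practical payoff is that the paper's route completely sidesteps the obstacle you yourself flag: because the matrix-size bound depends only on how much the bit-length dropped during the recursive call, it is indifferent to whether the truncated pair took the same branch (\textbf{UMTrans1} vs.\ \textbf{UMTrans2}, or the $q\mapsto q+1$ adjustment) as the full pair would have. Your look-ahead lemma, by contrast, would have to control exactly those sign- and magnitude-driven branch decisions under perturbation of the low bits, which is genuinely delicate in this crossed-coordinate setting. Your plan is not wrong, but it commits you to the hardest version of the argument; the paper's $\#\mathbf{M}'$ bound is the simpler and more robust substitute.
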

\begin{proof}
First, we argue that the algorithm will inevitably terminate. Assume 
\[
(\mathbf{c}^{new},\ \mathbf{d}^{new},q)\gets \textbf{UMTrans1}(\mathbf{c}\ \mathbf{d}) (\textrm{resp.}\ \textbf{UMTrans2}(\mathbf{c}\ \mathbf{d})).
\]
Then, by \textbf{Lemma \ref{lem:key1}} and \textbf{Lemma \ref{lem:key2}}, if $[\mathbf{c}^{new},\ \mathbf{d}^{new}]$ is not reduced, the following inequalities hold except for a special case where  $c_1c_2d_1d_2>0$ and $c_1^{new}c_2^{new}d_1^{new}d_2^{new}\leq 0$, which occurs only once during execution and can be ignored:
\[(\mathbf{c}^{new}=\mathbf{d})\wedge(\|\mathbf{d}^{new}\|\le\min\{\|\mathbf{c}-\mathbf{d}\|,\|\mathbf{c}+\mathbf{d}\|\})\wedge(\|\mathbf{d}^{new}\|< \|\mathbf{c}^{new}\|)\wedge (\|\mathbf{d}^{new}\|<\frac{1}{2}\|\mathbf{c}\|),\] 
which implies that 
\begin{align}\label{eq:dnew}
  \#(\mathbf{c}^{(new)},\mathbf{d}^{(new)})<\#(\mathbf{c},\mathbf{d}), \ \#\mathbf{d}^{(new)}\le \underline{\#}(\mathbf{c}+\mathbf{d},\mathbf{c}-\mathbf{d}),\  \#\mathbf{d}^{(new)}<\#\mathbf{c}-1.
\end{align}  Meanwhile, steps \ref{hvec:stephalf1} and \ref{hvec:stephalf2} ensure that $\underline{\#}(\mathbf{c}^{(new)},\mathbf{d}^{(new)})>s=\lfloor\frac{n}{2}\rfloor+1$ 
Consequently, the \textbf{While} conditions in Step \ref{hvec:stepwhile1} and Step \ref{hvec:stepwhile2} will eventually be violated. That is, \textbf{Algorithm \ref{alg:HVec}} will terminate. Additionally,  by equations (\ref{eq:keyobser1})-(\ref{eq:keyobser3}) and the properties of the unimodular transformations \textbf{UMTrans1} and \textbf{UMTrans2}, the output $(\mathbf{c}, \mathbf{d},\mathbf{M})$ satisfies 
$[\mathbf{a}\ \mathbf{b}]=[\mathbf{c}\ \mathbf{d}]\mathbf{M}$ and one of the following conditions holds: 

(1) $[\mathbf{c}\ \mathbf{d}]$ is reduced, or

(2) $\underline{\#}(\mathbf{c},\mathbf{d})>\lfloor\frac{n}{2}\rfloor+1$ and $\underline{\#}(\mathbf{c}-\mathbf{d}, \mathbf{c}+\mathbf{d})\leq \lfloor\frac{n}{2}\rfloor+1$.

Now, we estimate the time complexity.  We need to estimate the bit size of the numbers and the time complexity of each \textbf{while} loop during execution. According to our recursive design, the underlying operations are the unimodular transformations \textbf{UMTrans1} and \textbf{UMTrans2}. For each recursive invocation
$(\mathbf{c}^\prime, \mathbf{d}^\prime, \mathbf{M}^\prime)\gets\textbf{HVec}(\mathbf{a}^\prime\ \mathbf{b}^\prime)$, we adopt the same notations as in the proof of \textbf{Theorem \ref{thm:Par-Euc}} for brevity. Assume that $\|\mathbf{a}^\prime\|\ge\|\mathbf{b}^\prime\|$ and the lattice basis sequence is given by 
\begin{align*}
    \mathbf{B}^{(0)}=[\mathbf{a}^{(0)}\ \mathbf{b}^{(0)}]=[\mathbf{a}^\prime\ \mathbf{b}^\prime]\to \mathbf{B}^{(1)}=[\mathbf{a}^{(1)}\ \mathbf{b}^{(1)}]\to \cdots\to \mathbf{B}^{(m)}=[\mathbf{a}^{(m)}\ \mathbf{b}^{(m)}]=[\mathbf{c}^\prime\ \mathbf{d}^\prime],
\end{align*}
where, for $i=1,\cdots,m$, 
\begin{align} \label{eq:Eucdiv}
\mathbf{a}^{(i)}=\mathbf{b}^{(i-1)}, \mathbf{a}^{(i-1)}=q^{(i)}\mathbf{b}^{(i-1)}+\mathbf{b}^{(i)}=q^{(i)}\mathbf{a}^{(i)}+\mathbf{a}^{(i+1)}
\end{align} and $q^{(i)}$ is defined in equation (\ref{eq:q}). Then $[\mathbf{a}^\prime\ \mathbf{b}^\prime]=[\mathbf{c}^\prime\ \mathbf{d}^\prime]\mathbf{M}^\prime$ with
\begin{align}\label{eq:M}
    \mathbf{M}^\prime=\begin{pmatrix}q^{(m)} & 1\\ 1 & 0\end{pmatrix}\begin{pmatrix}q^{(m-1)} & 1\\ 1 & 0\end{pmatrix}\cdot\cdots\cdot\begin{pmatrix}q^{(1)} & 1\\ 1 & 0\end{pmatrix}.
\end{align} 
Furthermore, by \textbf{Lemma \ref{lem:key1}} and \textbf{Lemma \ref{lem:key2}}, the following inequalities hold for almost all \( i \in \{1, \dots, m\} \), with at most one exception, which does not affect the proof and can be ignored:
\[
\|\mathbf{a}^{(i+1)}\| = \|\mathbf{b}^{(i)}\| < \frac{1}{2} \|\mathbf{a}^{(i-1)}\| \quad \text{and} \quad \|\mathbf{b}^{(i)}\| < \|\mathbf{a}^{(i)}\|.
\]
Then, the equation (\ref{eq:Eucdiv}) indicate that $\#\mathbf{a}^{(i-1)}=\# q^{(i)}+\#\mathbf{a}^{(i)}$.
While, the equation (\ref{eq:M}) implies that $\#\mathbf{M}^\prime<\sum_{i=1}^m\# q^{(i)}+1$.
Hence, \begin{align}\label{eq:bit}
    &\#\mathbf{a}^{(0)}=\# q^{(1)}+\#\mathbf{a}^{(1)}=\sum_{i=1}^m\# q^{(i)}+\# \mathbf{a}^{(i)}>\#\mathbf{M}^\prime-1+\#\mathbf{c}^\prime\nonumber\\
    &\implies \#\mathbf{M}^\prime<\#\mathbf{a}^{(0)}-\#\mathbf{c}^\prime+1\le\#(\mathbf{a}^\prime,\mathbf{b}^\prime)-\underline{\#}(\mathbf{c}^\prime,\mathbf{d}^\prime)+1.
\end{align}
Based on the above observation, in step \ref{Hvec:stepif1start}-step \ref{Hvec:stepif1end},
we have 
\begin{align*}
&\#(\mathbf{c}^\prime,\mathbf{d}^\prime)\le n-n_{\ell 1} \wedge \underline{\#}(\mathbf{c}^\prime,\mathbf{d}^\prime)>\lfloor\frac{n-n_{\ell 1}}{2}\rfloor+1 \wedge \underline{\#}(\mathbf{c}^\prime+\mathbf{d}^\prime,\mathbf{c}^\prime-\mathbf{d}^\prime)\le \lfloor\frac{n-n_{\ell 1}}{2}\rfloor+1 \wedge\\
& \#\mathbf{M}_1^\prime\le\#(\mathbf{a}^\prime,\mathbf{b}^\prime)-\underline{\#}(\mathbf{c}^\prime,\mathbf{d}^\prime)+1<n-n_{\ell 1}-(\lfloor\frac{n-n_{\ell 1}}{2}\rfloor+1)+1.
\end{align*}
Thus, the equation $[\mathbf{c}\ \mathbf{d}]= 2^{n_{\ell 1}}[\mathbf{c}^\prime\  \mathbf{d}^\prime]+[\mathbf{a}^{\prime\prime}\ \mathbf{b}^{\prime\prime}](\mathbf{M}_1^\prime)^{-1}$ in step \ref{Hvec:stepcd1} implies that
\begin{align*}
  &\#(\mathbf{c},\mathbf{d})\le \max\{n_{\ell 1}+\#(\mathbf{c}^\prime,\mathbf{d}^\prime), n_{\ell 1}+\#\mathbf{M}_1^\prime\}+1=n+1,\\
 &\underline{\#}(\mathbf{c}\pm\mathbf{d})=\underline{\#}(2^{n_{\ell 1}}(\mathbf{c}^\prime+\mathbf{d}^\prime)+(\mathbf{a}^{\prime\prime}+\mathbf{b}^{\prime\prime})(\mathbf{M}_1^\prime)^{-1}, 2^{n_{\ell 1}}(\mathbf{c}^\prime-\mathbf{d}^\prime)+(\mathbf{a}^{\prime\prime}-\mathbf{b}^{\prime\prime})(\mathbf{M}_1^\prime)^{-1})\\
 &=\min\{\# (2^{n_{\ell 1}}(\mathbf{c}^\prime+\mathbf{d}^\prime)+(\mathbf{a}^{\prime\prime}+\mathbf{b}^{\prime\prime})(\mathbf{M}_1^\prime)^{-1}), \#(2^{n_{\ell 1}}(\mathbf{c}^\prime-\mathbf{d}^\prime)+(\mathbf{a}^{\prime\prime}-\mathbf{b}^{\prime\prime})(\mathbf{M}_1^\prime)^{-1}))\}\\
 &=\min\{\max\{n_{\ell 1}+\#(\mathbf{c}^\prime+\mathbf{d}^\prime),\#(\mathbf{a}^{\prime\prime}+\mathbf{b}^{\prime\prime})+\#\mathbf{M}_1^\prime\}+1, \max\{n_{\ell 1}+\#(\mathbf{c}^\prime-\mathbf{d}^\prime),\#(\mathbf{a}^{\prime\prime}-\mathbf{b}^{\prime\prime})+\#\mathbf{M}_1^\prime\}+1\}\\
 &\le \max\{n_{\ell 1}+\underline{\#}(\mathbf{c}^\prime\pm \mathbf{d}^\prime), \max\{\#\mathbf{a}^{\prime\prime},\#\mathbf{b}^{\prime\prime}\}+1+\#\mathbf{M}_1^\prime\}+1\\
 &\le \max\{n_{\ell 1}+\lfloor\frac{n-n_{\ell 1}}{2}\rfloor+1, n_{\ell 1}+1+n-n_{\ell 1}-(\lfloor\frac{n-n_{\ell 1}}{2}\rfloor+1)+1\}\\
 &=\max\{n_{\ell 1}+\lfloor\frac{n_{\ell 1}}{2}\rfloor+1, n-\lfloor\frac{n_{\ell 1}}{2}\rfloor+1\}<\frac{3n}{4}+3<\lfloor\frac{3n}{4}\rfloor+4.
\end{align*}
By the equation (\ref{eq:dnew}), the bit length of the new vector generated by the unimodular matrix transformation \textbf{UMTrans1} (resp. \textbf{UMTrans2}) is less that $\min\{\underline{\#}(\mathbf{c}+\mathbf{d},\mathbf{c}-\mathbf{d}), \#\mathbf{c}-1\}$. Therefore, the \textbf{while} loop in step \ref{hvec:stepwhile1}  will terminate after at most $8$ iterations and thus the time complexity is $O(M(n))$. Similarly, in step \ref{Hvec:stepif2start}-step \ref{Hvec:stepif2end}, we have 
\begin{align*}
&\#\mathbf{M}_2^\prime \le\#(\mathbf{a}^\prime,\mathbf{b}^\prime)-\underline{\#}(\mathbf{c}^\prime,\mathbf{d}^\prime)+1<n^\prime-n_{\ell 2}-(\lfloor\frac{n^\prime-n_{\ell 2}}{2}\rfloor+1)+1= n^\prime-s,\\
  &\#(\mathbf{c},\mathbf{d})\le \max\{n_{\ell 2}+\#(\mathbf{c}^\prime,\mathbf{d}^\prime), n_{\ell 2}+\#\mathbf{M}_2^\prime\}+1=n^\prime+1,\\
 &\underline{\#}(\mathbf{c}\pm\mathbf{d})=\underline{\#}(2^{n_{\ell 2}}(\mathbf{c}^\prime+\mathbf{d}^\prime)+(\mathbf{a}^{\prime\prime}+\mathbf{b}^{\prime\prime})(\mathbf{M}_2^\prime)^{-1}, 2^{n_{\ell 2}}(\mathbf{c}^\prime-\mathbf{d}^\prime)+(\mathbf{a}^{\prime\prime}-\mathbf{b}^{\prime\prime})(\mathbf{M}_2^\prime)^{-1})<s+2.
\end{align*}
By the equation (\ref{eq:dnew}), the bit length of the new vector generated by the unimodular matrix transformation \textbf{UMTrans1} (resp. \textbf{UMTrans2}) is less that $\min\{\underline{\#}(\mathbf{c}+\mathbf{d},\mathbf{c}-\mathbf{d}), \#\mathbf{c}-1\}$. Therefore, the \textbf{while} loop in step \ref{hvec:stepwhile2}  will terminate after at most $4$ iterations  and thus the time complexity is $O(M(n))$. Overall, the total time complexity is 
\begin{align*}
    T(n)=2T\left(\frac{n}{2}\right)+O(M(n))=\left(2^2 T\left(\frac{n}{2^2}\right)+O\left(2\cdot M\left(
    \frac{n}{2}\right)\right)\right)+O(M(n))=O\left(\sum_{i=0}^{\log n}2^i M\left(\frac{n}{2^i}\right)\right).
\end{align*}
Since $M\left(\frac{n}{2^i}\right)\le \frac{1}{2^i}M\left(n\right)$ for $i=0,\cdots,\log n$, 
\begin{align*}
T(n)=O\left(\sum_{i=0}^{\log n}2^i M\left(\frac{n}{2^i}\right)\right)=O\left(M(n)\sum_{i=0}^{\log n}1\right)=O(M(n)\log n).
\end{align*}
\end{proof}

\subsection{\textbf{HVecSBP} algorithm and its complexity analysis}

By cyclically invoking the proposed \textbf{HVec} algorithm, the reduced basis can ultimately be obtained. This design, referred to as \textbf{HVecSBP}, is detailed in \textbf{Algorithm \ref{alg:HGCD-Par-Euc2}}. Clearly, by \textbf{Theorem \ref{thm:HVec}}, the \textbf{HVecSBP} algorithm will output a reduced basis with a worst-case time complexity of \begin{align*}
  O\left(T(n)+T\left(\frac{n}{2}\right)+\cdots+T\left(\frac{n}{2^{\log n}}\right) \right)=O\left(M(n)\log n +M\left(\frac{n}{2}\right) \log \frac{n}{2}+\cdots+ M\left(\frac{n}{2^{\log n}}\right) \log \frac{n}{2^{\log n}}\right)\\
  =O\left(M(n)\log n +\frac{1}{2}M\left(n\right) \log n+\cdots+ \frac{1}{2^{\log n}}M\left(n\right) \log n\right)=O(M(n)\log n)  
\end{align*}

\begin{algorithm}
\caption{\textbf{HVecSBP}($\mathbf{a}, \mathbf{b}$)}   \label{alg:HGCD-Par-Euc2}
\begin{algorithmic}[1]
\REQUIRE  A base $[\mathbf{a}\ \mathbf{b}]$ with $\mathbf{a}=(a_1\ a_2)^T$, $\mathbf{b}=(b_1\ b_2)^T$
\ENSURE  A reduced base $[\mathbf{a}\ \mathbf{b}]$
\STATE $\mathbf{M}=\mathbf{I}$
\STATE \textbf{While}(1)
\STATE \quad \textbf{if} ($a_1a_2b_1b_2\leq0 \wedge (|a_1|-|a_2|)(|b_1|-|b_2|)\leq 0$) \textbf{break}
\STATE \quad \textbf{if} $\|\mathbf{a}\|< \|\mathbf{b}\|$, Swap($\mathbf{a}, \mathbf{b}$)
\STATE \quad $n := \#\mathbf{a}$
\STATE \quad \textbf{if} $\#\mathbf{b}\leq \left\lfloor\frac{n}{2}\right\rfloor +1 $
\STATE \qquad \textbf{if} ($|a_1|\geq|a_2| \wedge b_1\ne 0$), $q :=\left\lfloor\frac{a_1}{b_1}\right\rfloor_o$
\STATE \qquad \textbf{else} $q :=\left\lfloor\frac{a_2}{b_2}\right\rfloor_o$
\STATE \qquad ($\mathbf{a},\mathbf{b}$) := ($\mathbf{b},\mathbf{a}-q\mathbf{b}$)
\STATE \qquad \textbf{continue}
\STATE \quad $(\mathbf{a},\mathbf{b},\mathbf{M})\leftarrow \textbf{HVec}(\mathbf{a}, \mathbf{b})$
\STATE \quad \textbf{if}  $\#(\mathbf{a}+\mathbf{b})\leq\lfloor\frac{n}{2}\rfloor+1$, ($\mathbf{a},\mathbf{b}$) := ($\mathbf{b},\mathbf{a}+\mathbf{b}$)
\STATE \quad \textbf{else if} $\#(\mathbf{a}-\mathbf{b})\leq\lfloor\frac{n}{2}\rfloor+1$, ($\mathbf{a},\mathbf{b}$) := ($\mathbf{b},\mathbf{a}-\mathbf{b}$)
\STATE \textbf{If} (($a_1b_1=0 \wedge sgn(a_2)\ne sgn(b_2))$ or ($a_2b_2=0 \wedge sgn(a_1)\ne sng(b_1)$)), $\mathbf{b} := -\mathbf{b}$
\STATE  \textbf{If} $\|\mathbf{a}\|> \|\mathbf{b}\|$, Swap($\mathbf{a}, \mathbf{b}$)
\STATE  \textbf{If} $\mathbf{a} = [0,0]$, \textbf{Return}[$\mathbf{a}, \mathbf{b}$]
\STATE  \textbf{Else}
\STATE \quad \textbf{if} ($a_2b_2\leq0$)
\STATE  \quad \quad $\mathbf{b} :=\min\left\{\mathbf{b}-\left\lfloor\frac{|b_1|-|b_2|}{|a_1|+|a_2|}\right\rfloor \mathbf{a}, \mathbf{b}-\left\lceil\frac{|b_1|-|b_2|}{|a_1|+|a_2|}\right\rceil \mathbf{a}\right\}$
\STATE \quad \textbf{else}
\STATE  \quad \quad $\mathbf{b} :=\min\left\{\mathbf{b}-\left\lfloor\frac{|b_2|-|b_1|}{|a_1|+|a_2|}\right\rfloor \mathbf{a}, \mathbf{b}-\left\lceil\frac{|b_2|-|b_1|}{|a_1|+|a_2|}\right\rceil \mathbf{a}\right\}$
\STATE  \quad \textbf{Return} $[\mathbf{a}\ \mathbf{b}]$
\end{algorithmic}
\end{algorithm}

\section{Practical Performance Evaluation}
\subsection{Experimental Methodology}
 
To comprehensively evaluate the practical efficiency of our proposed algorithm \textbf{CrossEuc} and its optimized variant \textbf{HVecSBP} against existing methods, we will conduct extensive experiments considering both metric types and lattice basis forms. In terms of the metric, we compare the practical performance of various algorithms under both the Euclidean and Minkowski metrics. Regarding the input basis, it is evident that the performance of existing algorithms is highly dependent on the structure of the input lattice basis. We focus on two common types: the \emph{HNF-form} basis \(\mathbf{B} = [\mathbf{a}\ \mathbf{b}] = \begin{pmatrix} a & b \\ 0 & c \end{pmatrix}\), where \(a > b > |c|> 0\), and the \emph{general-form} basis \(\mathbf{B} = [\mathbf{a}\ \mathbf{b}] = \begin{pmatrix} a_1 & b_1 \\ a_2 & b_2 \end{pmatrix}\). For general-form bases, the degree of linear dependence between the two basis vectors clearly affects algorithmic performance. 
We define the linear dependency measure \(\Delta(\mathbf{a}, \mathbf{b})\) as:  
\[
\Delta(\mathbf{a}, \mathbf{b}) = \left| \frac{a_1}{b_1} - \frac{a_2}{b_2} \right|.
\]  
It is clear that a smaller \(\Delta\) value indicates a higher degree of linear dependency between vectors \(\mathbf{a}\) and \(\mathbf{b}\).

(1)  Evaluation for $\ell_2$-shortest basis. 
We compare five algorithms: our proposed fundamental reduction algorithm \textbf{CrossEuc},  Lagrange reduction algorithm \textbf{LagRed}  under the \(\ell_2\) metric \cite{Lag73, MGbook} (see Appendix \ref{App:LagRed}), Yap's fundamental algorithm \textbf{CRS} \cite{yap1992fast} (see Appendix \ref{App:CRS}),  our optimized algorithm \textbf{HVecSBP} and Yap's optimized algorithm (\textbf{HalfGaussianSBP}) \cite{yap1992fast} (see Appendix \ref{App:HalfGaussian}).  
\begin{itemize}
\item  For \emph{HNF-form} bases, we assessed the practical performance of the above-mentioned algorithms across different bit sizes of $c$.

\item For general-form bases, we assess the practical performance of the above-mentioned algorithms across different \(\Delta\) values.
\end{itemize}

(2) Evaluation for $\ell_\infty$-shortest basis.
\begin{itemize}
\item For \emph{HNF-form} bases, our algorithm \textbf{CrossEuc} is identical to the continued fraction-based algorithm (\textbf{ParEuc}) \cite{EIS01,CTJ22}. Thus, we only compare three algorithms: the optimized Lagrange algorithm under the \(\ell_\infty\) metric (\textbf{GolEuc}) \cite{Lag73,CTJ22}, the continued fraction-based algorithm (\textbf{ParEuc}) \cite{EIS01,CTJ22}, and our optimized algorithm (\textbf{HVecSBP}).

\item For general-form bases, we evaluate the practical performance of the following algorithms across different \(\Delta\) values.
\begin{itemize}
    \item \textbf{GolEuc}: The optimized Lagrange algorithm under the \(\ell_\infty\) metric presented in \cite{CTJ22}.  
\item \textbf{EEA-HNF-ParEuc}: An algorithm that first transforms the basis into HNF using the extended Euclidean algorithm (EEA) and then reduces it with the continued fraction-based algorithm (\textbf{ParEuc}).  Visually
\begin{align*} 
\begin{pmatrix} a_1 & b_1\\  a_2 & b_2 \end{pmatrix}\overset{\textbf{EEA}}{\longrightarrow} \begin{pmatrix} a & b\\ 0 & c \end{pmatrix} \overset{\textbf{ParEuc}}{\longrightarrow}\mbox{reduced\ basis}
\end{align*}

\item \textbf{CrossEuc}: Our proposed reduced algorithm without using \textbf{Hvec} .  

\item \textbf{HGCD-HNF-ParEuc}: Similar to \textbf{EEA-HNF-ParEuc}, but the basis is transformed into HNF using the HGCD-based extended Euclidean algorithm.   Visually
\begin{align*} 
\begin{pmatrix} a_1 & b_1\\  a_2 & b_2 \end{pmatrix}\overset{\textbf{HGCD}}{\longrightarrow}  \begin{pmatrix} a & b\\ 0 & c \end{pmatrix} \overset{\textbf{ParEuc}}{\longrightarrow} \mbox{reduced\ basis}
\end{align*}

\item \textbf{HGCD-HNF-HVecSBP}: Similar to \textbf{HGCD-HNF-ParEuc}, but the HNF is reduced using our proposed \textbf{HVecSBP} algorithm.   Visually
\begin{align*} 
\begin{pmatrix} a_1 & b_1\\  a_2 & b_2 \end{pmatrix}\overset{\textbf{HGCD}}{\longrightarrow}\begin{pmatrix} a & b\\ 0 & c \end{pmatrix} \overset{\textbf{HVecSBP}}{\longrightarrow}  \mbox{reduced\ basis}
\end{align*}
\item \textbf{HVecSBP}: Our proposed algorithm applied directly to the input basis without prior transformation to HNF.  
\end{itemize}

\end{itemize}

\subsection{Experimental Environment and  Result Analysis}
Our experiment are performed on a Ubuntu 22.04 machine with Intel Core i7-9750H, 2.6 GHz CPU and 16 GB RAM and implement the algorithm with C language. The detailed experimental results and their analysis are presented as follows.

\subsubsection{Experimental result and analysis for the $\ell_2$-shortest basis}
\begin{itemize} 
\item For the input lattice basis in Hermite Normal Form (HNF), Table \ref{tab:HNFl2} compares the running time of each algorithm as $n_2$ varies, where $n_1=2\times 10^5$ denotes the number of decimal digits in $a$ and $b$, and $n_2$ represents the number of decimal digits in $c$.  As shown in Table \ref{tab:HNFl2}, 
smaller values of $n_2$ lead to higher computational times for all algorithms. In addition, the following observations can be made:
\begin{itemize}
\item[(i)]  For the three fundamental reduction algorithms \textbf{CrossEuc}, \textbf{CRS}, and \textbf{LagRed}, our proposed \textbf{CrossEuc} significantly outperforms the others, with time consumption approximately only 0.19\% of \textbf{CRS} and 0.28\% of \textbf{LagRed}, achieving speedups of around $500\times$ and $350\times$, respectively. 
\item[(ii)] For the two optimized algorithms \textbf{HalfGaussianSBP} and \textbf{HVecSBP}, our optimized algorithm \textbf{HVecSBP} takes only about $7\%$ of the time used by Yap's \textbf{HalfGaussianSBP}, resulting in a speedup of approximately $14\times$.
\item[(iii)] For our proposed fundamental reduction algorithm \(\mathbf{CrossEuc}\) and its optimized variant \(\mathbf{HVecSBP}\), the runtime of \textbf{HVecSBP} constitutes \(18.8\%\)-\(38.1\%\) of \textbf{CrossEuc}'s as \(n_2\) decreases – achieving a \(5.3\) to \(2.6\)-fold efficiency gain. Remarkably, \(\mathbf{CrossEuc}\) itself outperforms Yap's \(\mathbf{HalfGaussianSBP}\), requiring only \(35.6\%\)-\(18.9\%\) of its runtime under diminishing \(n_2\).  This advantage is particularly pronounced for smaller-scale lattice bases (characterized by parameter \(n_1\)).

\end{itemize}

\begin{table}[ht!]
  \caption{Running time of various algorithms for the $\ell_2$-shortest basis with varying $n_2$ (in seconds)}
  \setlength{\tabcolsep}{3pt}
  \centering
  \small
  \begin{tabular}{ccccccc}
   \hline
   \textbf{length (dec) $n_1$} & \ \textbf{length (dec) $n_2$} & \ \textbf{CrossEuc}  & \ \textbf{CRS} &\ \textbf{LagRed} & \ \textbf{HalfGaussianSBP} & \ \textbf{HVecSBP} \\ \hline
   200000  &200000   & 0.000522 & 0.035561 & 0.034960  & 0.072230 & 0.001405 \\
   200000  &180000    & 0.215646 & 112.676702 & 77.524088  & 0.605316 & 0.040622 \\
   200000  &160000    & 0.394690 & 213.510331 & 146.752125  & 1.156759 & 0.078501 \\
   200000  &140000    & 0.557541 & 296.380581 & 204.626063 & 1.689438 & 0.109761 \\
   200000  &120000    & 0.694540 & 366.024648 & 255.231900 & 2.243587 & 0.152746 \\
   200000  &100000    & 0.803900 & 423.430425 & 296.717711 & 2.769396 & 0.197548 \\
   200000  &80000    & 0.902036 & 464.113459 & 323.928026 & 3.277750 & 0.222285 \\
   200000  &60000     & 0.952363 & 498.170153 & 346.013871  & 3.902740 & 0.262041 \\
   200000  &40000     & 1.014240 & 517.602416 & 358.009027  & 4.380847 & 0.312382 \\
   200000  &20000     & 1.037861 & 529.151738 & 365.802161 & 4.944157 & 0.350908 \\
   200000  &10000     & 1.044446 & 530.145559 & 370.650970  & 5.307488 & 0.366273 \\
   200000  &1000     & 1.046709 & 531.093756 & 371.334628  & 5.534359 & 0.394135 \\ 
   200000  &100       & 1.052189 & 531.633906 & 371.504885  & 5.561485 & 0.394579 \\
   200000  &1       & 1.055789 & 531.637407 & 371.524263  & 5.567000 & 0.402265 \\ \hline
  \end{tabular}\label{tab:HNFl2}
 \end{table}

\item  For the input lattice basis in general form, Table \ref{tab:noHNFl2} compares the time cost of each algorithm for a fixed size  $n=2\times 10^5$, as the difference \(\Delta = |a_1/b_1 - a_2/b_2| \approx 10^{-\delta}\) varies, where $n$ denotes the maximum number of decimal digits in $a_1, a_2, b_1$ and $b_2$. The last column, \(\kappa\), represents the proportion of differing terms in the continued fraction expansions of \(a_1/b_1\) and \(a_2/b_2\), counted from right to left, relative to the total number of terms in the expansions.
Specially, the first $\kappa=1$ indicates that only the last term of the continued fraction expansions of \(a_1/b_1\) and \(a_2/b_2\) differs.
Notably, as \(\Delta\) decreases, $\delta$ increases and \(\kappa\) decreases, indicating that \(a_1/b_1\) and \(a_2/b_2\) become closer to being equal. This further implies that the lattice basis vectors \(\mathbf{a} = (a_1\ a_2)^T\) and \(\mathbf{b} = (b_1\ b_2)^T\) approach linear dependence.  As shown in Table \ref{tab:noHNFl2}, the closer the two input lattice basis vectors are to being linearly dependent ($\delta$ increases), the higher the computational time required for all algorithms. Additionally, the following observations can be made:
\begin{itemize}
    \item[(i)]  For the three fundamental reduction algorithms—\textbf{CrossEuc}, \textbf{CRS}, and \textbf{LagRed}—our proposed algorithm, \textbf{CrossEuc}, demonstrates a significant performance advantage. Specifically, its time cost is approximately $0.18–0.20\%$ of that of \textbf{CRS}, indicating an average efficiency improvement of $520\times$. Compared to \textbf{LagRed}, \textbf{CrossEuc} consumes only about $0.26–0.31\%$ of the computation time, corresponding to an efficiency gain of around $355\times$.

   \item[(ii)] For the  two optimized algorithms  \textbf{HalfGaussianSBP} and \textbf{HVecSBP},   our proposed algorithm, \textbf{HVecSBP}, also demonstrates a significant performance advantage. Specifically, its time cost is approximately $5.9–7.7\%$ of that of \textbf{HalfGaussianSBP}, indicating an average efficiency improvement of $14.5\times$. 
   
   \item[(iii)]  For our proposed fundamental reduction algorithm \(\mathbf{CrossEuc}\) and its optimized variant \(\mathbf{HVecSBP}\), the runtime of the latter constitutes approximately \(39.5\%\)-\(19.9\%\) of the former as the linear dependence between input lattice basis vectors decreases – achieving a \(2.5\) to \(5\)-fold efficiency gain. Strikingly, \(\mathbf{CrossEuc}\) itself outperforms Yap's optimized algorithm \(\mathbf{HalfGaussianSBP}\), requiring only \(14.9\%\)-\(36.4\%\) of the latter's runtime under diminishing linear dependence. This advantage is particularly pronounced for smaller-scale lattice bases (characterized by parameter \(n\)).

\end{itemize}
\end{itemize}

	\begin{table}[htbp]
\captionsetup{justification=centering}
\caption
{Running time of various algorithms for the $\ell_2$-shortest basis with varying $\delta$ (in seconds)
}	
		\setlength{\tabcolsep}{1pt}
			\centering
		\small
	 \begin{tabular}{ccccccc}
   \hline
   \textbf{$\delta$}   & \ \textbf{CrossEuc} &\  \textbf{CRS} & \ \textbf{LagRed}& \ \textbf{HalfGaussianSBP} & \ \textbf{HVecSBP} & \ $\kappa$\\ \hline
   400000  & 2.187857   & 1061.482896   & 732.066229    & 14.621288  & 0.865833  & 1  \\
   350000       & 2.039634   & 1029.768712   & 675.253707  & 9.911299  & 0.705745  & 12.5\% \\
   300000        & 1.963530   & 953.469648  & 640.119371   & 8.779507  & 0.597472 &  25\% \\
   250000         & 1.760551   & 890.452219   &  595.714993    & 7.692156  & 0.530357  & 37.5\% \\
   200000          & 1.541461   & 820.747442   &  564.586658   & 6.102788  & 0.430760 & 50\%  \\
   150000           & 1.361687   & 705.880787   & 461.356443   & 4.899072  & 0.344573 & 62.5\%  \\
   100000           & 1.059967  & 573.571627   & 395.397606     & 3.676309  & 0.260802 & 75\%  \\
   50000            & 0.753551  & 416.365192  & 283.668019  & 2.415546 & 0.168508 & 87.5\%  \\
   20000            & 0.571839  & 306.532068   & 206.223716  & 1.718590  & 0.131812 & 95\% \\
   10000            & 0.488057   & 263.113941   & 178.243286   & 1.572602  & 0.103363 & 97.5\% \\
   2000            & 0.458032  & 227.192431  & 159.118865   & 1.255080  & 0.090067 & 99.5\% \\
   1000            & 0.446958 & 224.975630 & 158.739315    & 1.243584  & 0.087849 & 99.75\% \\
   200            & 0.439207  & 223.499143  & 153.864744 & 1.234583  & 0.085077 & 99.95\% \\
   100            & 0.432651  & 222.870675  & 153.261555  & 1.233634 & 0.084719 & 99.975\% \\
   20            & 0.423252  & 222.065486  & 152.929364   & 1.232835  & 0.084328 & 99.995\% \\
   0           & 0.422037  & 221.981643  & 152.823700    & 1.231388 & 0.084183 & 100\% \\
   \hline
  \end{tabular}\label{tab:noHNFl2}
	\end{table}

\subsubsection{Experimental result and analysis for the $\ell_\infty$-shortest basis}
\begin{itemize} 
\item For the input lattice basis in HNF,  TABLE \ref{tab:HNF} compares the time cost of each algorithms as the variance of $n_2$, where $n_1=10^6$ denotes the number of decimal digits in $a,b$, and $n_2$ represents the number of decimal digits in $c$. These results highlight \textbf{HVecSBP}'s superior performance in handling HNF-structured bases, particularly in scenarios with highly imbalanced parameter lengths (\(n_2 \ll n_1\)). Such scenarios are common in cryptographic applications, such as lattice-based attacks, where \textbf{HVecSBP} achieves at least $13.5\times$ efficiency improvement over previous approaches.

\begin{table}[ht!]
\caption{Running time of various algorithms for the $\ell_\infty$-shortest basis with varying $n_2$ (in seconds)}
\setlength{\tabcolsep}{3pt}
\centering
\small
\begin{tabular}{ccccccc}
\hline
\textbf{length (dec) $n_1$} & \ \textbf{length (dec) $n_2$} & \ \textbf{HVecSBP}  & \ \textbf{CrossEuc}/\textbf{ParEuc}  & \ \textbf{GolEuc} \\ \hline
1000000		&1000000   & 0.006217    & 0.002445     & 0.002459      \\
1000000		&800000    & 0.426119    & 10.279363    & 49.977750     \\
1000000		&600000    & 0.832923    & 18.195970    & 90.955213     \\ 
1000000		&400000    & 1.162091    & 26.268203    & 121.361453    \\ 
1000000		&200000    & 1.607177    & 27.132104    & 141.404999    \\
1000000		&100000    & 1.947272    & 28.337137    & 146.895788    \\ 
1000000		&80000     & 1.919498    & 28.082500    & 147.245142    \\ 
1000000		&60000     & 1.951394    & 28.259035    & 148.577950    \\
1000000		&40000     & 1.976183    & 28.483309    & 148.651564    \\
1000000		&20000     & 2.039936    & 28.457997    & 148.667026    \\ 
1000000		&10000     & 2.061487    & 29.026893    & 150.630187    \\ 
1000000		&5000      & 2.114340    & 29.114340    & 150.116029    \\ 
1000000		&1000      & 2.153284    & 29.424312    & 150.084888    \\ 
1000000		&100       & 2.164948    & 29.353622    & 150.895833    \\
1000000		&1         & 2.168530    & 29.583845    & 150.976715     \\ \hline
\end{tabular}\label{tab:HNF}
\end{table}

\item  For the input lattice basis in general form, Table \ref{tab:noHNF}  compares the time cost of each algorithm for a fixed size  $n=10^6$, as the difference \(\Delta = |a_1/b_1 - a_2/b_2| \approx 10^{-\delta}\) varies, where $n$ denotes the maximum number of decimal digits in $a_1, a_2, b_1$ and $b_2$. 
The meaning of the parameter $\kappa$ is same as that in Table \ref{tab:noHNFl2}.  
Notably, as $\Delta$ decreases, $\delta$ increases and $\kappa$ decreases, suggesting that $\frac{a_1}{b_1}$ and $\frac{a_2}{b_2}$ become increasingly similar. This, in turn, implies that the lattice basis vectors $\mathbf{a} = (a_1\ a_2)^T$ and $\mathbf{b} = (b_1\ b_2)^T$ are approaching linear dependence. As shown in Table \ref{tab:noHNF},  the following observations can be made:  

\begin{itemize}
    \item[(i)]   For the three fundamental algorithms without HGCD optimization—\textbf{CrossEuc}, \textbf{EEA-HNF-ParEuc}, and \textbf{GolEuc}—the closer the two input lattice basis vectors are to being linearly dependent, the more computational time is required by all three. Meanwhile, our proposed algorithm, \textbf{CrossEuc}, consistently outperforms the other two. Compared to \textbf{GolEuc}, \textbf{CrossEuc} demonstrates a relatively stable efficiency advantage that is largely unaffected by the linear dependency of the input lattice basis vectors. Specifically, the time cost of \textbf{CrossEuc} is approximately 20\% of that of \textbf{GolEuc}, representing a 5× speedup. In contrast, when compared to \textbf{EEA-HNF-ParEuc}, the efficiency gain of \textbf{CrossEuc} is highly sensitive to the linear dependency of the input vectors. As the degree of dependency decreases (i.e., as $\delta$ decreases), \textbf{CrossEuc} achieves a speedup ranging from $4\times$ to $18\times$, with the advantage becoming increasingly significant.

   \item[(ii)] For the HGCD-optimized algorithms—\textbf{HGCD-HNF-ParEuc}, \textbf{HGCD-HNF-HVecSBP}, and our algorithm \textbf{HVecSBP}, we have (1) When the continued fraction expansions of \(a_1/b_1\) and \(a_2/b_2\) differ only in the last term, \textbf{HGCD-HNF-ParEuc} achieves the highest efficiency, followed by \textbf{HGCD-HNF-HVecSBP}. The time cost of \textbf{HVecSBP} is approximately $2\times$ that of the other two algorithms, making it the least efficient. This is intuitive, as the input lattice basis vectors nearly degenerate into two integers, making integer-based HGCD processing more efficient. The vectorized \textbf{HVec} algorithm, which handles two dimensions, naturally incurs double the time cost. 
(2)   As linear dependence between lattice basis vectors diminishes ($\delta$ decreases)—specifically when continued fraction expansions of \(a_1/b_1\) and \(a_2/b_2\) diverge in approximately the last 27.5\% of terms—our algorithm \(\mathbf{HVecSBP}\) achieves dominant performance. Beyond this threshold, its efficiency advantage scales progressively: accelerating from marginal parity  $1\times$ to 13-fold speedup over \(\mathbf{HGCD\text{-}HNF\text{-}HVecSBP}\), and from $2\times$ to $179\times$ acceleration against \(\mathbf{HGCD\text{-}HNF\text{-}ParEuc}\), with gains intensifying at smaller $\delta$ values.

\item[(iii)] 
 For our proposed fundamental reduction algorithm, $\mathbf{CrossEuc}$, and its optimized variant, $\mathbf{HVecSBP}$, the runtime of $\mathbf{HVecSBP}$ is approximately 4.2\% to 8.5\% of that of $\mathbf{CrossEuc}$ as the linear dependence between the input lattice basis vectors decreases—resulting in an efficiency gain of roughly $11\times$ to $23\times$.

\end{itemize}
\end{itemize}

\begin{table}[ht!]
\captionsetup{justification=centering}
\caption{Running time of various algorithms for the $\ell_\infty$-shortest basis with varying $\delta$ varies (in seconds)}

		\setlength{\tabcolsep}{1pt}
		\centering
		\small
		\begin{tabular}{cccccccc}
			\hline
			\textbf{$\delta$}   & \ \textbf{HVecSBP} &\  \textbf{HGCD-HNF-HVecSBP} & \ \textbf{HGCD-HNF-ParEuc} & \ \textbf{CrossEuc} & \ \textbf{EEA-HNF-ParEuc} & \ \textbf{GolEuc} & \ $\kappa$\\ \hline
			2000000  & 4.634715   & 2.474747   & 2.335431    &  54.612207  & 250.625570  & 269.501145  & 1  \\
			1750000	      & 3.759720   & 2.881473   & 3.509663    &  53.949515  & 231.295727  & 262.973879  & 12.5\% \\
			1500000        & 3.281029   & 3.130119   & 7.128568    &  53.507613  & 229.640975  & 249.979270 &  25\% \\
            1450000         & 3.177414   & 3.165364  & 8.725851   &  51.548659  & 210.162490  & 249.096180  & 27.5\% \\
			1400000         & 3.092249   & 3.232044  & 9.014794   &  50.038115  & 208.529073  & 248.094480  & 30\% \\
			1250000         & 2.765591   & 3.551487   &  14.045340   &  45.819027  & 204.894112  & 231.718764  & 37.5\% \\
			1000000          & 2.412870   & 4.063529   &  21.208691   &  40.766721  & 200.351373  & 206.916571 & 50\%  \\
			750000           & 1.855896   & 4.274508   &  31.767607   &  35.590796  & 198.790794  & 173.002045 & 62.5\%  \\
			500000             & 1.417675   & 4.729311   & 44.924310    &  29.244469  & 197.095717  & 139.971889 & 75\%  \\
			250000            & 0.917597   & 5.205109   &  59.637123   &  19.667034  & 196.227950  & 97.178470 & 87.5\%  \\
			100000            & 0.685130   & 5.302298   &  70.196164   & 14.644298  & 196.028048  & 71.587244 & 95\% \\
			50000            & 0.568257   & 5.336118   &  74.210315   &  13.054622  & 196.026205  & 62.898395 & 97.5\% \\
			10000            & 0.467744   & 5.331523   & 75.565236   & 11.143312  & 195.498931  & 54.231681 & 99.5\% \\
			5000            & 0.459721   & 5.425184   & 76.523376   & 10.757789  & 195.399899  & 53.463294 & 99.75\% \\
			1000            & 0.448851   & 5.505665   &  76.635827   &  10.684234  & 195.360483  & 52.133793 & 99.95\% \\
			500            & 0.447047   & 5.662536   & 76.843067   & 10.612306  & 195.329952  & 51.892477 & 99.975\% \\
			100            & 0.439681   & 5.689787   &  77.879209   &  10.525084  & 194.966942  & 51.416951 & 99.995\% \\
			0           & 0.437364   & 5.701985   &  78.296140   &  10.396514  & 194.718627  & 51.037448 & 100\% \\
			\hline
		\end{tabular}\label{tab:noHNF}
	\end{table}

\section{Conclusion}
This paper introduces a newly defined reduced basis for two-dimensional lattices and develops a foundamental reduction algorithm, \textbf{CrossEuc}, which effectively solves the SVP and SBP in two-dimensional lattices. Furthermore, we extend the integer version of the HGCD algorithm and propose a vector-based \textbf{HVec} algorithm, providing detailed implementation and analysis. This leads to an optimized version, \textbf{HVecSBP}, which further accelerates the performance of \textbf{CrossEuc}. Finally, we conduct extensive experiments to evaluate the practical performance of our newly designed algorithms.  
\bibliographystyle{abbrv}
\bibliography{myreferences}

\appendices
\section{The optimized Lagrange algorithm under the \(\ell_\infty\) metric \textbf{GolEuc}.}\label{App:GolEuc}
\begin{algorithm}\scriptsize
	\caption{\textbf{GloEuc}\cite{CTJ22}}   \label{alg:Gol-Euc}
	\begin{algorithmic}[1]
		\REQUIRE A basis $[\mathbf{a} \ \mathbf{b}]$.
		\ENSURE  An $\ell_\infty$-shortest basis for $\mathcal{L}([\mathbf{a}\ \mathbf{b}])$
		\STATE  \textbf{If} $\|\mathbf{a}\|>\|\mathbf{b}\|,$ \textbf{then} swap ($\mathbf{a}, \mathbf{b}$)
		\STATE  \textbf{If} $\|\mathbf{a}-\mathbf{b}\|>\|\mathbf{a}+\mathbf{b}\|,$ \textbf{then} $\mathbf{b}: = -\mathbf{b}$
		\STATE \textbf{If} $\|\mathbf{b}\|\leq\|\mathbf{a}-\mathbf{b}\|,$ \textbf{then} return $[\mathbf{a} \ \mathbf{b}]$
		\STATE \textbf{If} $\|\mathbf{a}\|\leq\|\mathbf{a}-\mathbf{b}\|,$ \textbf{then} goto \emph{loop}
		\STATE \textbf{If} $\|\mathbf{a}\|=\|\mathbf{b}\|,$ \textbf{then} return $[\mathbf{a} \ \mathbf{a}-\mathbf{b}]$
		\STATE  $[\mathbf{a} \ \mathbf{b}]:=[\mathbf{b}-\mathbf{a} \ \mathbf{a}]$
		\STATE  \emph{loop}
		\STATE  \quad \textbf{If} $\frac{b_1}{a_1}, \frac{b_2}{a_2} \neq 0$ and $sgn\left(\frac{b_1}{a_1}\right)=sgn\left(\frac{b_2}{a_2}\right)$ \textbf{then}
		\STATE  \quad \quad Find $\mu \in \left\{ \left\lceil \frac{|b_1| + |b_2|}{|a_1| + |a_2|} \right\rceil, \left\lfloor \frac{|b_1| + |b_2|}{|a_1| +|a_2|} \right\rfloor \right\},$ s.t. $\|\mathbf{b} - \mu\mathbf{a}\|$ is minimal
		\STATE  \quad \textbf{Else}
		\STATE  \quad \quad Find $\mu \in \left\{ \left\lceil \frac{\big||b_2| - |b_1|\big|}{|a_2| + |a_1|} \right\rceil, \left\lfloor \frac{\big||b_2| - |b_1|\big|}{|a_2| + |a_1|} \right\rfloor\right\},$ s.t. $\|\mathbf{b} - \mu\mathbf{a}\|$ is minimal
		\STATE \quad $ \mathbf{b}: = \mathbf{b} - \mu\mathbf{a}$
		\STATE \quad  \textbf{If} $\|\mathbf{a} - \mathbf{b}\| > \|\mathbf{a} + \mathbf{b}\|$ \textbf{then} $\mathbf{b}: = -\mathbf{b}$
		\STATE \quad Swap($\mathbf{a},\mathbf{b}$)
		\STATE \quad \textbf{If} $[\mathbf{a} \ \mathbf{b}]$ is Lagrange-reduced, \textbf{then} return $[\mathbf{a} \ \mathbf{b}]$
		\STATE \quad goto \emph{loop}
	\end{algorithmic}
\end{algorithm}

\section{Lagrange's reduction algorithm under the \(\ell_2\) metric \textbf{LagRed}} \label{App:LagRed}

\begin{algorithm}\scriptsize
	\caption{\textbf{LagRed} \cite{MGbook}} \label{alg:Lagl2norm}
	\def\temptablewidth{0.5\textwidth}
	\begin{algorithmic}[1]
		\REQUIRE A basis $[\mathbf{a} \ \mathbf{b}]$.
		\ENSURE  An $\ell_2$-shortest basis for $\mathcal{L}([\mathbf{a}\ \mathbf{b}])$
		\STATE  \textbf{If} $\|\mathbf{a}\|_{2}>\|\mathbf{b}\|_{2},$ \textbf{then} swap ($\mathbf{a}, \mathbf{b}$)
		\STATE  \textbf{If} $\|\mathbf{a}-\mathbf{b}\|_{2}>\|\mathbf{a}+\mathbf{b}\|_{2},$ \textbf{then} $\mathbf{b}: = -\mathbf{b}$
		\STATE \textbf{If} $\|\mathbf{b}\|_{2}\leq\|\mathbf{a}-\mathbf{b}\|_{2},$ \textbf{then} return $[\mathbf{a} \ \mathbf{b}]$
		\STATE \textbf{If} $\|\mathbf{a}\|_{2}\leq\|\mathbf{a}-\mathbf{b}\|_{2},$ \textbf{then} goto \emph{loop}
		\STATE \textbf{If} $\|\mathbf{a}\|_{2}=\|\mathbf{b}\|_{2},$ \textbf{then} return $[\mathbf{a} \ \mathbf{a}-\mathbf{b}]$
		\STATE  $[\mathbf{a} \ \mathbf{b}]:=[\mathbf{b}-\mathbf{a} \ \mathbf{a}]$
		\STATE  \emph{loop}
		\STATE  \quad $\mu := \left \lceil \frac{<\mathbf{a},\mathbf{b}>}{||\mathbf{a}||_{2}^{2}} \right \rfloor $
		\STATE \quad $ \mathbf{b}: = \mathbf{b} - \mu\mathbf{a}$
		\STATE \quad  \textbf{If} $\|\mathbf{a} - \mathbf{b}\|_{2} > \|\mathbf{a} + \mathbf{b}\|_{2}$ \textbf{then} $\mathbf{b}: = -\mathbf{b}$
		\STATE \quad Swap($\mathbf{a},\mathbf{b}$)
		\STATE \quad \textbf{If} $[\mathbf{a} \ \mathbf{b}]$ is Lagrange-reduced, \textbf{then} return $[\mathbf{a} \ \mathbf{b}]$
		\STATE \quad goto \emph{loop}
	\end{algorithmic}
\end{algorithm}

\section{Yap's coherent remainder sequence Reduction Algorithm \textbf{CRS}}\label{App:CRS}
\begin{algorithm}\scriptsize
	\caption{\textbf{CRS} \cite{yap1992fast}} \label{alg:CRS-red}
	\def\temptablewidth{0.5\textwidth}
	\begin{algorithmic}[1]
		\REQUIRE  An admissible basis $[\mathbf{a} \ \mathbf{b}]$ refers to a pair of vectors $\mathbf{a}$ and $\mathbf{b}$ that form an acute angle and satisfy $\|\mathbf{a}\|_2 \geq \|\mathbf{b}\|_2$.
		\ENSURE  An $\ell_2$-shortest basis for $\mathcal{L}([\mathbf{a}\ \mathbf{b}])$
		\STATE \textbf{While}(true)
		\STATE  \quad $\mu := \left\lfloor\frac{<\mathbf{a},\mathbf{b}>}{||\mathbf{b}||_{2}^{2}}\right\rfloor$
		\STATE \quad \textbf{If} $||\mathbf{a}-\mu\mathbf{b}||_{2}= ||\mathbf{a}||_{2}$, \textbf{break}
		\STATE \quad $[\mathbf{a} \ \mathbf{b}] := [\mathbf{b} \ \mathbf{a} - \mu\mathbf{b}]$
		\STATE \textbf{If} $||\mathbf{a}-\mathbf{b}||_2<||\mathbf{b}||_2$, \textbf{then} $[\mathbf{a} \ \mathbf{b}]:=[\mathbf{b} \ \mathbf{a}-\mathbf{b}]$
		\STATE $[\mathbf{a} \ \mathbf{b}]:=[\mathbf{b} \ \mathbf{a}-\left \lceil \frac{<\mathbf{a},\mathbf{b}>}{||\mathbf{b}||_{2}^{2}} \right \rfloor \mathbf{b} ]$
		\STATE \textbf{Return} $[\mathbf{a} \ \mathbf{b}]$ 
	\end{algorithmic}
\end{algorithm}

\newpage
\section{Yap's Half-Gaussian Algorithm \textbf{Half-Gaussian}}\label{App:HalfGaussian}
\begin{algorithm}\scriptsize
	\caption{\textbf{Half-Gaussian} \cite{yap1992fast}} \label{alg:Half-Gaussian}
	\def\temptablewidth{0.5\textwidth}
	\begin{algorithmic}[1]
		\REQUIRE An admissible basis $[\mathbf{a} \ \mathbf{b}]$ refers to a pair of vectors $\mathbf{a}$ and $\mathbf{b}$ that form an acute angle and satisfy $\|\mathbf{a}\|_2 \geq \|\mathbf{b}\|_2$.
		\ENSURE Regular matrix $\mathbf{M^{*}}$ that reduce $[\mathbf{a}\ \mathbf{b}]$ to $[\mathbf{a}^{*}\ \mathbf{b}^{*}]$ such that $[\mathbf{a}^{*}\ \mathbf{b}^{*}]$ either satisfy $(\log\|\mathbf{a^{*}}\|_2\geq (\log\|\mathbf{a}\|_2/2)+c > \log\|\mathbf{b^{*}}\|_2$ (also called straddles) or is terminal with $\log \|\mathbf{b^{*}}\|_2\geq \log\|\mathbf{a}\|_2+c$, $c$ is a very large positive integer.
		\STATE \textbf{If} $\cos (\mathbf{a},\mathbf{b})\geq 1/2$, \textbf{return} $\mathbf{M^{*}} := \mathbf{E}$ 
		\STATE $n := \left \lceil \log ||\mathbf{a}||_{2} \right \rceil , m := \left \lceil n/2 \right \rceil $
		\STATE \textbf{If} $\log ||\mathbf{b}||_{2}<n-(m/2)+c$ or $m\leq 8$, \textbf{then} $\mathbf{M}_1^{*}:= \mathbf{E}$ and \textbf{goto} step $20$
		\STATE $(\mathbf{a_0},\mathbf{b_0}) := (\left\lceil\mathbf{a}/2^{m}\right\rceil, \left\lfloor\mathbf{b}/2^{m}\right\rfloor)$
		\STATE \textbf{If} $||\mathbf{a_0}||_{2}=||\mathbf{b_0}||_{2}$, \textbf{then} $\mathbf{M}_1^{*}:=\begin{pmatrix}0&1\\ 1&0\end{pmatrix}$ and \textbf{goto} step $20$
		\STATE $\mathbf{M_1} := \textbf{Half-Gaussian}(\mathbf{a_0},\mathbf{b_0})$
		\STATE $(\mathbf{a^{\prime}},\mathbf{b^{\prime}}) := (\mathbf{a},\mathbf{b})\mathbf{M_{1}^{-1}}$
		\STATE $(\mathbf{a_0^{\prime}},\mathbf{b_0^{\prime}}) := (\mathbf{a_0},\mathbf{b_0})\mathbf{M_{1}^{-1}}$
		\STATE \textbf{If} $\log||\mathbf{a}_0^{\prime}||_{2}\geq \log ||\mathbf{a}_0||_{2}-(m/2)+c>\log ||\mathbf{b}_0^{\prime}||_{2}$
		\STATE \quad \textbf{If} $\log ||\mathbf{a}^{\prime}||_{2}\leq \log||\mathbf{b}^{\prime}||_{2}$, \textbf{then} $\mathbf{M}_1^{*}:=\begin{pmatrix}0&1\\ 1&-q_k\end{pmatrix}\mathbf{M_1}$
		\STATE \quad \textbf{If} $\cos (\mathbf{a^{\prime}},\mathbf{b^{\prime}})<0$
		\STATE \quad \quad $\mathbf{a^{\prime\prime}} := \mathbf{a^{\prime}}+\mathbf{b^{\prime}}$
		\STATE \quad \quad \textbf{If} $q_k>1$
		\STATE \quad \quad \quad \textbf{If} $\log||\mathbf{a^{\prime}}||_{2}>\log||\mathbf{a^{\prime\prime}}||_{2}$, \textbf{let} $q_k := q_k-1$ and update $\mathbf{M_1}$ to $\mathbf{M}_1^{*}$
		\STATE \quad \quad \quad \textbf{Else} \textbf{let} $q_k := q_k-1$ and update $\mathbf{M_1}$ to $\mathbf{M}_1^{*}$, \textbf{then} make one toggle: $\mathbf{M}_1^{*}:=\begin{pmatrix}1&1\\ 1&0\end{pmatrix}\mathbf{M}_1^{*}$
		\STATE \quad \quad \textbf{Else}
		\STATE \quad \quad \quad \textbf{If} $\log||\mathbf{a^{\prime\prime}}||_{2}>\log ||\mathbf{a^{\prime}}||_{2}$, backup one step and update $\mathbf{M_1}$ to $\mathbf{M}_1^{*}$
		\STATE \quad \quad \quad \textbf{Else} backup two steps and update $\mathbf{M_1}$ to $\mathbf{M}_1^{*}$
		\STATE \textbf{Else} $\mathbf{M}_1^{*}:= \mathbf{M_1}$
		\STATE $(\mathbf{a_1^{*}},\mathbf{b}_1^{*}) := (\mathbf{a},\mathbf{b})\mathbf{M_{1}^{*}}^{-1}$
		\STATE \textbf{If} $(\mathbf{a_1^{*}},\mathbf{b}_1^{*})$ is terminal$(\left\lfloor \frac{<\mathbf{a_1^{*}},\mathbf{b}_1^{*}>}{||\mathbf{b}_1^{*}||_{2}^{2}} \right\rfloor = 0)$ or straddles $(n/2)+c$, \textbf{return} $\mathbf{M}_1^{*}$
		\STATE \textbf{Else} do one step CRS-red and compute the elementary matrix $\mathbf{M}$ to transform $(\mathbf{a}_1^{*},\mathbf{b}_1^{*})$ to $ (\mathbf{b}_1^{*},\mathbf{c}_1^{*})$, \textbf{let} $(\mathbf{x},\mathbf{y}) := (\mathbf{b}_1^{*},\mathbf{c}_1^{*})$
		\STATE \textbf{If} $(\mathbf{x},\mathbf{y})$ straddles $(n/2)+c$, \textbf{return} $\mathbf{M}\mathbf{M}_1^{*}$
		\STATE $n^{\prime} := \left \lceil \log ||\mathbf{x}||_{2} \right \rceil , m^{\prime} := 2(n^{\prime}-m) $
		\STATE \textbf{If} $\log||\mathbf{y}||_{2}<n^{\prime}-(m^{\prime}/2)+c$ or $m^{\prime}\leq 8$, \textbf{then} $\mathbf{M_2^{*}}:= \mathbf{E}$ and \textbf{goto} step $42$
		\STATE $(\mathbf{x_0},\mathbf{y_0}) := (\left\lceil\mathbf{x}/2^{m^{\prime}}\right\rceil, \left\lfloor\mathbf{y}/2^{m^{\prime}}\right\rfloor)$
		\STATE \textbf{If} $||\mathbf{x_0}||_{2}=||\mathbf{y_0}||_{2}$, \textbf{then} $\mathbf{M_2^{*}}:=\begin{pmatrix}0&1\\ 1&0\end{pmatrix}$ and \textbf{goto} step $42$
		\STATE $\mathbf{M_2} := \textbf{Half-Gaussian}(\mathbf{x_0},\mathbf{y_0})$
		\STATE $(\mathbf{x^{\prime}},\mathbf{y^{\prime}}) := (\mathbf{x},\mathbf{y})\mathbf{M_{2}^{-1}}$
		\STATE $(\mathbf{x_0^{\prime}},\mathbf{y_0^{\prime}}) := (\mathbf{x_0},\mathbf{y_0})\mathbf{M_{2}^{-1}}$
		\STATE \textbf{If} $\log||\mathbf{x}_0^{\prime}||_{2}\geq \log||\mathbf{x_0}||_{2}-(m^{\prime}/2)+c>\log ||\mathbf{y}_0^{\prime}||_{2}$
		\STATE \quad \textbf{If} $\log||\mathbf{x}^{\prime}||_{2}\leq \log||\mathbf{y}^{\prime}||_{2}$, \textbf{then} $\mathbf{M_2^{*}}:=\begin{pmatrix}0&1\\ 1&-q_k\end{pmatrix}\mathbf{M}_2$
		\STATE \quad \textbf{If} $\cos (\mathbf{x^{\prime}},\mathbf{y^{\prime}})<0$
		\STATE \quad \quad $\mathbf{x^{\prime\prime}} := \mathbf{x^{\prime}}+\mathbf{y^{\prime}}$
		\STATE \quad \quad \textbf{If} $q_k>1$
		\STATE \quad \quad \quad \textbf{If} $\log||\mathbf{x^{\prime}}||_{2}>\log ||\mathbf{x^{\prime\prime}}||_{2}$, \textbf{let} $q_k := q_k-1$ and update $\mathbf{M_2}$ to $\mathbf{M_2^{*}}$
		\STATE \quad \quad \quad \textbf{Else} \textbf{let} $q_k := q_k-1$ and update $\mathbf{M_2}$ to $\mathbf{M_2^{*}}$, \textbf{then} make one toggle: $\mathbf{M_2^{*}}:=\begin{pmatrix}1&1\\ 1&0\end{pmatrix}\mathbf{M_2^{*}}$
		\STATE \quad \quad \textbf{Else}
		\STATE \quad \quad \quad \textbf{If} $\log||\mathbf{x^{\prime\prime}}||_{2}>\log ||\mathbf{x^{\prime}}||_{2}$, backup one step and update $\mathbf{M_2}$ to $\mathbf{M_2^{*}}$
		\STATE \quad \quad \quad \textbf{Else} backup two steps and update $\mathbf{M_2}$ to $\mathbf{M_2^{*}}$
		\STATE \textbf{Else} $\mathbf{M_2^{*}}:= \mathbf{M_2}$
		\STATE \textbf{return} $\mathbf{M^{*}}:=\mathbf{M_2^{*}}\mathbf{M}\mathbf{M}_1^{*}$
	\end{algorithmic}
\end{algorithm}
\begin{algorithm}
	\caption{\textbf{HalfGaussianSBP}($\mathbf{a}, \mathbf{b}$)}   \label{alg:Half-Gaussian-SBP}
	\begin{algorithmic}[1]
		\REQUIRE  A base $[\mathbf{a}\ \mathbf{b}]$ with $\mathbf{a}=(a_1\ a_2)^T$, $\mathbf{b}=(b_1\ b_2)^T$
		\ENSURE  An $\ell_2$-shortest basis $[\mathbf{a}\ \mathbf{b}]$
		\STATE $\mathbf{M}:=\mathbf{I}$
		\STATE \textbf{While} (true)
		\STATE \quad \textbf{If} $||\mathbf{a}||_{2}<||\mathbf{b}||_{2}$, Swap($\mathbf{a}, \mathbf{b}$)
		\STATE \quad \textbf{If} $\cos (\mathbf{a},\mathbf{b})<0$, $\mathbf{b}:=-\mathbf{b}$
		\STATE \quad $\mathbf{M} := \textbf{Half-Gaussian}(\mathbf{a},\mathbf{b})$
		\STATE \quad $(\mathbf{a},\mathbf{b}) := (\mathbf{a},\mathbf{b})\mathbf{M^{-1}}$
		\STATE \quad $\mu:=\left\lfloor\frac{<\mathbf{a},\mathbf{b}>}{||\mathbf{b}||_{2}^{2}}\right\rfloor$		
		\STATE \quad \textbf{If} $||\mathbf{a}-\mu\mathbf{b}||_{2}= ||\mathbf{a}||_{2}$, \textbf{break}
		\STATE \quad $[\mathbf{a} \ \mathbf{b}] := [\mathbf{b} \ \mathbf{a} - \mu\mathbf{b}]$
		\STATE \textbf{If} $||\mathbf{a}-\mathbf{b}||_2<||\mathbf{b}||_2$, \textbf{then} $[\mathbf{a} \ \mathbf{b}]:=[\mathbf{b} \ \mathbf{a}-\mathbf{b}]$
		\STATE $[\mathbf{a} \ \mathbf{b}]:=[\mathbf{b} \ \mathbf{a}-\left \lceil \frac{<\mathbf{a},\mathbf{b}>}{||\mathbf{b}||_{2}^{2}} \right \rfloor \mathbf{b} ]$
		\STATE \textbf{Return} $[\mathbf{a}\ \mathbf{b}]$
	\end{algorithmic}
\end{algorithm}

\newpage

\begin{IEEEbiographynophoto}
{Lihao Zhao} received the B.E. degree in information security from Qingdao University in 2023. He is currently pursuing the M.S. degree in the School of Computer Science and Technology of Qingdao University. His research interests include lattice algorithms and privacy computing.
\end{IEEEbiographynophoto}

\vspace{-2.5cm}
\begin{IEEEbiographynophoto}{Chengliang Tian}
  received the B.S. and M.S. degrees in mathematics from Northwest University, Xi'an, China, in 2006 and 2009, respectively, and the Ph.D. degree in information security from Shandong University, Ji'nan, China, in 2013. He held a post-doctoral position with the State Key Laboratory of Information Security, Institute of Information Engineering, Chinese Academy of Sciences, Beijing. He is currently with the College of Computer Science and Technology, Qingdao University, as an Associate  Professor. His research interests include lattice-based cryptography and cloud computing security.
\end{IEEEbiographynophoto}

\vspace{-2.5cm}
\begin{IEEEbiographynophoto}{Jingguo Bi}  received the B.Sc. and Ph.D. degrees in information security from Shandong University, Jinan, China, in 2007 and 2012, respectively.He is currently an Associate Researcher with the School of Cyberspace Security, Beijing University of Posts and Telecommunications, Beijing, China. His research interests are public key cryptography, cloud computing, and post-quantum cryptography.
\end{IEEEbiographynophoto}
\vspace{-1.5cm}
\begin{IEEEbiographynophoto}{Guangwu Xu}(Senior Member)
received his Ph.D. in mathematics from SUNY Buffalo.
He is now with the School of Cyber Science and Technology, Shandong University, China.
His research interests include  cryptography, arithmetic number theory, compressed sensing,
 algorithms, and functional analysis.
\end{IEEEbiographynophoto}

\vspace{-1.5cm}
\begin{IEEEbiographynophoto}{Jia Yu}
(Member, IEEE) received the B.S. and M.S. degrees from the School of Computer Science and Technology, Shandong University, Jinan, China, in 2003 and 2000, respectively, and the Ph.D. degree from the Institute of Network Security, Shandong University in 2006. He is a Professor with the College of Computer Science and Technology, Qingdao University, Qingdao, China. He was a Visiting Professor with the Department of Computer Science and Engineering, State University of New York at Buffalo, Buffalo, NY, USA, from November 2013 to November 2014. His research interests include cloud computing security, key evolving cryptography, digital signature, and network security
\end{IEEEbiographynophoto}

\end{document}